%% file: main.tex
\input{preamble}

\begin{document}

    \title{Harmonic sequence state-preparation}

    \author{Benjamin Rempfer}
    \email{Benjamin.Rempfer@ll.mit.edu}
    \affiliation{MIT Lincoln Laboratory, Lexington, MA 02139, USA}
    \author{Parker Kuklinski}
    \email{Parker.Kuklinski@ll.mit.edu}
    \affiliation{MIT Lincoln Laboratory, Lexington, MA 02139, USA}
    \author{Justin Elenewski}
    \affiliation{MIT Lincoln Laboratory, Lexington, MA 02139, USA}
    \author{Kevin Obenland}
    \affiliation{MIT Lincoln Laboratory, Lexington, MA 02139, USA}
    
    \date{\today}
	
    \input{abstract}
		
	\maketitle
    
    \input{Sections/intro}
    \input{Sections/definitions}
    \input{Sections/stateprep}
    \input{Sections/blockencoding}
    \input{Sections/ode}
    \input{Sections/conclusion}
    \section*{Acknowledgments}
    
    All authors acknowledge support from the Defense Advanced Research Projects Agency under Air Force Contract No. FA8702-15-D-0001. Any opinions, findings and conclusions or recommendations expressed in this material are those of the authors and do not necessarily reflect the views of the Defense Advanced Research Projects Agency.

    \bibliography{main}
    \vspace{1cm}
    
    \widetext
    \begin{center}
    \textbf{\large Supplementary Information for \textit{Harmonic sequence state-preparation}}
    \end{center}
    \setcounter{section}{0}
    \appendix
    
    \input{Appendices/rotations}
    \input{Appendices/qft}
    \input{Appendices/linear}
    \input{Appendices/circulant}

\end{document}

%% file: preamble.tex
\documentclass{revtex4-2}

\usepackage[utf8]{inputenc}
\usepackage{amsmath}
\usepackage{amsfonts}
\usepackage{hyperref}
\usepackage{graphicx}
\usepackage{pgfplots}
\usepackage{tikz}
\pgfplotsset{compat=1.18}
\usetikzlibrary{quantikz2}

\newtheorem{proposition}{Proposition}

\newtheorem{lemma}{Lemma}

\pgfplotsset{ every non boxed x axis/.append style={x axis line style=-},
     every non boxed y axis/.append style={y axis line style=-}}

\definecolor{amethyst}{rgb}{0.6,0.4,0.8}
\usetikzlibrary {patterns}

%% file: abstract.tex
\begin{abstract}
We demonstrate an efficient circuit to prepare a quantum state with amplitudes proportional to a harmonic sequence. We do this by first preparing a large quantum state with linearly related amplitudes and then applying a quantum Fourier transform; this has a direct analogy to the fact that the Fourier coefficients of a sawtooth wave follow a harmonic sequence. We then consider an extension of this procedure by block-encoding a matrix with a harmonic sequence along its diagonal. The cost of both circuits is dominated by the costs associated with the quantum Fourier transform.
\end{abstract}


%% file: Sections/intro.tex
\section{Introduction}
\label{Introduction}

State-preparation is an essential subroutine in an extensive number of quantum algorithms \cite{harrow09, clader13, babbush18, sierra_sosa20, chakrabarti21, fomichev24, regev25}. In particular, there is much interest in state-preparation of quantum states whose amplitudes are proportional to the values of a function, for example the Gaussian distribution \cite{chen24, ward09}. Some generic procedures exist for this task, such as the Grover-Rudolph algorithm \cite{grover02}, rejection sampling \cite{lemieux24}, quantum signal processing \cite{low17}, and even variational quantum circuits \cite{arrazola19}. While these methods usually come equipped with complexity theoretic guarantees of performance, concrete resource estimates are harder to come by; the resource estimates that are available paint a bleak picture of large overheads in gate cost and ancilla qubits \cite{deliyannis21}. It should perhaps not be too surprising that a method like Grover-Rudolph, which hinges on the ability to perform the classical encoding of the inverse sine of the square root of a quotient of integrals in superposition, will incur exorbitant costs compared to a more targeted technique which exploits particular mathematical relationships of the quantum state. To this end, Kuklinski et. al. \cite{kuklinski25} have illustrated a simple method for state-preparing a Gaussian distribution which avoids classical logic in superposition and outperforms other documented methods by a factor of up to two orders of magnitude.

The need to explore alternative block-encoding techniques is evident when considering the target state
\begin{equation}
\label{harmonicstate}
    |h\rangle\propto\sum _{x=1}^{N-1}\frac{1}{x}|x\rangle
\end{equation}
which we refer to as the \emph{harmonic sequence state}. Harmonic sequence states and block-encodings frequently arise in quantum differential equations algorithms \cite{berry17, krovi23, penuel24, jennings24}. Despite the explicit necessity of a harmonic sequence encoding, previous literature has omitted its analysis; Penuel et. al. \cite{penuel24} note this as a shortcoming of their own paper (page 34) while Jennings et. al. \cite{jennings24} assume the existence of an errorless harmonic sequence block-encoding with perfect subnormalization and a single ancilla (equation (180)). We present a comprehensive study of the circuit cost and associated resources. We document that, although nontrivial, efficient circuitry is attainable. As such, the harmonic sequence subroutine is not a dominant cost in the nonlinear differential equations pipeline.

Though the harmonic sequence state is simple to write and has a large quantity of amplitude concentrated about its singularity, polynomial approximation methods such as the quantum singular value transform (QSVT) will fail for this exact reason. Grover-Rudolph will similarly struggle since $1/x$ is not integrable. O'Brien and Sünderhauf \cite{obrien25} explored using QSVT on exponentially smaller domains approaching the asymptote such that the function is smooth within each window, however, concrete resource costs are unclear. Lemieux et. al. \cite{lemieux24} document resources in the thousands of Toffoli gates for preparing the harmonic sequence state using rejection sampling; however, this comes at the cost of utilizing large ancilla registers to execute their comparator operator. Any other strategy utilizing circuits analogous to classical computation of $1/x$ via Newton's method \cite{munoz18} will encounter large overheads as the error tolerance decreases, as documented by Rubin et. al. for square root encodings \cite{rubin24}.

We thus aim to take a different and more fundamental approach, harnessing natural quantum operations to transform the state vector rather than relying on cumbersome binary arithmetic gate constructions. The main fact enabling our method is that \emph{the Fourier coefficients of a sawtooth wave follow a harmonic sequence}. The remainder of this paper realizes this observation in the language of quantum circuits. The `sawtooth wave' can be efficiently prepared as a quantum state with linearly scaling amplitudes (we call this the \emph{linear state} $|L\rangle$). To extract its Fourier coefficients, we simply apply a quantum Fourier transform to the linear state. Because these are discrete approximations to continuous waves, rather than producing a harmonic sequence state, the resulting state contains amplitudes following a cotangent function. We observe that the asymptotes of this cotangent state well approximate the harmonic sequence, so we simply prepare larger cotangent states to extract exponentially better approximations to the harmonic sequence state.

In the quantum algorithms for differential equations literature \cite{berry17, krovi23}, the harmonic sequence appears along the subdiagonal of a block-encoded matrix rather than as a quantum state. Inverting this constructed block-encoding (equation (8) from \cite{berry17}) grants access to the coefficients $1/k!$ used to produce a Taylor approximation of the exponential of a matrix. Because of this, we must modify the method described above to block-encode the harmonic sequence along the diagonal of a matrix (call this the \emph{diagonal harmonic matrix}). Fortunately, because we used a Fourier transform in our construction, we can exploit another convenient property of harmonic analysis, namely the convolution theorem \cite{katznelson76}. This states that convolution in the time domain is equivalent to multiplication in the frequency domain, and vice-versa. In our case, \emph{convolving a signal with the sawtooth wave is equivalent to multiplying its spectra by the harmonic sequence}. In the language of matrices, a sawtooth circulant convolution matrix is diagonalized by the Fourier matrix and its eigenvalues follow the cotangent function. Therefore, if we block-encode a large linear circulant matrix and conjugate with the QFT, we will gain access to a block-encoded diagonal harmonic matrix. This diagonal sequence can then be easily moved to the subdiagonal with a linear T-depth cyclic shift operator \cite{camps22} as is required for differential equations applications. No other proposed method details how to efficiently translate the harmonic sequence state to a block-encoded matrix.

\input{Figures/tablecomparison}

We structure our discussion as follows: Section \hyperref[definitions]{II} outlines the necessary concepts for state-preparation and block-encoding. Section \hyperref[stateprep]{III} presents resource estimates for state-preparing a harmonic sequence state while section \hyperref[blockencoding]{IV} estimates resources for block-encoding a harmonic sequence on the diagonal of a matrix. Section \hyperref[ode_section]{V} discusses the implications of block-encoding a harmonic sequence of the diagonal of a matrix to existing nonlinear differential equation quantum algorithms. Much of the explicit circuitry required for this resource analysis can be found in the appendices.

%% file: Figures/tablecomparison.tex
\begin{table}[!t]
\centering
\begin{tabular}{|| c c c c c c ||} 
 \hline
Reference & Algorithm & \# T & \# CCX & T Depth & \# Ancilla \\
 \hline\hline
Lemieux et. al \cite{lemieux24} & Rejection Sampling & $\cdot$ & $\approx11\ 000$ & not reported & not reported \\ 
\hline
This paper & QFT-based & $\approx5\ 400$ & $\cdot$ & $\approx1\ 700$ & $47$ \\
\hline
\end{tabular}
\caption{Resource counts for $n=22$ qubit harmonic sequence state with $\epsilon=10^{-9}$. As described in Appendix \ref{QFT_appendix} the QFTs used for T-depth and T-count are different. Phase kickback is used for counts while not for the depth.}
\end{table}

%% file: Sections/definitions.tex
\section{Definitions}
\label{definitions}

To orient the reader, we reintroduce definitions of state-preparation and block-encoding from \cite{kuklinski25_2}. In state-preparation we construct a circuit $U$ to produce an approximation to a desired $n$-qubit quantum state $|\psi\rangle$, call this approximation $|\tilde{\psi}\rangle$. The circuit $U$ depicted in Figure \ref{introfig} may require $a_c$ \emph{clean ancilla} which begin and end in the $|0\rangle$ state as well as $a_p$ \emph{persistent ancilla} which must be measured in the $|0\rangle$ state to successfully recover $|\tilde{\psi}\rangle$. In other words, we have
\begin{equation}
\label{state-preparation}
    U|0\rangle ^{\otimes a_c}|0\rangle ^{\otimes a_p}|0\rangle ^{\otimes n}=|0\rangle ^{\otimes a_c}\left(\alpha |0\rangle ^{\otimes a_p}|\tilde{\psi}\rangle +\sqrt{1-|\alpha|^2}|g\rangle\right)    
\end{equation}
where $\alpha$ is the \emph{subnormalization factor} and $|g\rangle$ is an $(a_p+n)$ -- qubit garbage state. If we measure the persistent ancilla, the probability of successful measurement (a $|0\rangle ^{\otimes a_p}$ measurement) is $|\alpha |^2$; we would expect to repeat the entire operation $1/|\alpha |^2$ times to recover $|\tilde{\psi}\rangle$. We define the error of the approximation by computing the $L^2$ norm $\epsilon =\lVert |\psi\rangle -|\tilde{\psi}\rangle\rVert$. The quantity we are ultimately interested in optimizing is the \emph{expected T-depth} \cite{niemann19} to produce a state with accuracy $\epsilon$; the notion of expectation must be included since state-preparation circuits with non-unit subnormalization have a chance of failing and may need to be repeated. If $U$ has a T-depth of $T$ and a subnormalization factor of $\alpha$, then using repeat-until-success (RUS) the expected T-depth is $T/|\alpha |^2$. We could quadratically improve this expectation via amplitude amplification, but in our cases $\alpha$ is large enough that this will not apply.

\input{Figures/introfig}

In block-encoding \cite{clader22, sunderhauf23}, we construct a circuit $V$ such that the top left block of its unitary matrix representation is a scalar multiple of an approximation of some target matrix $A$, i.e.
\begin{equation}
V=\begin{pmatrix} \alpha\tilde{A} & \cdot \\ \cdot & \cdot\end{pmatrix}.
\end{equation}
When we apply an $n+a_c+a_p$-qubit block-encoding circuit $V$ of an $n$-qubit matrix $A$ to a state $|0\rangle ^{\otimes (a_c+a_p)}|\psi\rangle$, the resulting state is
\begin{equation}
    V|0\rangle ^{\otimes (a_c+a_p)}|\psi\rangle =|0\rangle ^{\otimes a_c}\left(\alpha|0\rangle ^{\otimes a_p}\left(\tilde{A}|\psi\rangle\right) +\sqrt{1-\left( |\alpha |\lVert\tilde{A}|\psi\rangle\rVert\right) ^2}|g\rangle\right).
\end{equation}
Here, $\tilde{A}$ is an approximation of $A$, $\alpha$ is the subnormalization factor, and $|g\rangle$ is a $a_p+n$-qubit garbage state. We normalize the target matrix $A$ such that $\lVert A\rVert _2=1$ and therefore the approximation error $\epsilon=\lVert A-\tilde{A}\rVert$ is subnormalization independent. The appeal of block-encoding is that it allows us to apply non-unitary linear operators to quantum states. Unlike in a state-preparation circuit where subnormalization directly correlates to the expected number of executions to successfully produce the target state, the number of iterations required for a block-encoding circuit depends on the state to which it is applied; because of this we present the T-depth (rather than \emph{expected} T-depth) of the block-encoding circuit $V$ and its subnormalization factor separately.

In the following state-preparation and block-encoding circuits, T-gates arise only from three sources. First, Toffoli compute/uncompute pairs can be executed using a total of four T-gates \cite{jones13}. Second, the controlled-Hadamard gate can be constructed with just two T-gates (see Figure \ref{cHfig}). Finally, a result from Bocharov et. al. \cite{bocharov15} demonstrates that any arbitrary rotation gate can be approximated to $L^2$ error $\epsilon$ with about $1.15\log _2(\frac{1}{\epsilon})+9.2$ T-gates using repeat until success.

\input{Figures/cH}

%% file: Figures/introfig.tex
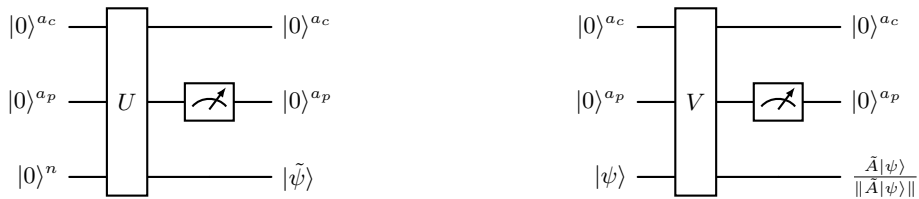
\begin{figure}
\[
\begin{quantikz}
\lstick{$|0\rangle^{a_c}$} & \gate[3]{U} & & \rstick{$|0\rangle^{a_c}$} \\
\lstick{$|0\rangle^{a_p}$} & & \meter{} & \rstick{$|0\rangle^{a_p}$} \\
\lstick{$|0\rangle^n$} & & & \rstick{$|\tilde{\psi}\rangle$}
\end{quantikz}
\hspace{3cm}
\begin{quantikz}
\lstick{$|0\rangle^{a_c}$} & \gate[3]{V} & & \rstick{$|0\rangle^{a_c}$} \\
\lstick{$|0\rangle^{a_p}$} & & \meter{} & \rstick{$|0\rangle^{a_p}$} \\
\lstick{$|\psi\rangle$} & & & \rstick{$\frac{\tilde{A}|\psi\rangle}{\lVert \tilde{A}|\psi\rangle\rVert}$}
\end{quantikz}
\]
\caption{(\emph{Left}) A state-preparation circuit $U$ for $|\psi\rangle$ using $a_c$ clean ancilla and $a_p$ persistent ancilla, resulting in an approximation state $|\tilde{\psi}\rangle$. (\emph{Right}) A block-encoding circuit $V$ for $A$ using $a_c$ clean ancilla and $a_p$ persistent ancilla applied to a data register state $|\psi\rangle$, resulting in the state $\tilde{A}|\psi\rangle /\lVert\tilde{A}|\psi\rangle\rVert$.}
\label{introfig}
\end{figure}

%% file: Figures/cH.tex
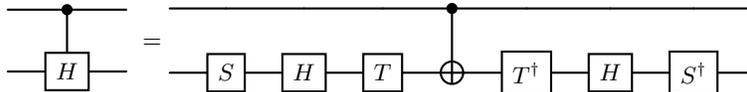
\begin{figure}

\begin{quantikz}[align equals at=1.5]
& \ctrl{1} & \\
& \gate{H} &
\end{quantikz}
=\begin{quantikz}[align equals at=1.5]
& & & & \ctrl{1} & & & & \\
& \gate{S} & \gate{H} & \gate{T} & \targ{} & \gate{T^\dag} & \gate{H} & \gate{S^\dag} & 
\end{quantikz}
\caption{Decomposition of a controlled-Hadamard gate into Clifford + T using two T-gates.}
\label{cHfig}
\end{figure}

%% file: Sections/stateprep.tex
\section{Harmonic sequence state-preparation}
\label{stateprep}

As mentioned in the introduction, the thrust behind our state-preparation approach for the harmonic sequence state is the fact that the Fourier coefficients of the sawtooth wave follow a harmonic sequence. This is the fundamental mechanism that allows our algorithm to work. Explicitly, for integer $\xi\in\mathbb{Z}^+$
\begin{equation}
\label{fourierharmonic}
    \frac{1}{2\pi i}\int _{-\pi}^\pi xe^{ix\xi}dx=\frac{(-1)^\xi}{\xi}.
\end{equation}
Translating this mathematical concept into the language of quantum circuits, we need to state-prepare a sawtooth wave (i.e. a quantum state $|L\rangle\propto\sum _{x=0}^{N-1}\left(\frac{N-1}{2}-x\right) |x\rangle$) and apply a quantum Fourier transform to recover a harmonic sequence state. To this end, we present the following two results.

\begin{proposition}
\label{linearprop}
The circuit in Figure \ref{linearfig} prepares the linear state $|L\rangle$ exactly with approximate expected T-depth $2n+4\lceil\log _2n\rceil$.
\end{proposition}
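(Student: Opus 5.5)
We do not see Figure \ref{linearfig}, so the plan reconstructs the natural circuit and checks both exactness and T-depth. The starting point is the binary expansion $x=\sum_{j=0}^{n-1}2^j x_j$, with $N=2^n$. The target amplitude is
\begin{equation}
\frac{N-1}{2}-x=\sum_{j=0}^{n-1}2^{j}\left(\tfrac12-x_j\right)=\frac12\sum_{j=0}^{n-1}2^{j}(-1)^{x_j}.
\end{equation}
So $|L\rangle$, up to normalization, is the state $\sum_j 2^j\, Z_j\, |+\rangle^{\otimes n}$, a weighted sum of single-qubit Pauli $Z$'s applied to the uniform superposition. This is a linear combination of unitaries (LCU) with coefficients $2^j$. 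The plan is to realize it with an index register.

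\textbf{Construction.}
\begin{enumerate}
\item Prepare an ancilla register of $\lceil\log_2 n\rceil$ qubits in $\propto\sum_j 2^{j/2}|j\rangle$. Because the weights are geometric, this is essentially a product state of single-qubit rotations with angles fixed by $2^{2^k}$ ratios.

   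Exactness matters here. If $n$ is not a power of two, the support must be truncated to $j<n$. I would handle this with a unary/one-hot encoding instead. Then the select step is a cascade of controlled-$Z$ gates (Clifford), and the state $\sum_j 2^{j/2}|e_j\rangle$ is prepared by a ladder of controlled rotations. These rotations cost non-Clifford gates and would spoil exactness, so I prefer the binary index plus the unary-iteration select of Babbush et al. Its cost is $\sim n$ Toffolis with T-depth linear in $n$.
\item Apply Hadamards to the data register.
\item Apply SELECT $=\sum_j |j\rangle\langle j|\otimes Z_j$.
\item Unprepare the index register and postselect it on $|0\rangle$.
\end{enumerate}

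The postselected data state is proportional to $\sum_j 2^j Z_j|+\rangle^{\otimes n}$, which is exactly $|L\rangle$.

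\textbf{Success probability.} The success amplitude is $\|\sum_j 2^j Z_j|+\rangle\|/\sum_j 2^j$. Since $\|\sum_j 2^jZ_j|+\rangle\|^2=\sum_j 4^j$, this gives
\begin{equation}
|\alpha|^2=\frac{\sum 4^j}{(\sum 2^j)^2}\approx\frac{(4^n)/3}{4^n}=\frac13 .
\end{equation}
That would triple the T-depth, which does not match the claimed $2n+4\lceil\log_2n\rceil$. The figure therefore probably uses a smarter circuit with near-unit success.

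\textbf{Alternative construction.} Write $\tfrac{N-1}{2}-x$ recursively over qubits. Then $|L_n\rangle \propto |0\rangle(|L_{n-1}\rangle + c|+\rangle)-|1\rangle(\ldots)$. This suggests building $|L\rangle$ qubit by qubit: a chain of $n$ controlled-Hadamard gates (2 T each, giving T-depth $2n$) interleaved with exact rotations, plus a small index or flag register whose preparation and comparison logic costs $4\lceil\log_2 n\rceil$ in Toffoli compute/uncompute pairs. The ancilla register of size $\lceil\log_2 n\rceil$ flags which term is active.

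\textbf{Proof steps.} I would:
\begin{enumerate}
\item verify by induction on $n$ that the postselected output amplitude on $|x\rangle$ is proportional to $\tfrac{N-1}{2}-x$;
\item compute the success probability and show it tends to a constant near $1$, hence the word ``approximate'';
\item count T-depth: the $n$ controlled-Hadamards contribute $2n$, the Toffoli pairs on the $\lceil\log_2 n\rceil$-qubit register contribute $4\lceil\log_2n\rceil$, and all remaining rotations must have angles that are Clifford-exact.
\end{enumerate}

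\textbf{Main obstacle.} Exactness is the hard part. Any rotation by an angle like $\arctan 2^{j}$ is not Clifford+T exact, so the circuit must generate the geometric weights $2^j$ using only Hadamards, controlled-Hadamards and postselection. For example, one can obtain amplitude ratios of $2$ via the postselection normalization itself. I would first try to show that the unary/controlled-Hadamard cascade produces weights $\sqrt{2^{-k}}$ on the $k$-th flag exactly. Then I would check that the leftover normalization keeps the expected-T-depth factor $1/|\alpha|^2$ close to $1$.
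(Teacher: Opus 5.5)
Your identity $\frac{N-1}{2}-x=\frac12\sum_j2^j(-1)^{x_j}$ and the LCU form $|L\rangle\propto\sum_j2^jZ_j|+\rangle^{\otimes n}$ are exactly the paper's starting point. Your success probability $\approx1/3$ for the standard PREP/SELECT/PREP$^\dagger$ sandwich also matches the paper's remark ($\alpha=1/\sqrt3$).

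However, neither of your constructions proves the proposition. The first fails on three counts:
\begin{itemize}
\item postselection triples the expected cost;
\item the PREP rotations are not exact as written;
\item unary-iteration SELECT costs about $4n$ sequential T gates, not T-depth $4\lceil\log_2n\rceil$.
\end{itemize}
The ``alternative construction'' is reverse-engineered from the target formula. No circuit is specified and exactness is not shown. The split into $2n$ (controlled-Hadamards on the data) plus $4\lceil\log_2n\rceil$ (flag comparisons) is asserted, not derived. ``Approximate'' also does not mean a success probability near $1$. In the paper the LCU stage is deterministic, and the word refers to the numerically estimated repeat-until-success cost of preparing the index state.

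You are missing three ideas.

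(i) Replace PREP$^\dagger$ by $H^{\otimes\lceil\log_2n\rceil}$, and put the \emph{full} coefficients into the first PREP: prepare $|e_{1/2}\rangle\propto\sum_k2^{-k}|k\rangle$, not square roots. Every index outcome $m$ is then equally likely and leaves $\sum_k(-1)^{m\cdot k}2^{-k}Z_kH^{\otimes n}|0\rangle$ on the data. Since $X_jZ_kH^{\otimes n}|0\rangle=(-1)^{\delta_{jk}}Z_kH^{\otimes n}|0\rangle$, Pauli $X$'s remove these signs. Coherently, apply CNOTs from index bit $i$ onto every data qubit $k$ with $k_i=1$, then a second Hadamard layer. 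This returns the index register to $|0\rangle$ with no measurement, because the states $Z_kH^{\otimes n}|0\rangle$ are orthonormal and the index can be uncomputed from the data.

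(ii) Exactness lives in the index state. $|e_{1/2}\rangle$ is a product of factors $\propto|0\rangle+2^{-2^j}|1\rangle$. Each is prepared exactly by the widget of Lemma \ref{rotationlemma}: controlled-Hadamards from a $|+\rangle$ target onto $k$ fresh ancillas, which are then measured. The all-zero outcome leaves $\propto|0\rangle+2^{-k/2}|1\rangle$, and failures are repeated until success. Your controlled-Hadamard instinct was right, but it belongs here. In the single-ancilla version the paper adopts, the largest factor needs $2^{\lceil\log_2n\rceil}$ sequential controlled-Hadamards of T-depth $2$. That is the source of the expected $\approx2n$.

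(iii) SELECT is a controlled-SWAP tree. Each index bit controls one layer of controlled-SWAPs sharing that control. The layers route the addressed qubit to a fixed wire, a single uncontrolled $Z$ acts there, and the layers are undone. This gives the $4\lceil\log_2n\rceil$ term.

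For $n$ not a power of two, the paper prepares the larger linear state and strips bottom qubits with a Hadamard and a measurement. Outcome $|0\rangle$ leaves the smaller linear state exactly.
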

{\bf Proof:} See Appendix \ref{linearAPX}.

\begin{proposition}
\label{approxQFT}
There exists a circuit $\widetilde{QFT}$ approximating the $n$-qubit quantum Fourier transform with T-depth $(n-3)(1.15\log _2(1/\delta )+13.2)+7$ such that $\lVert QFT|L\rangle-\widetilde{QFT}|L\rangle\rVert\approx\left(\frac{n}{2}-\frac{4}{3}\right)\delta$.
\end{proposition}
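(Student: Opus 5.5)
The approach is to build $\widetilde{QFT}$ from the textbook QFT circuit: Hadamards, controlled phase rotations $R_k=\mathrm{diag}(1,e^{2\pi i/2^k})$, and a final qubit reversal, which is free since it can be absorbed into relabelling. We then replace each non-Clifford rotation by a Bocharov et al. repeat-until-success approximation with error $\delta$. To reach the stated T-depth, we reorganise the controlled rotations into $n-1$ layers using the standard staircase/parallel arrangement.

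\textbf{Step 1: making each layer cost one rotation in depth.} In layer $j$, qubit $j$ receives all phases controlled by the later qubits. This combined phase $e^{2\pi i\,0.x_{j+1}\dots x_n}$ can be implemented as a single controlled arbitrary rotation. One way is to compute the phase angle into an ancilla register via the standard phase-gradient/addition trick; the alternative for depth, noted in Appendix \ref{QFT_appendix}, uses one approximate $R_z$ sandwiched by CNOT/Toffoli compute–uncompute pairs. A controlled rotation $cR_z(\theta)$ decomposes as two $R_z(\pm\theta/2)$ around CNOTs, which can be run in parallel, plus Toffoli-style overhead.

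This should give a per-layer T-depth of $1.15\log_2(1/\delta)+9.2$ for the rotation, plus a constant of about $4$ from the Toffoli compute/uncompute, for $1.15\log_2(1/\delta)+13.2$ in total. The first few layers act only on the bottom qubits with rotations $R_1=Z$, $R_2=S$ and $R_3=T$, which are Clifford or $T$ gates. They therefore need no synthesis and contribute only a constant: at most $7$ in T-depth, coming from the $T$-type gates and a controlled-$T$ in the last layers. That leaves $n-3$ layers requiring synthesized rotations, giving $(n-3)(1.15\log_2(1/\delta)+13.2)+7$. I would fix the exact constants by writing out the small cases explicitly.

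\textbf{Step 2: the error on $|L\rangle$.} The worst-case operator-norm error would be $(\#\text{rotations})\cdot\delta$. The claim is sharper because it is evaluated on $|L\rangle$. I would use the telescoping/hybrid bound
\[
\|QFT|L\rangle-\widetilde{QFT}|L\rangle\|\le\sum_k\|(R^{(k)}-\tilde R^{(k)})|\phi_k\rangle\|,
\]
where $|\phi_k\rangle$ is the exact intermediate state. I would then model each Bocharov error as a small phase/axis error of size $\delta$ that acts only on the branch where the controls select a nontrivial rotation. The contribution of rotation $k$ is then $\delta$ times the norm of the component of $|\phi_k\rangle$ on which that rotation acts nontrivially.

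For the linear state, amplitude weights on the relevant bit patterns can be computed in closed form, since $|L\rangle$ has amplitudes linear in $x$. Summing these weights over layers should give $\approx\sum_{j}(1/2-\text{small correction})$. This should yield $(n/2-4/3)\delta$, where the $1/2$ reflects that a qubit of $|L\rangle$ (or of its partial transform) is in the relevant branch about half the time, and $-4/3$ collects the boundary layers and the geometric-sum corrections.

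\textbf{Main obstacle.} The hard step is the state-dependent error accounting. Errors may add coherently rather than in quadrature, so the approximation sign must mean first order in $\delta$. The branch weights must also be computed for the partially transformed states, not just for $|L\rangle$. I would first check numerically for small $n$ that the coefficient is $n/2-4/3$. I would then derive it by computing $\langle\phi_k|P_k|\phi_k\rangle$ from the explicit linear amplitudes, where $P_k$ is the projector onto the controls being active.
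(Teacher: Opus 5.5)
There is a genuine gap in Step 1, and it carries over into Step 2.

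\textbf{Step 1 (T-depth).} Your construction rests on a false claim. The phase that qubit $j$ collects from the later qubits, $e^{2\pi i\,0.0x_{j+1}\cdots x_n}$, takes $2^{n-j}$ different values. It is therefore a multiplexed (uniformly controlled) rotation, not ``a single controlled arbitrary rotation.'' A single synthesized $R_z$ between Toffoli compute/uncompute pairs can only apply two distinct relative phases, so it cannot produce this. The phase-gradient route does implement it, but through a controlled adder whose cost is not $1.15\log_2(1/\delta)+13.2$ per layer; the paper uses that route only for its T-counts. Also, the two $R_z(\pm\theta/2)$ in the CNOT decomposition of a controlled phase sit on the same qubit on either side of a CNOT, so they are not parallel.

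Each layer really contains several distinct synthesized angles. The stated depth comes from running them in parallel, not from merging them. The missing ingredient is the Kim--Choi decomposition the paper uses: a controlled-$R_k$ is a controlled-SWAP of the target into a $|0\rangle$ ancilla, then an \emph{uncontrolled} $R_k$ on that ancilla (valid because $R_k|0\rangle=|0\rangle$), then the controlled-SWAP undone. Group the rotations into $n-1$ layers that share one qubit as swap control. The paper groups by common control; your grouping by common target also works, because a controlled phase is symmetric in its two qubits. Then the controlled-SWAP pairs of a layer share a control and cost T-depth $4$. The layer's rotations act on distinct ancillas and are synthesized simultaneously in depth $1.15\log_2(1/\delta)+9.2$. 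The constant $7$ is the first layer (a lone controlled-$S$, depth $2$) plus the second (controlled-$T$ via Kim--Choi, $4+1=5$, with the controlled-$S$ alongside).

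\textbf{Step 2 (error on $|L\rangle$).} The same misconception breaks this step. The circuit has $(n-2)(n-3)/2$ synthesized rotations. In your hybrid sum each one generically contributes a term of order $\delta$, with a branch weight that does not decay with the rotation index. So the telescoping bound is of order $n^2\delta$; your per-layer tally of ``$\approx 1/2$'' is linear in $n$ only because it assumes one rotation per layer. In addition, in the Kim--Choi circuit the synthesized gate is uncontrolled and $\tilde R_k|0\rangle\neq|0\rangle$ in general, so its error is not confined to the active branch.

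Your plan to check the coefficient numerically is in fact all the paper does. The formula $(n/2-4/3)\delta$ is a fit to simulation of the approximate circuit on $|L\rangle$, and the ``$\approx$'' is empirical, not derived. Its linear growth despite quadratically many rotations is what incoherent accumulation of the individual synthesis errors would give. For example, quadrature addition of $(n-2)(n-3)/2$ contributions of size $\delta/\sqrt{2}$ gives $\tfrac12\sqrt{(n-2)(n-3)}\,\delta\approx(\tfrac n2-\tfrac54)\delta$. A worst-case first-order triangle-inequality argument cannot reproduce this. To support the error claim you need either the simulation itself or an explicit stochastic model of the synthesis errors.
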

{\bf Proof:} See Appendix \ref{QFT_appendix}. \\

Armed with these results, we are now prepared to produce a quantum circuit to approximate the harmonic sequence state. However, because we are operating in a discrete space and not a continuous one as in equation (\ref{fourierharmonic}), the following result demonstrates the constructed state follows a cotangent function rather than the harmonic sequence.
\begin{proposition}
\label{cotagentapprox}
Let $\omega =e^{2\pi i/N}$. The $n$-qubit quantum Fourier transform applied to the $n$-qubit linear state $|L\rangle$ satisfies
\begin{equation}
    QFT|L\rangle =\sqrt{\frac{12}{N^2-1}}\sum _{k=1}^{N-1}\frac{1}{\omega ^k-1}|k\rangle =\sqrt{\frac{3}{N^2-1}}\sum _{k=1}^{N-1}\left( 1+i\cot\left(\frac{\pi k}{N}\right)\right)|k\rangle.
\end{equation}
\end{proposition}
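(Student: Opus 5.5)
The proof is a direct computation: write $|L\rangle$ with its normalization, apply the QFT termwise, and evaluate the resulting geometric-type sums in closed form. I will use the convention $QFT|x\rangle=\frac{1}{\sqrt N}\sum_{k=0}^{N-1}\omega^{xk}|k\rangle$ with $N=2^n$. For the normalization, the amplitudes $\frac{N-1}{2}-x$ are centered at their mean. The standard identity $\sum_{x=0}^{N-1}\bigl(x-\frac{N-1}{2}\bigr)^2=\frac{N(N^2-1)}{12}$ then gives
\begin{equation*}
|L\rangle=\sqrt{\frac{12}{N(N^2-1)}}\sum_{x=0}^{N-1}\Bigl(\frac{N-1}{2}-x\Bigr)|x\rangle .
\end{equation*}
The amplitude of $|k\rangle$ in $QFT|L\rangle$ is therefore $\sqrt{\frac{12}{N^2(N^2-1)}}\sum_{x}\bigl(\frac{N-1}{2}-x\bigr)\omega^{xk}$.

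Next I split the sum into the constant part and the linear part. For $k=0$ the sum is $\sum_x(\frac{N-1}{2}-x)=0$, which is why the state is supported on $k=1,\dots,N-1$. For $k\neq 0$, $\omega^k$ is a nontrivial $N$-th root of unity, so $\sum_x\omega^{xk}=0$ and only $-\sum_x x\,\omega^{xk}$ survives.

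The key step is the closed form $S(z)=\sum_{x=0}^{N-1}x z^x=\frac{N}{z-1}$ for any $z\neq 1$ with $z^N=1$. I will prove it by telescoping:
\begin{equation*}
(z-1)S=\sum_{x=1}^{N}(x-1)z^x-\sum_{x=0}^{N-1}xz^x=Nz^N-\sum_{x=1}^{N}z^x=N .
\end{equation*}
The last equality uses $z^N=1$ and $\sum_{x=1}^N z^x=0$. Substituting, the amplitude becomes $-\sqrt{\frac{12}{N^2-1}}\cdot\frac{1}{\omega^k-1}$. This matches the first expression in the proposition up to the global phase $-1$; the sign depends on the sign convention for the QFT and the orientation of the sawtooth, and does not affect the physical state. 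I will state this explicitly.

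For the cotangent form, write $\theta=2\pi k/N$. Then
\begin{equation*}
\frac{1}{e^{i\theta}-1}=\frac{e^{-i\theta/2}}{2i\sin(\theta/2)}=-\frac12-\frac{i}{2}\cot\frac{\theta}{2},
\end{equation*}
so that $-\frac{2}{\omega^k-1}=1+i\cot(\pi k/N)$. Together with the prefactor this gives exactly $\sqrt{\frac{3}{N^2-1}}\sum_{k=1}^{N-1}\bigl(1+i\cot(\pi k/N)\bigr)|k\rangle$.

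As a sanity check, this state has unit norm, since $\sum_{k=1}^{N-1}\csc^2(\pi k/N)=\frac{N^2-1}{3}$.

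The only real obstacle is the bookkeeping of signs and conventions (the QFT sign and the orientation of $|L\rangle$) so that the two displayed expressions agree. I will fix the convention as above and remark that the stated identities hold up to an overall phase.
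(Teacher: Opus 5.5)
Your proof is correct. The normalization, the vanishing $k=0$ amplitude, the telescoping evaluation $\sum_{x=0}^{N-1}x z^x=N/(z-1)$ for $z^N=1$, $z\neq 1$, and the half-angle identity all check out. For example, with $N=4$, $k=1$ you get amplitude $(1+i)/\sqrt5$, which matches the cotangent form. The paper states this proposition without any proof, so there is no alternative route to compare against; your direct Fourier computation is the natural argument.

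You should correct how you frame the sign issue, because it is not a convention artifact. Your own identity $\frac{1}{\omega^k-1}=-\frac12\bigl(1+i\cot(\pi k/N)\bigr)$ holds for the fixed $\omega=e^{2\pi i/N}$. So the two right-hand sides in the statement are exact negatives of each other, and no choice of conventions can make both equalities hold literally. With your convention and the paper's $|L\rangle$, the cotangent expression is exact and the middle expression should read $\frac{1}{1-\omega^k}$. Say that the statement has a global sign slip in its middle term, rather than that conventions can be tuned ``so that the two displayed expressions agree.''

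Also, only the orientation of the sawtooth amounts to a pure global phase. Switching the QFT to $\omega^{-xk}$ yields amplitudes proportional to $1-i\cot(\pi k/N)$, i.e.\ the reflection $k\mapsto N-k$. That is a genuinely different state, not a phase. Your choice of the $\omega^{+xk}$ convention is therefore the one that matches the cotangent state $1+i\cot$ used in the paper's subsequent lemmas.
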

Though applying a quantum Fourier transform to a linear state produces a cotangent state rather than a harmonic sequence state, we notice that the asymptote of the cotangent function is well approximated by $1/x$, namely $\cot x=x^{-1}+\frac{1}{3}x+O(x^3)$. Further, the constant real part of the cotangent state is exponentially suppressed as $n$ increases. In light of these facts, if we enlarge the linear state to $n+m$ qubits, quantum Fourier transform the resulting $n+m$ qubit state, and successfully measure the $m$ ancilla qubits so that the final state captures the asymptote of the cotangent state, this circuit produces a state that well approximates the harmonic sequence state. The following result quantifies the fitness of this approximation.
\begin{lemma}
\label{firstpass}
If $|c\rangle \propto\sum _{x=1}^{N-1}\left( 1+i\cot\left(\frac{\pi x}{MN}\right)\right) |x\rangle$ is a cotangent state and $|h\rangle$ is the harmonic sequence state from equation (\ref{harmonicstate}), then $\lVert |c\rangle -i|h\rangle\rVert\approx\sqrt{6}/2^{n/2+m}$.
\end{lemma}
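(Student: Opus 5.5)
We will work directly with the unnormalized amplitudes and compare them, after a common rescaling, with the harmonic sequence. Write $N=2^n$, $M=2^m$, and let $c_x=1+i\cot\left(\frac{\pi x}{MN}\right)$ for $1\le x\le N-1$. Since $x/(MN)\le 1/M$, the argument of the cotangent is small. The Laurent expansion $\cot y=y^{-1}-\frac{y}{3}+O(y^3)$ then gives, after dividing by the common factor $iMN/\pi$,
\begin{equation*}
\frac{\pi}{iMN}\,c_x=\frac{1}{x}-\frac{i\pi}{MN}-\frac{\pi^2x}{3M^2N^2}+O\!\left(\frac{x^3}{M^4N^4}\right).
\end{equation*}
So, up to the global phase $i$ that appears in the statement, the rescaled vector is $\vec h+\vec e$. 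Here $\vec h_x=1/x$, and the error has two parts: an imaginary constant part $\vec e^{(1)}_x=-i\pi/(MN)$, and a real linear part $\vec e^{(2)}_x=-\pi^2x/(3M^2N^2)$ together with higher-order terms.

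Next we use the general perturbation fact for normalized vectors. If $\vec v=\vec h+\vec e$ with $\lVert\vec e\rVert\ll\lVert\vec h\rVert$, then
\begin{equation*}
\left\lVert\frac{\vec v}{\lVert\vec v\rVert}-\frac{\vec h}{\lVert\vec h\rVert}\right\rVert\approx\frac{\lVert\vec e_\perp\rVert}{\lVert\vec h\rVert},
\end{equation*}
where $\vec e_\perp$ is the component of $\vec e$ orthogonal to $\vec h$. The component along $\vec h$ only changes the norm, which normalization removes at first order. Because $\vec h$ is real and $\vec e^{(1)}$ is purely imaginary, $\vec e^{(1)}$ is entirely orthogonal to $\vec h$ as a complex vector. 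It therefore contributes in full: $\lVert\vec e^{(1)}\rVert=\frac{\pi}{MN}\sqrt{N-1}\approx\frac{\pi}{M\sqrt N}$. The normalization of $\vec h$ is $\lVert\vec h\rVert=\left(\sum_{x=1}^{N-1}x^{-2}\right)^{1/2}\to\pi/\sqrt6$, with correction $O(1/N)$. Combining, the leading error is
\begin{equation*}
\frac{\pi/(M\sqrt N)}{\pi/\sqrt6}=\frac{\sqrt6}{M\sqrt N}=\frac{\sqrt6}{2^{n/2+m}},
\end{equation*}
which is the claimed value.

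The remaining step is to check that everything else is lower order; this is the part needing the most care. The real linear term has norm
\begin{equation*}
\lVert\vec e^{(2)}\rVert\approx\frac{\pi^2}{3M^2N^2}\left(\sum_{x<N}x^2\right)^{1/2}\approx\frac{\pi^2}{3\sqrt3\,M^2\sqrt N}.
\end{equation*}
This is smaller than the main term by a factor of order $1/M$. Its orthogonal part is also only a fraction of this, since it partially aligns with $\vec h$. The higher Laurent terms are further suppressed by powers of $1/M^2$. The second-order effects of the normalization change are $O(\lVert\vec e\rVert^2)$. Finally, the finite-$N$ correction to $\sum x^{-2}$ changes the constant only by a relative $O(1/N)$. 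Each of these corrections will be bounded explicitly, and this justifies the ``$\approx$'' in the statement as the leading-order asymptotic in $2^{-m}$ and $2^{-n}$.

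One loose end is the phase convention. We interpret $|c\rangle$ with the common phase chosen so that its dominant imaginary part matches $i|h\rangle$, consistent with the form given in Proposition~\ref{cotagentapprox}. We will verify this choice rather than optimizing over the global phase, because optimizing over the phase could only reduce the distance.
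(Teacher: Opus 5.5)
The paper states this lemma without proof. The only support it gives is the heuristic that $\cot$ is close to $1/x$ near its pole and that the constant real part is suppressed. Your argument is a correct quantification of exactly that heuristic. After dividing by $iMN/\pi$, the constant term becomes $\vec e^{(1)}=-\frac{i\pi}{MN}(1,\dots,1)$, and this is the dominant error. The ratio $\lVert\vec e^{(1)}\rVert/\lVert\vec h\rVert\approx\sqrt6/(M\sqrt N)$ is the right constant; it also reproduces the paper's example that $n=20$, $\epsilon=10^{-10}$ requires $m=25$.

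One step is misjustified, though.

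\emph{The orthogonality claim is false.} $\vec e^{(1)}$ is not orthogonal to $\vec h$ as a complex vector: $\langle\vec h,\vec e^{(1)}\rangle=-\frac{i\pi}{MN}\sum_{x=1}^{N-1}x^{-1}\neq0$.

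\emph{The perturbation fact fails under the Hermitian inner product.} A component $it\vec h$ along $\vec h$ with imaginary coefficient is not removed by normalization. Instead it rotates the phase, since $(1+it)\vec h/\lVert(1+it)\vec h\rVert=e^{i\arctan t}\,\vec h/\lVert\vec h\rVert$, which lies at distance $\approx|t|$ from $\vec h/\lVert\vec h\rVert$ at first order.

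\emph{The fix.} Use the real inner product $\mathrm{Re}\langle\cdot,\cdot\rangle$ on $\mathbb{C}^{N-1}\cong\mathbb{R}^{2(N-1)}$. This is the relevant one, because $\lVert u-w\rVert^2=2-2\,\mathrm{Re}\langle u,w\rangle$ for unit vectors. With $\hat h=\vec h/\lVert\vec h\rVert$, the first-order error is $\lVert\vec e-\mathrm{Re}\langle\hat h,\vec e\rangle\,\hat h\rVert/\lVert\vec h\rVert$. In this sense a purely imaginary vector is orthogonal to a real one, so $\vec e^{(1)}$ does contribute in full and your number stands.

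Using the Hermitian decomposition consistently would instead give the distance minimized over global phase. That differs only at relative order $(\log N)^2/N$, which is why the slip does not change the leading term.

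The same real orthogonality also sharpens your error bookkeeping. The imaginary $\vec e^{(1)}$ and the real-orthogonal part of the real term $\vec e^{(2)}$ add in quadrature, so the correction from $\vec e^{(2)}$ is of relative order $M^{-2}$, not merely $M^{-1}$.

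Finally, the phase-convention paragraph is unnecessary. With $|c\rangle=\vec c/\lVert\vec c\rVert$ you have $\vec c=\frac{iMN}{\pi}(\vec h+\vec e)$, so $|c\rangle=i(\vec h+\vec e)/\lVert\vec h+\vec e\rVert$ exactly. The factor $i$ in the statement appears automatically, with no choice involved.
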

A measurement on the $m$ ancilla qubits is almost guaranteed to produce one of the two asymptotes (i.e. the probability of measuring either exponentially approaches 1 with increasing $n$). The main asymptote is recovered by measuring all ancillas as $|0\rangle$, while the second asymptote is produced by measuring all ancillas as $|1\rangle$. The second asymptote, rather than approximating $|h\rangle$, instead approximates the state $|h'\rangle\propto\sum _{x=0}^{N-1}\frac{1}{N-x}|x\rangle$. While one could simply apply $X$ gates on all qubits to `reverse' this state, rather than reproducing $|h\rangle$ the result would contain its largest amplitude in the $|0\rangle ^{\otimes n}$ basis state and an extra nontrivial amplitude in the $|N-1\rangle$ basis state. To amend the second asymptote to resemble the first we need to 
\begin{enumerate}
    \item Eliminate the extra nontrivial amplitude (or replace it with something sufficiently close to zero).
    \item Reverse the order of the state.
    \item Increment the basis states such that there is zero amplitude at $|0\rangle$ and the nontrivial amplitudes begin at $|1\rangle$.
\end{enumerate}
The circuit in Figure \ref{harmonicfig} accomplishes this by (1) using multi-controlled $X$ gates to replace the extra nontrivial amplitude in the $|M(N-1)\rangle$ basis state with an amplitude asymptotically close to zero in the basis state $|MN/2\rangle$, (2) using CNOT gates to reverse the second asymptote but not the first, and (3) using a standard quantum incrementer \cite{camps22} on the second asymptote.

\input{Figures/cartoon}

The situation becomes more favorable if we coherently eliminate the second asymptote while constructively amplifying the first asymptote. This can be accomplished by using CNOT gates on the ancilla qubits to move the starting point of the corrected second asymptote to $|MN/2\rangle$, then applying a Hadamard gate with the correct phase on the top ancilla qubit. Not only does this eliminate the ambiguity of measuring one of two asymptotes, but from this operation the constant real parts of the asymptotes constructively interfere, leading to a considerably more accurate state as described by the following lemma:
\begin{lemma}
\label{secondpass}
If $|c'\rangle \propto\frac{1}{2}|0\rangle ^{\otimes n}+i\sum _{x=1}^{N-1}\cot\left(\frac{\pi x}{MN}\right) |x\rangle$ is the amended cotangent state and $|h\rangle$ is the harmonic sequence state, then $\lVert |c'\rangle -i|h\rangle\rVert\approx\sqrt{3}/2^{n+m+1/2}$.
\end{lemma}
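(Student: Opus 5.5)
The plan is to compare the normalized states by writing the unnormalized amended cotangent vector as a large multiple of the unnormalized harmonic vector plus a small perturbation. The error between the normalized states is then, to leading order, the norm of the part of the perturbation \emph{orthogonal} to the harmonic direction, divided by the norm of the harmonic vector. Set $a=MN/\pi$, let $h_u=\sum_{x=1}^{N-1}\frac{1}{x}|x\rangle$ (so $|h\rangle=h_u/\lVert h_u\rVert$), and let
\begin{equation}
v=\tfrac{1}{2}|0\rangle^{\otimes n}+i\sum_{x=1}^{N-1}\cot\left(\frac{\pi x}{MN}\right)|x\rangle ,
\end{equation}
so that $|c'\rangle=v/\lVert v\rVert$. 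Since $x\le N-1<MN$, every argument satisfies $\pi x/(MN)<\pi/M$, which is small. We may therefore use the Laurent expansion quoted before Lemma \ref{firstpass}, $\cot y=y^{-1}-\frac{y}{3}+O(y^3)$, and write
\begin{equation}
v= i a\left(h_u+w\right),\qquad w=-\frac{i}{2a}|0\rangle^{\otimes n}-\sum_{x=1}^{N-1}\left(\frac{x}{3a^2}+O\!\left(\frac{x^3}{a^4}\right)\right)|x\rangle .
\end{equation}
The global phase $i$ is exactly the one appearing in $i|h\rangle$, so it suffices to bound $\lVert (h_u+w)/\lVert h_u+w\rVert-h_u/\lVert h_u\rVert\rVert$.

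Next, I would use the elementary fact that for a small perturbation $w$,
\begin{equation}
\left\lVert \frac{h_u+w}{\lVert h_u+w\rVert}-\frac{h_u}{\lVert h_u\rVert}\right\rVert=\frac{\lVert w_\perp\rVert}{\lVert h_u\rVert}+O\!\left(\frac{\lVert w\rVert^2}{\lVert h_u\rVert^2}\right),
\end{equation}
where $w_\perp$ is the component of $w$ orthogonal to $h_u$. The component parallel to $h_u$ only rescales the vector, and normalization removes it up to second order. I would then split $w$ into two pieces.

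\emph{The $|0\rangle$ piece.} Since $h_u$ has no support on $|0\rangle$, the term $-\frac{i}{2a}|0\rangle$ is entirely orthogonal to $h_u$. Its norm is $1/(2a)=\pi/(2MN)$.

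\emph{The linear correction.} The real vector $\sum_x \frac{x}{3a^2}|x\rangle$ has total norm about $\sqrt{N^3/27}/a^2$. Even ignoring that part of it is parallel to $h_u$, this is negligible compared with $1/(2a)$ whenever $N^{3/2}\ll a$, i.e.\ $N\ll M^2$ up to constants. The $O(x^3/a^4)$ remainder is smaller still.

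Finally, I would use $\lVert h_u\rVert^2=\sum_{x=1}^{N-1}x^{-2}=\pi^2/6-O(1/N)$. This gives
\begin{equation}
\lVert |c'\rangle-i|h\rangle\rVert\approx\frac{\pi/(2MN)}{\pi/\sqrt6}=\frac{\sqrt6}{2MN}=\frac{\sqrt6}{2^{n+m+1}}=\frac{\sqrt3}{2^{n+m+1/2}},
\end{equation}
which is the claimed bound.

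The main obstacle is making the meaning of ``$\approx$'' honest. Dropping the linear correction $x/(3a^2)$ requires the regime $N\ll M^2$, equivalently $n<2m$ roughly, plus large $N$ so that $\sum x^{-2}\approx\pi^2/6$. To handle this, I would first compute the exact projection of $\sum_x x|x\rangle$ onto $h_u$ (its inner product with $h_u$ is exactly $N-1$) and bound the resulting orthogonal remainder explicitly. That turns the heuristic into a leading-term statement with an explicit relative correction of order $\sqrt{N^3}/a$. The same computation, without the cancellation of the constant real part that produces the single $\tfrac12|0\rangle$ term, reproduces Lemma \ref{firstpass}, which serves as a useful consistency check.
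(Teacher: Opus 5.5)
The paper states this lemma without any derivation; it is not proved in Section~\ref{stateprep} or in the appendices. Your perturbative computation is therefore effectively the missing argument, and it correctly locates the constant. The $\tfrac12|0\rangle^{\otimes n}$ term is disjointly supported from $h_u$, and $\frac{\pi/(2MN)}{\pi/\sqrt6}=\sqrt3/2^{n+m+1/2}$. Your sign $\cot y=y^{-1}-y/3+\cdots$ is the correct one; the paper's $+\tfrac13x$ is a typo that does not affect norms.

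\emph{The gap: the condition $N\ll M^2$ is essential.} You flag $N\ll M^2$ as a matter of making ``$\approx$'' honest, but it is a real hypothesis. Outside it the stated approximation is wrong, so you cannot end with ``which is the claimed bound'' unconditionally.

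The $|0\rangle$ piece and the linear piece have disjoint supports, so they add in quadrature. Essentially all of the linear piece is orthogonal to $h_u$: its projection has norm $O(N/a^2)$, against a total of order $N^{3/2}/a^2$. Finishing your own computation gives
\[
\bigl\lVert |c'\rangle-i|h\rangle\bigr\rVert^2\approx\frac{3}{2M^2N^2}+\frac{2\pi^2}{9M^4N},
\]
and the second term is $\frac{4\pi^2}{27}\frac{N}{M^2}$ times the first. Consequences:
\begin{itemize}
\item The lemma's value is correct only when $2m$ exceeds $n$ by a margin. In that regime the relative correction from this source is $O(N/M^2)$, which is better than your triangle-inequality estimate $O(\sqrt N/M)$.
\item Otherwise the error is dominated by $\frac{\sqrt2\,\pi}{3}\,2^{-(2m+n/2)}$. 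It then no longer depends only on $n+m$, contrary to what the text infers from the lemma.
\item For $m=1$, which the paper's parameter study uses in part of its range, the error is $\Theta(2^{-n/2})$.
\end{itemize}
The failure occurs at realistic parameters. The paper's recipe $n+m=\lceil\log_2(\sqrt3/\epsilon)-1/2\rceil$ with $n=22$ and $\epsilon=10^{-9}$ gives $m=9$. There the claimed value is about $5.7\times10^{-10}$, but the linear term alone contributes about $2.8\times10^{-9}$.

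So you should either state $N\ll M^2$ as a hypothesis, or better, prove the two-term formula above. This also explains why your Lemma~\ref{firstpass} consistency check works for every $n$. There the constant real part contributes $6/(M^2N)$ to the squared error, which exceeds the linear term by a factor $27M^2/\pi^2$. It is exactly the cancellation of that real part in the amended state that exposes the linear term.

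\emph{A smaller point.} For complex vectors, your ``elementary fact'' needs $w_\perp$ to be orthogonal with respect to $\mathrm{Re}\langle h_u,\cdot\rangle$. The error norm is phase-sensitive, and an imaginary multiple of $h_u$ rotates the global phase rather than being normalized away. This is harmless here: the $|0\rangle$ piece is orthogonal in every sense, and the linear piece is real. In the Lemma~\ref{firstpass} check, the perturbation $-\tfrac{i}{a}\sum_x|x\rangle$ is entirely real-orthogonal to $h_u$ and gives exactly $\sqrt6/(M\sqrt N)$. Using the complex projection there would change the answer only at relative order $(\ln N)^2/N$.
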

For example, a $20$ qubit cotangent state from lemma \ref{firstpass} would need $25$ ancilla qubits to achieve an approximation error of $10^{-10}$ to the harmonic sequence state, but a $20$ qubit cotangent state from lemma \ref{secondpass} built by combining asymptotes would only need $14$ ancilla qubits to achieve the same error. Lemma \ref{secondpass} also implies that the cotangent approximation error depends only on the total number of qubits (data plus ancilla), i.e. producing a cotangent state with approximation error $\epsilon$ requires $\lceil\log _2(\sqrt{3}/\epsilon )-1/2\rceil$ qubits total regardless of the number of desired data qubits. We mention that the improvement in lemma \ref{secondpass} from lemma \ref{firstpass} comes at the cost of a slight but negligible decrease in probability of a successful measurement, which is already exponentially close to 1. See Figure \ref{harmonictoon} for a graphical depiction of transformations induced by the final circuit.

\input{Figures/harmonicfig}

Now we are prepared to compile these results and produce a resource estimate. The expected T-depth to produce $|h\rangle$ comes from producing a $n+m$-qubit linear state (proposition \ref{linearprop}), an $n+m$ QFT (proposition \ref{approxQFT}), a series of $n+1$-qubit controlled $X$ gates (a simple construction gives a T-depth of $4\lceil\log _2(n+1)\rceil$), and a controlled incrementer (a simple construction gives $4n+4$ T-depth \cite{camps22}, but \cite{draper04} gives logarithmic T-depth). The total error from this circuit can be expressed as a sum (via triangle inequality) of gate synthesis error in the QFT and the cotangent approximation error. Since we can write both total error $\epsilon$ and expected T-depth as a function of rotation synthesis error $\delta$ as well as the number of ancilla qubits $m$, \emph{we can optimize expected T-depth for a given error $\epsilon$}.

\input{Figures/stateheat}

The total expected T-depth is presented in Figure \ref{harmonicplots} along with the T-count and total number of ancilla (including those needed for Toffoli decomposition and RUS-based rotations in the QFT). Compared with the rejection sampling technique from Lemieux et. al. \cite{lemieux24} which uses approximately $11\ 000$ Toffoli gates to prepare a harmonic sequence state with $22$ qubits and $L^2$ error $\epsilon =10^{-9}$, our quantum Fourier transform-based method has an expected T-depth of just $1\ 700$ for a state with the same parameters. We also note that the heatmap in Figure \ref{harmonicplots} is largely stratified by error except when the number of qubits is much larger than the error tolerance (i.e. the bottom right region of the plot). In this region, we find from the parameter study that $m=1$, otherwise it is sufficient to merely execute Figure \ref{harmonicfig} with a single ancilla. However, in extreme cases with large states and high error tolerance, it is possible to prepare a much smaller cotangent state and be content with the remainder of amplitudes being zero (in effect, a \emph{negative} quantity of ancilla). This would drive down expected T-depth in this region and preserve the pattern of expected T-depth being globally dominated by error tolerance. We also comment that the T-depth of the circuit in Figure \ref{harmonicfig} is dominated by the quantum Fourier transform which accounts for approximately $80\%$-$90\%$ of the total T-depth (e.g. for a 20-qubit harmonic sequence state prepared to accuracy $\epsilon =10^{-9}$ the QFT accounts for $92\%$ of total T-depth and $98\%$ of total T-count). More efficient implementations of the QFT will thus immediately lower these resource estimates (e.g., QFT via phase gradient addition \cite{sanders20}), however their analysis is outside the scope of this manuscript. 

%% file: Figures/cartoon.tex
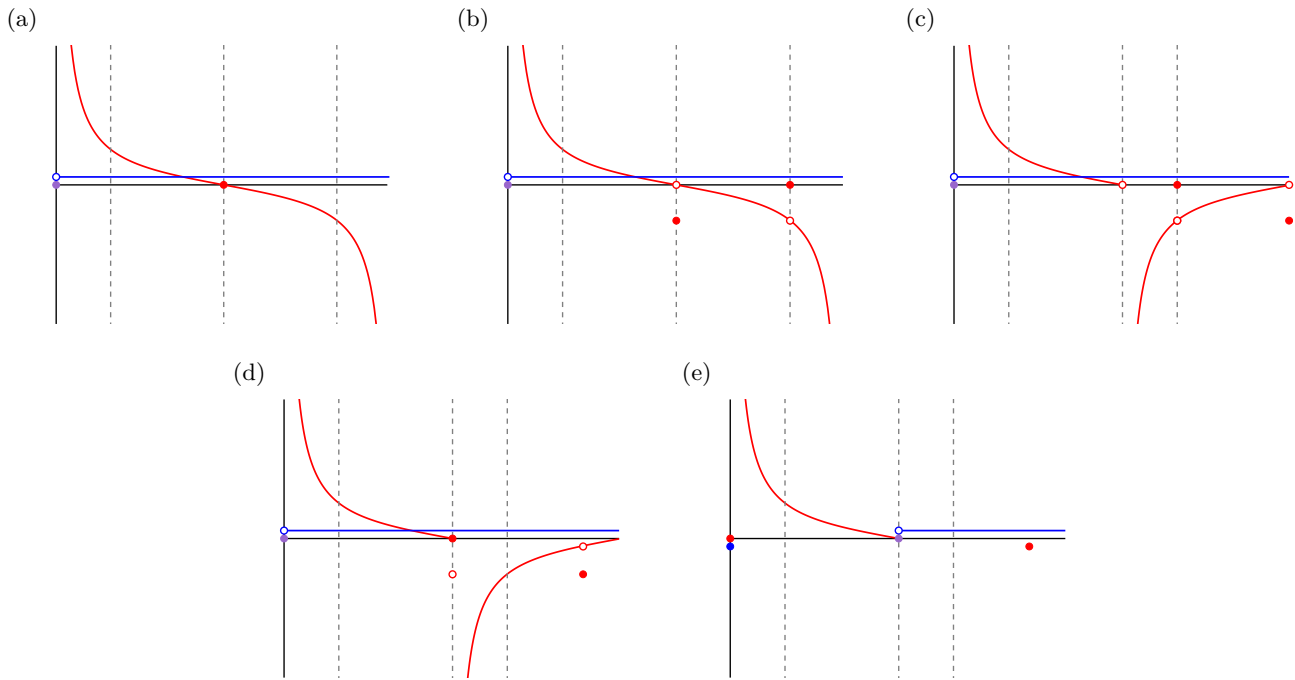
\begin{figure}
    \centering
\input{Figures/harmonic_cartoon_tikz}
    \caption{Plot of the amplitudes during the circuit in Figure \ref{harmonicfig} which transforms the cotangent state $|c\rangle$ to eliminate the constant real amplitude on the target state (i.e., the bottom $n$ qubits of an $n+m$-qubit state). The functions are depicted as continuous with discontinuity jumps where relevant. Red corresponds to imaginary part and blue corresponds to real part. The interval to the first dotted line corresponds to measuring the top $m$ qubits as $|0\rangle ^{\otimes m}$ (refer to this as the \emph{first interval}). The interval from the start to the second dotted line corresponds to measuring the top qubit as $|0\rangle$. The interval from the third dotted line to the end corresponds to measuring the top $m$ qubits as $|1\rangle ^{\otimes m}$ (refer to this as the \emph{last interval}). (\emph{a}) The initial cotangent state $|c\rangle$ resulting from applying a quantum Fourier transform to the linear state $|L\rangle$. There is constant real part except at $|0\rangle ^{\otimes (n+m)}$ where both real and imaginary amplitudes vanish. Notice that positive imaginary amplitudes in the first interval cover the set $\{1,...,1/(N-1)\}$ whereas those in the last interval cover the set $\{ 1,...,1/N\}$ (i.e. there is an extra amplitude we need to swap out). We highlight the vanishing imaginary amplitude at the basis state $|1\rangle |0\rangle ^{\otimes (n+m-1)}$; this will eventually replace the $1/N$ amplitude. (\emph{b}) We apply a series of multi-controlled $X$ gates on the previous state, open controls on the bottom $n$ qubits, closed control on the top qubit, $X$ gates everywhere else. This replaces the $1/N$ imaginary amplitude at $|1\rangle ^{\otimes m}|0\rangle ^{\otimes n}$ with the vanishing imaginary amplitude at $|1\rangle |0\rangle ^{\otimes (n+m-1)}$. (\emph{c}) A series of CNOT gates with closed control on the top qubit flips the second half of the state. This changes the location of the \emph{last interval} such that it can interact with the \emph{first interval} simply by operating on the top qubit. (\emph{d}) We apply an incrementer on the bottom $n$ qubits controlled on the top qubit. This ensures the imaginary amplitudes of the \emph{last interval} directly match with those from the \emph{first interval} but carry a negative sign. (\emph{e}) $HX$ applied to the top qubit constructively interferes the asymptotes to the \emph{first interval} and destructively interferes the asymptotes to the \emph{last interval}. Most importantly, the real amplitudes \emph{destructively} interfere in the first interval, which was the desired outcome.}
    \label{harmonictoon}
\end{figure}

%% file: Figures/harmonic_cartoon_tikz.tex
\begin{center}
\begin{tikzpicture}[scale=0.65]
    \begin{axis}[
        axis line style = thick,
        ticks=none,
        domain=0:pi, 
        samples=100, 
        ymin=-7, 
        ymax=7,
        axis lines=center,
        xlabel={},
        ylabel={},
        no markers,
        smooth,
        name=border
    ]
    \addplot[color=red, line width=1pt, domain=0:179]{cot(x)};
    \addplot[color=blue, line width=1pt, domain=0:179]{0.4};
    \addplot [color=red, only marks, mark=ball, mark size=2pt] coordinates {(90, 0)};
    \addplot [color=amethyst, only marks, mark=ball, mark size=2pt] coordinates {(0, 0)};
    \addplot [color=blue, only marks, fill=white, mark size=2pt, mark options={thick}] coordinates {(0, 0.4)};
    \addplot [color=white, only marks, mark=ball, mark size=2pt] coordinates {(180, 0)};
    \addplot[color=gray, dashed, thick] coordinates {(29.25,-100) (29.25,100)};
    \addplot[color=gray, dashed, thick] coordinates {(90,-100) (90,100)};
    \addplot[color=gray, dashed, thick] coordinates {(150.75,-100) (150.75,100)};
    \end{axis}
    \path (border.north west) ++(-0.2, 3pt) node[above left]{(a)};
\end{tikzpicture}
\hspace{0.5cm}
\begin{tikzpicture}[scale=0.65]
    \begin{axis}[
        axis line style = thick,
        ticks=none,
        domain=0:pi, 
        samples=100, 
        ymin=-7, 
        ymax=7,
        axis lines=center,
        xlabel={},
        ylabel={},
        no markers,
        smooth,
        name=border
    ]
    \addplot[color=red, line width=1pt, domain=0:179]{cot(x)};
    \addplot[color=blue, line width=1pt, domain=0:179]{0.4};
    \addplot [color=red, only marks, mark=ball, mark size=2pt] coordinates {(90, -1.8)};
    \addplot [color=red, only marks, fill=white, mark size=2pt, mark options={thick}] coordinates {(90, 0)};
    \addplot [color=red, only marks, fill=white, mark size=2pt, mark options={thick}] coordinates {(150.75, -1.8)};
    \addplot [color=amethyst, only marks, mark=ball, mark size=2pt] coordinates {(0, 0)};
    \addplot [color=blue, only marks, fill=white, mark size=2pt, mark options={thick}] coordinates {(0, 0.4)};
    \addplot[color=gray, dashed, thick] coordinates {(29.25,-100) (29.25,100)};
    \addplot[color=gray, dashed, thick] coordinates {(90,-100) (90,100)};
    \addplot[color=gray, dashed, thick] coordinates {(150.75,-100) (150.75,100)};
    \addplot [color=red, only marks, mark=ball, mark size=2pt] coordinates {(150.75, 0)};
    \end{axis}
    \path (border.north west) ++(-0.2, 3pt) node[above left]{(b)};
\end{tikzpicture}
\hspace{0.5cm}
\begin{tikzpicture}[scale=0.65]
    \begin{axis}[
        axis line style = thick,
        ticks=none,
        domain=0:pi, 
        samples=100, 
        ymin=-7, 
        ymax=7,
        axis lines=center,
        xlabel={},
        ylabel={},
        no markers,
        smooth,
        name=border
    ]
    \addplot[color=red, line width=1pt, domain=0:90]{cot(x)};
    \addplot[color=red, line width=1pt, domain=90:179]{tan(x)};
    \addplot[color=blue, line width=1pt, domain=0:179]{0.4};
    \addplot [color=red, only marks, fill=white, mark size=2pt, mark options={thick}] coordinates {(90, 0)};
    \addplot [color=red, only marks, fill=white, mark size=2pt, mark options={thick}] coordinates {(119.25, -1.8)};
    \addplot [color=amethyst, only marks, mark=ball, mark size=2pt] coordinates {(0, 0)};
    \addplot [color=blue, only marks, fill=white, mark size=2pt, mark options={thick}] coordinates {(0, 0.4)};
    \addplot[color=gray, dashed, thick] coordinates {(29.25,-100) (29.25,100)};
    \addplot[color=gray, dashed, thick] coordinates {(90,-100) (90,100)};
    \addplot[color=gray, dashed, thick] coordinates {(119.25,-100) (119.25,100)};
    \addplot [color=red, only marks, mark=ball, mark size=2pt] coordinates {(119.25, 0)};
    \addplot [color=red, only marks, fill=white, mark size=2pt, mark options={thick}] coordinates {(119.25, -1.8)};
    \addplot [color=red, only marks, fill=white, mark size=2pt, mark options={thick}] coordinates {(179, 0)};
    \addplot [color=red, only marks, mark=ball, mark size=2pt] coordinates {(179, -1.8)};
    \end{axis}
    \path (border.north west) ++(-0.2, 3pt) node[above left]{(c)};
\end{tikzpicture}
\end{center}
\begin{center}
\begin{tikzpicture}[scale=0.65]
    \begin{axis}[
        axis line style = thick,
        ticks=none,
        domain=0:pi, 
        samples=100, 
        ymin=-7, 
        ymax=7,
        axis lines=center,
        xlabel={},
        ylabel={},
        no markers,
        smooth,
        name=border
    ]
    \addplot[color=red, line width=1pt, domain=0:90]{cot(x)};
    \addplot[color=red, line width=1pt, domain=90:179]{tan(x)};
    \addplot[color=blue, line width=1pt, domain=0:179]{0.4};
    \addplot [color=amethyst, only marks, mark=ball, mark size=2pt] coordinates {(0, 0)};
    \addplot [color=blue, only marks, fill=white, mark size=2pt, mark options={thick}] coordinates {(0, 0.4)};
    \addplot[color=gray, dashed, thick] coordinates {(29.25,-100) (29.25,100)};
    \addplot[color=gray, dashed, thick] coordinates {(90,-100) (90,100)};
    \addplot[color=gray, dashed, thick] coordinates {(119.25,-100) (119.25,100)};
    \addplot [color=red, only marks, mark=ball, mark size=2pt] coordinates {(90, 0)};
    \addplot [color=red, only marks, fill=white, mark size=2pt, mark options={thick}] coordinates {(90, -1.8)};
    \addplot [color=red, only marks, mark size=2pt] coordinates {(159.75, -1.8)};
    \addplot [color=red, only marks, fill=white, mark size=2pt, mark options={thick}] coordinates {(159.75, -0.4)};
    \end{axis}
    \path (border.north west) ++(-0.2, 3pt) node[above left]{(d)};
\end{tikzpicture}
\hspace{0.5cm}
\begin{tikzpicture}[scale=0.65]
    \begin{axis}[
        axis line style = thick,
        ticks=none,
        domain=0:pi, 
        samples=100, 
        ymin=-7, 
        ymax=7,
        axis lines=center,
        xlabel={},
        ylabel={},
        no markers,
        smooth,
        name=border
    ]
    \addplot[color=red, line width=1pt, domain=0:90]{cot(x)};
    \addplot[color=blue, line width=1pt, domain=90:179]{0.4};
    \addplot [color=red, only marks, mark=ball, mark size=2pt] coordinates {(0, 0)};
    \addplot [color=blue, only marks, fill=white, mark size=2pt, mark options={thick}] coordinates {(90, 0.4)};
    \addplot [color=blue, only marks, mark size=2pt] coordinates {(0, -0.4)};
    \addplot[color=gray, dashed, thick] coordinates {(29.25,-100) (29.25,100)};
    \addplot[color=gray, dashed, thick] coordinates {(90,-100) (90,100)};
    \addplot[color=gray, dashed, thick] coordinates {(119.25,-100) (119.25,100)};
    \addplot [color=amethyst, only marks, mark=ball, mark size=2pt] coordinates {(90, 0)};
    \addplot [color=red, only marks, mark size=2pt] coordinates {(159.75, -0.4)};
    \end{axis}
    \path (border.north west) ++(-0.2, 3pt) node[above left]{(e)};
\end{tikzpicture}
\end{center}

%% file: Figures/harmonicfig.tex
\begin{figure}
\resizebox{0.45\columnwidth}{!}{
\begin{quantikz}
\lstick[10]{$|L\rangle$} & \gate[10]{QFT} & \ctrl{4} & \ctrl{9} & \ctrl{5} & \gate{X} & \gate{H} & \meter{} \\
& & \targ{} & \targ{} & & & & \meter{} \\
& & \targ{} & \targ{} & & & & \meter{} \\
& & \targ{} & \targ{} & & & & \meter{} \\
& & \targ{} & \targ{} & & & & \meter{} \\
& & \ctrl[open]{-1} & \targ{} & \gate[5]{|x\rangle\rightarrow|x+1\rangle} & & & \rstick[5]{ $|h\rangle$} \\
& & \ctrl[open]{-1} & \targ{} & & & & \\
& & \ctrl[open]{-1} & \targ{} & & & & \\
& & \ctrl[open]{-1} & \targ{} & & & & \\
& & \ctrl[open]{-1} & \targ{} & & & & 
\end{quantikz}}
\caption{State-preparation circuit for a 5-qubit harmonic sequence state with 5 ancilla qubits.}
\label{harmonicfig}
\end{figure}
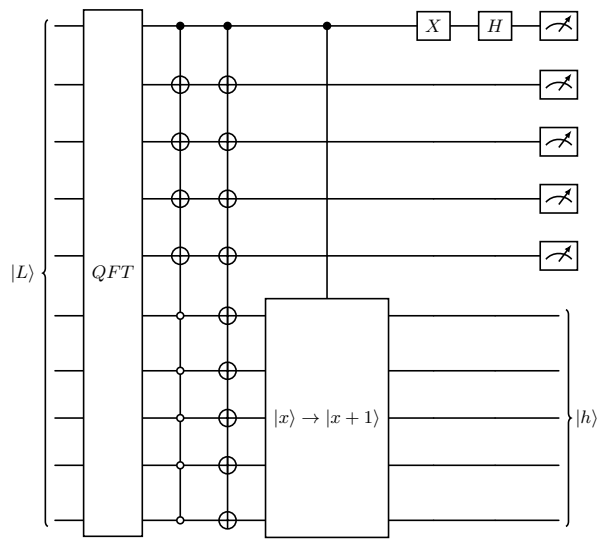

%% file: Figures/stateheat.tex
\begin{figure}
    \centering
    \includegraphics[width=0.325\linewidth]{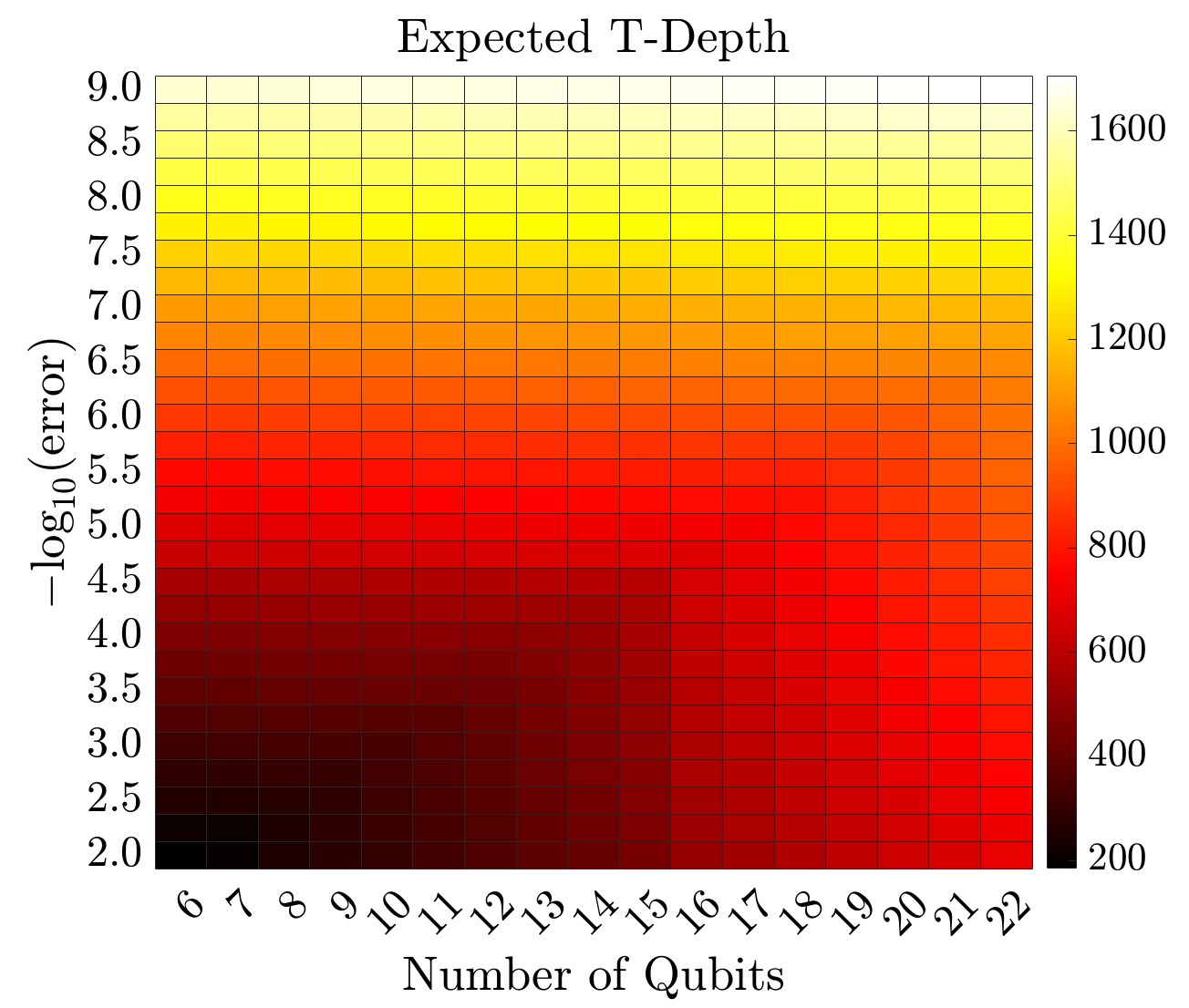}
    \includegraphics[width=0.325\linewidth]{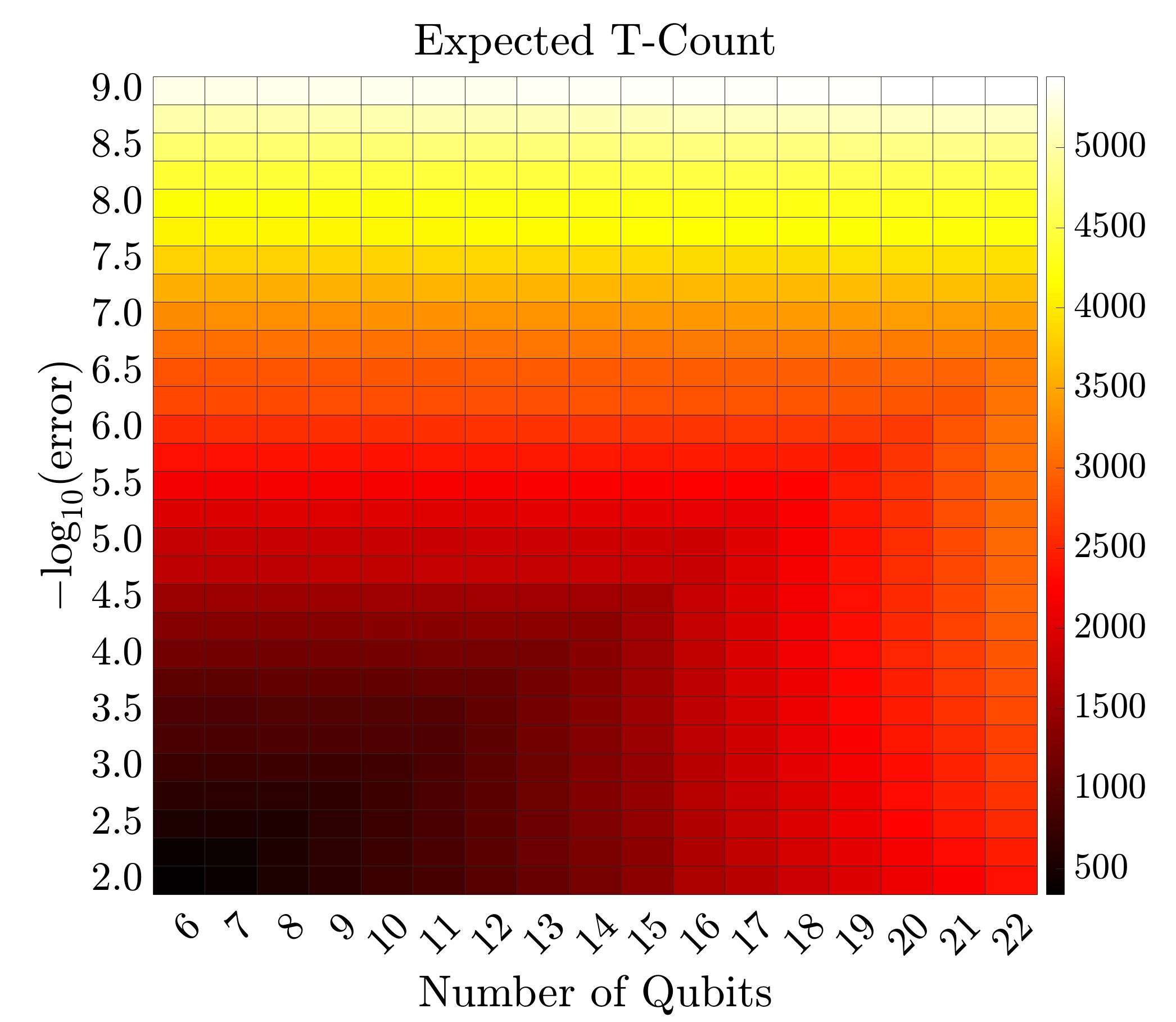}
    \includegraphics[width=0.325\linewidth]{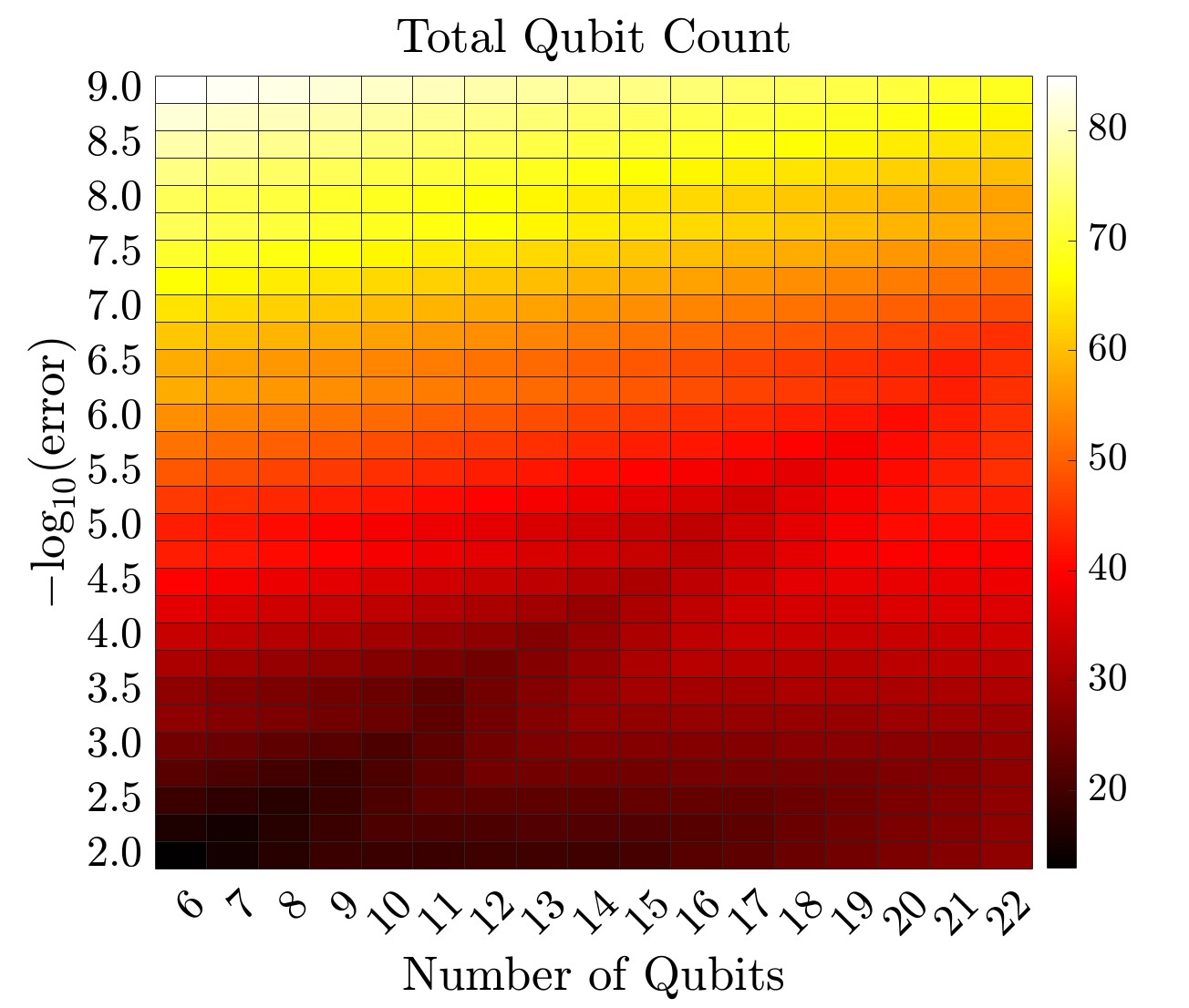}
\caption{(\textit{Left}) Expected T-depth of the harmonic sequence state-preparation as a function of number of data qubits $n$ and error $\epsilon$. (\textit{Center}) Expected T-count of the harmonic sequence state-preparation scheme. A parallelism of $12\times$ can be observed without the use of phase gradient QFTs which is possible in most fault-tolerant architectures. With phase kickback, more favorable T-counts are possible. (\emph{Right}) Total number of qubits (including all necessary ancilla for Toffolis, cotangent approximation, and RUS Clifford + T synthesis).}
    \label{harmonicplots}
\end{figure}

%% file: Sections/blockencoding.tex
\section{Harmonic sequence block-encoding}
\label{blockencoding}

We now consider an alternate question of block-encoding the harmonic sequence on a matrix diagonal, i.e. encoding the matrix 
\begin{equation}
\text{diag}\left( |h\rangle\right) =\sum _{x=1}^{N-1}\frac{1}{x}|x\rangle\langle x|
\end{equation}
In general, converting a quantum state directly to a diagonal matrix encoding will incur exponentially small subnormalization. However, because we produced $|h\rangle$ using a quantum Fourier transform, we exploit the \emph{convolution theorem} \cite{katznelson76} in the following way. 
\begin{lemma}
Let $\vec{v}\in\mathbb{C}^N$, let $(C)_{ij}=\vec{v}_{\left( i+j\text{ (mod }N)\right)}$, and let $QFT$ be the quantum Fourier transform matrix. Then $QFT(C)QFT^{\dag}=\text{diag}(QFT\vec{v})$.
\end{lemma}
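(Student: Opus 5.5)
The plan is a direct entrywise computation: expand $QFT\,C\,QFT^{\dag}$ as a double sum, substitute so that one index runs over $i+j$, and collapse the other index with the geometric-sum identity $\sum_{i=0}^{N-1}\omega^{ri}=N\,\delta_{r\equiv 0}$. Take $(QFT)_{ab}=\omega^{ab}/\sqrt{N}$ with $\omega=e^{2\pi i/N}$, the convention already used in Proposition \ref{cotagentapprox}.

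First, write
\begin{equation*}
\left(QFT\,C\,QFT^{\dag}\right)_{ab}=\frac{1}{N}\sum_{i,j=0}^{N-1}\omega^{ai}\,v_{(i+j)\bmod N}\,\omega^{-bj}.
\end{equation*}
Next, substitute $s=i+j \pmod N$ and sum over $(i,s)$ in place of $(i,j)$, which is a bijection on $\mathbb{Z}_N^2$. Then $\omega^{-bj}=\omega^{-bs}\omega^{bi}$, and the $i$-sum becomes $\sum_i\omega^{(a+b)i}=N\delta_{a+b\equiv 0}$. This leaves
\begin{equation*}
\delta_{a+b\equiv0}\sum_s\omega^{as}v_s=\sqrt{N}\,\delta_{a+b\equiv 0}\,(QFT\vec v)_a.
\end{equation*}
The same bookkeeping applied to $QFT\,C\,QFT$ gives exactly $\sqrt{N}\,\delta_{ab}(QFT\vec v)_a$, which is the diagonal form of the statement.

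The remaining step, and the main obstacle, is reconciling the indices with the statement as worded. I would use $QFT^{2}=R$, where $R|j\rangle=|-j \bmod N\rangle$ is the index reversal. This gives $QFT^{\dag}=QFT\,R$, so the computation says $QFT\,C\,QFT^{\dag}\,R=\sqrt{N}\,\text{diag}(QFT\vec v)$.

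Equivalently, the identity $QFT\,C\,QFT^{\dag}=\text{diag}(QFT\vec v)$ holds exactly as written once the column index of $C$ is read in the reflected labelling $j\mapsto -j$. Under that labelling, $v_{i+j}$ becomes the circulant entry $v_{i-j}$, which is precisely what the subsequent sawtooth-convolution application uses. The factor $\sqrt{N}$ is then absorbed into the paper's normalization convention $\lVert A\rVert_2=1$ from Section \ref{definitions}.

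Literally, with $C_{ij}=v_{i+j}$ and the standard $QFT^{\dag}$, the product is the anti-diagonal matrix $\sqrt{N}\,\delta_{a+b\equiv0}(QFT\vec v)_a$. It is diagonal only after right-multiplying by $R$, or for reversal-symmetric $\vec v$ in the diagonal-plus-reflection sense. I would state this convention explicitly in the proof rather than hide it. All other steps are routine.
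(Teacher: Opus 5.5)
Your entrywise computation is correct. It is the same argument the paper relies on: the paper gives no proof of this lemma beyond citing the convolution theorem, and the finite-dimensional proof of that theorem is exactly your geometric-sum collapse. Your conclusion is also correct, and it means the lemma as written is false, not merely loosely worded. For $C_{ij}=v_{(i+j)\bmod N}$ one has $QFT\,C\,QFT^{\dagger}=\sqrt{N}\,\text{diag}(QFT\vec v)\,R$. For $N\ge 3$ this is not diagonal: with $N=4$ and $\vec v=e_1$ it has the entry $i$ in position $(1,3)$. Even where it is diagonal, the unitary normalization adds a factor $\sqrt{N}$. The convolution theorem concerns the genuine circulant $v_{(i-j)\bmod N}$, not the Hankel-type matrix $v_{(i+j)\bmod N}$.

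The proposal goes wrong in the attempted reconciliation. Relabelling a contracted index consistently in $C$ and $QFT^{\dagger}$ changes nothing. Relabelling only the columns of $C$ replaces $C$ by $CR$, which is a different matrix, so what you prove is a corrected lemma, not the stated one. Nor does the paper's application secretly use $v_{i-j}$. In Appendix~\ref{circulant} the matrix is assembled as in equation~(\ref{circulanteqn}): $\text{diag}\{|L\rangle\}\cdot\mathbf{1}+\mathbf{1}\cdot\text{diag}\{|L\rangle\}$ is affine in $i+j$, and both $D$ and $X^{\otimes n}$ (the anti-identity) depend only on $i+j$. So the reflection cannot be absorbed, and it matters.

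Apply your own formula to the circuit $QFT^{\dagger}$, $C$, $QFT$ of Figure~\ref{circulantCircuit}, on $n+m$ qubits with $m\ge 1$ ancillas as the high-order qubits. The $|0\rangle^{\otimes m}$ block has indices $a,b<2^n$, so $a+b\equiv 0 \pmod{2^{n+m}}$ forces $a=b=0$. That single entry is $\sum_s v_s=0$ for the centred sawtooth, so the block vanishes identically.

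The statement you should prove instead is $QFT\,C\,QFT=\sqrt{N}\,\text{diag}(QFT\vec v)$ for $C_{ij}=v_{(i+j)\bmod N}$, or equivalently $QFT\,(CR)\,QFT^{\dagger}=\sqrt{N}\,\text{diag}(QFT\vec v)$. Your computation already establishes it. In the circuit this just means applying $QFT$ rather than $QFT^{\dagger}$ first, at identical cost.

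Keep the $\sqrt{N}$ explicit, or use the unnormalized DFT $F$ with $F^{-1}$ in place of $QFT^{\dagger}$. The convention $\lVert A\rVert_2=1$ is harmless for the block-encoding, but it does not rescue an exact matrix identity.

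Finally, drop the aside about reversal-symmetric $\vec v$. The matrix $\sqrt{N}\,\delta_{a+b\equiv 0}(QFT\vec v)_a$ is diagonal only when $QFT\vec v$ vanishes off $\{a: 2a\equiv 0\}$, i.e.\ when $\vec v$ is $2$-periodic.
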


Since we demonstrated that large enough $QFT|L\rangle$ produces an approximation to the harmonic sequence state, if we can block-encode a large enough \emph{linear circulant matrix} and conjugate that with quantum Fourier transforms, the result is an approximate block-encoding of $\text{diag}\left( |h\rangle\right)$. Here, we define the linear circulant matrix $C$ as
\begin{equation}
    (C)_{ij}= \frac{N-1}{2}-\left( i+j\text{ (mod }N)\right).
\end{equation}
The following result outlines the cost of block-encoding such a matrix:
\begin{proposition}
\label{Cblock}
For a given rotation synthesis error $\delta$, the circuit in Figure \ref{firstcirculant} block-encodes the $n$-qubit linear circulant matrix $C$ with subnormalization $\alpha\approx0.1061$, T-depth $6.9\log _2(1/\delta )+8n+24\log _2n+94.1$, and error $\epsilon\approx(\frac{1}{5}n+4)\delta$.
\end{proposition}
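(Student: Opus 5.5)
The circuit in Figure \ref{firstcirculant} is not shown here, so I first guess its structure. The natural construction is a linear combination of unitaries that exploits the additive structure $(C)_{ij}=\frac{N-1}{2}-(i+j \bmod N)$. Write the entry as $\frac{N-1}{2}-s$ with $s=i+j \bmod N$. The permutation $|j\rangle\mapsto|{-j}\rangle$ composed with shifts gives access to the index $i+j \bmod N$, so $C=P\,D'$. Here $P$ is a permutation (reflection $j\mapsto -j$, implementable by bit flips plus an incrementer) and $D'$ is a Toeplitz/circulant-type operator acting diagonally after a QFT. Alternatively, and more directly, $\frac{N-1}{2}-s=\sum_{k=0}^{n-1}2^{k}\cdot\frac{1}{2}(1-2s_k)$ is a sum of $n$ signed bits. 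So I would block-encode each bit sign $(-1)^{s_k}$ of the sum register: compute $s=i+j$ into the row index via a modular adder/controlled shift, then weight bit $k$ by $2^k$ using a state-preparation register. In LCU form:
\[
C=\frac12\sum_{k=0}^{n-1}2^k\,Z_k^{(s)}\,\Pi ,
\]
where $\Pi$ is the permutation taking column $j$ to the row $s$. The PREP state $\propto\sum_k 2^{k/2}|k\rangle$ is a geometric-amplitude state, prepared with a constant number of rotations on a unary or binary register. That is the likely source of the $6.9\log_2(1/\delta)\approx 6\times1.15\log_2(1/\delta)$ term: six arbitrary rotations, each costing about $1.15\log_2(1/\delta)+9.2$ T. Their $6\times 9.2=55.2$ contribution, plus Toffoli and controlled-Hadamard constants, would account for the additive $94.1$.

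The proof then proceeds in three checks. \emph{Correctness:} verify that $\langle 0|V|0\rangle=\alpha C/\lVert C\rVert_2$. Since $C$ is circulant up to a reflection, its singular values are $|\hat v_k|$ from Proposition \ref{cotagentapprox}. The largest is at $k\approx 1$, namely $\lVert C\rVert_2\approx\frac{1}{|\omega-1|}\approx N/(2\pi)$. The LCU one-norm is $\frac12\sum_k 2^k\approx N/2$, so $\alpha\approx\frac{N/(2\pi)}{N/2}\cdot(\text{PREP loss})$. The ratio $1/\pi\approx0.318$ times a further factor of about $1/3$ from the geometric-amplitude preparation (or an extra Hadamard-sandwiched ancilla) should give $\alpha\approx 0.1061=1/(3\pi)$. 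I would confirm the exact constant by computing $\lVert C\rVert_2$ as $N\to\infty$ and multiplying out the success amplitudes of each ancilla stage.

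\emph{Cost:} add up the T-depth of the components. The controlled $Z_k$ selection costs $O(n)$ depth, which together with the adder/reflection gives the $8n$ (using $4n+4$ incrementers as in \cite{camps22}). The $24\log_2 n$ comes from a few multi-controlled gates with $4\lceil\log_2 n\rceil$ depth each, and the rotation count gives $6.9\log_2(1/\delta)$. \emph{Error:} apply the triangle inequality to the synthesized rotations. Each rotation error $\delta$ propagates to the encoded block scaled by its weight relative to $\alpha$. The linear $\frac{1}{5}n$ term must come from rotations indexed by $k$ whose individual contributions are damped, while the constant $4$ comes from the fixed stages.

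The main obstacle is the error accounting, specifically explaining the coefficient $\frac{n}{5}+4$. I would first compute the operator-norm perturbation of the block for each rotation separately, normalized by $\lVert C\rVert_2$. I would then sum these, checking numerically for small $n$ that the slope in $n$ is about $1/5$, since the constants in this proposition look empirically fitted rather than derived in closed form.
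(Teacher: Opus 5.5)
\textbf{There is a genuine gap: your LCU does not encode $C$.} In $C=\frac12\sum_{k}2^k Z_k^{(s)}\Pi$ there is a single permutation $\Pi$, and the $Z_k^{(s)}$ are diagonal. So the right-hand side equals $\bigl(\frac12\sum_k 2^k Z_k\bigr)\Pi$, a diagonal matrix times a permutation, with exactly one nonzero entry per column. But $C$ is dense: since $N$ is even, every entry $\frac{N-1}{2}-(i+j \bmod N)$ is a nonzero half-integer. ``Computing $s=i+j$ into the row index'' is not possible from the input $|j\rangle$, because $i$ is the output index and must range over all $N$ values.

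The missing ingredient is a mechanism that couples every column to every row. In the paper this is the all-ones matrix $\mathbf{1}$, block-encoded as an LCU of a Grover reflection with the identity. It enters through the decomposition $C=\mathrm{diag}\{|L\rangle\}\cdot\mathbf{1}+\mathbf{1}\cdot\mathrm{diag}\{|L\rangle\}-D-\frac{N+1}{N-1}X^{\otimes n}$. The wraparound correction $D$ is a geometric sum of $X^{\otimes k}\otimes R\otimes(G_{n-k}-I)$. The terms are weighted by a two-qubit PREP that tends to $\frac{1}{\sqrt3}(|00\rangle+|01\rangle+|10\rangle)$, which suppresses $X^{\otimes n}$.

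Within your own framework, the repair is an extra $n$-qubit register in uniform superposition plus a controlled reflection $P_s:|j\rangle\mapsto|s-j\rangle$, giving $C=\frac12\sum_{k,s}2^k(-1)^{s_k}P_s$. That is a different circuit from Figure \ref{firstcirculant}: it needs an $n$-bit adder in SELECT, and its subnormalization $\lVert C\rVert_2/\lambda=1/((N-1)\sin(\pi/N))$ tends to $1/\pi$, not $0.1061$. So it would not establish this proposition's constants.

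\textbf{Your subnormalization check contains two cancelling errors.} The singular values of $C$ are the unnormalized DFT magnitudes $|\hat v_k|=N/|\omega^k-1|$, so $\lVert C\rVert_2=N/(2\sin(\pi/N))\approx N^2/(2\pi)$, not $N/(2\pi)$. You used the QFT amplitudes of Proposition \ref{cotagentapprox} in place of eigenvalues. You then divided by the one-norm $N/2$ of an LCU that does not encode $C$. The two factor-of-$N$ errors cancel to $1/\pi$, and the remaining factor $1/3$ is inserted to reach $1/(3\pi)$ rather than derived. In the paper, the $1/3$ comes from the three-term PREP: the block's largest entry is about $1/(3N)$, while the unit-norm $C$ has largest entry about $\pi/N$.

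\textbf{The T-depth attribution is also fitted to the target rather than read off a circuit.} In the paper:
\begin{itemize}
\item $6.9\log_2(1/\delta)$ is the PREP rotation layers ($4.6\log_2(1/\delta)+41.4$) plus the doubly controlled rotation implementing $\mathcal{R}$ ($2.3\log_2(1/\delta)+20.7$).
\item $8n$ is a single doubly controlled ``incrementer'' $\mathcal{CZ}$ of depth $8n-4$, not two incrementers.
\item $24\log_2 n$ collects six terms of roughly $4\log_2 n$ each: two controlled Grover reflections, the doubly controlled SWAP pairs, two Hadamard accumulators, and the $X$ accumulator.
\end{itemize}
Only the error coefficient $\frac15 n+4$ is a numerical fit in the paper, so fitting it numerically is reasonable, but only once the simulated circuit actually encodes $C$.
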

{\bf Proof:} See Appendix D. \\
In analogy to lemma \ref{secondpass}, we codify the approximation error on the $n$-qubit diagonal harmonic sequence block-encoding arising from conjugating the linear circulant matrix with a quantum Fourier transform over $n+m$ qubits.
\begin{lemma}
If $QFT$ is a $n+m$-qubit quantum Fourier transform and $C$ is an $n+m$-qubit linear circulant matrix, let $C$ be the top-left $n$-qubit block of $QFT(C)QFT^{\dag}$ scaled to unit norm. Then $\lVert C -i\ \textup{diag}\left( |h\rangle\right)\rVert\approx\pi/2^{n+m}$.
\end{lemma}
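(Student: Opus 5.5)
Using the preceding convolution lemma with $\vec v=\vec L$, where $\vec L_x=\frac{MN-1}{2}-x$ is the unnormalized linear vector on $n+m$ qubits, we get $QFT(C)QFT^\dag=\text{diag}(QFT\vec L)$. Proposition~\ref{cotagentapprox} (rescaled by $\lVert\vec L\rVert=\sqrt{MN(M^2N^2-1)/12}$) gives the explicit eigenvalues: zero at index $0$, and
\[
\lambda_k=\frac{\sqrt{MN}}{2}\Bigl(1+i\cot\bigl(\tfrac{\pi k}{MN}\bigr)\Bigr)
\]
for $1\le k\le MN-1$. The top-left $n$-qubit block is therefore the diagonal matrix $D=\text{diag}(0,\lambda_1,\dots,\lambda_{N-1})$. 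Any index convention or reversal introduced by the $i+j$ form of $C$ only conjugates or permutes indices, and I will fix this so that the asymptote sits at small $k$.

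Next I normalize. The operator norm of a diagonal matrix is its largest entry modulus, here $|\lambda_1|\approx\frac{\sqrt{MN}}{2}\cdot\frac{MN}{\pi}$. Dividing through, the normalized block has entries
\[
d_k=\frac{1+i\cot(\pi k/MN)}{|1+i\cot(\pi/MN)|},
\]
and I compare these with $i/k$. The norm of the difference is $\max_{1\le k\le N-1}|d_k-i/k|$, so the argument reduces to a scalar maximization rather than the $\ell^2$ sum used in Lemmas~\ref{firstpass} and~\ref{secondpass}.

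Write $\theta=\pi/(MN)$. Expanding $\cot(k\theta)=\frac{1}{k\theta}-\frac{k\theta}{3}+O(k^3\theta^3)$ and $|1+i\cot\theta|^{-1}=\sin\theta=\theta-O(\theta^3)$ gives
\[
d_k-\frac{i}{k}=\theta-\frac{i k\theta^2}{3}+O(\theta^2)\text{-order corrections}.
\]
The dominant error is the real part $\theta=\pi/2^{n+m}$, uniformly in $k$, because the constant real offset is no longer removed here, in contrast to Lemma~\ref{secondpass}. The imaginary error grows like $k\theta^2/3\le N\theta^2/3\approx\frac{\pi^2}{3M}\,\theta$, which is subleading when $M$ is large. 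Combining the two, the maximum over $k$ of $\sqrt{\theta^2+(k\theta^2/3)^2}$ is approximately $\theta=\pi/2^{n+m}$, which is the claimed bound.

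The index-$0$ entry is zero on both sides and contributes nothing. The \textbf{main obstacle} is controlling the cotangent expansion uniformly up to $k=N-1$, where $k\theta\approx\pi/M$ need not be small when $m$ is small. I would bound $|\cot x-1/x|\le x/3+O(x^3)$ on $(0,\pi/2]$ to show that the correction stays below $\theta$ for $M\ge2$, treating $m\ge1$ as the regime in which the ``$\approx$'' holds. I would also check the index convention of $C$, $(i+j)$ versus $(i-j)$, so that the diagonal's asymptote lands at small $k$, applying an index reversal if needed.
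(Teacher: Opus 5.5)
The paper gives no proof of this lemma; it is only asserted by analogy with Lemma~\ref{secondpass}. There is therefore no argument to compare against, and your derivation is correct in substance and supplies what the paper omits. You reduce the spectral norm of a diagonal difference to $\max_k|d_k-i/k|$, with $d_k=\sin\theta\,(1+i\cot k\theta)$ and $\theta=\pi/(MN)$. You then observe that the real offset $\sin\theta$ is uniform in $k$ and, unlike in Lemma~\ref{secondpass}, never cancelled. That is exactly what produces $\pi/2^{n+m}$. One step, however, fails as written, and the estimates can be tightened.

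\textbf{The diagonalization is not a relabeling.} The paper defines $(C)_{ij}=v_{(i+j)\bmod MN}$, and this is what Appendix~\ref{circulant} actually builds from $\text{diag}\{|L\rangle\}\cdot{\bf 1}+{\bf 1}\cdot\text{diag}\{|L\rangle\}$. For this $C$, a direct computation gives $(QFT\,C\,QFT^{\dagger})_{kl}=\mu_k$ if $k+l\equiv 0\pmod{MN}$ and $0$ otherwise, where $\mu_k=\sum_s\omega^{ks}v_s=\sqrt{MN}\,\lambda_k$. So $QFT\,C\,QFT^{\dagger}=\text{diag}(\mu)R$ with $R|l\rangle=|{-l}\bmod MN\rangle$. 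This is a one-sided permutation, not a conjugation. For $m\ge1$ every entry of the top-left $N\times N$ block therefore vanishes (the only candidate is $\mu_0=0$). The block cannot be scaled to unit norm, so the literal statement is ill-posed.

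The convolution lemma you invoke holds only for the genuine circulant $v_{(i-j)\bmod MN}$, i.e.\ for $CR$. Equivalently it holds for the sandwich $QFT\,C\,QFT=\text{diag}(\mu)$, since $QFT^2=R$. You need to say explicitly that you are analyzing one of these corrected operators. After that, your computation applies verbatim.

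\textbf{Uniform control.} First, a small slip: $N\theta^2/3=\frac{\pi}{3M}\theta$, not $\frac{\pi^2}{3M}\theta$. More usefully, you do not need the Laurent expansion to handle $k$ near $N$. Set $h(k)=1/k-\sin\theta\cot k\theta$. Then $h(1)=1-\cos\theta>0$, and $h'(k)>0$ because $\sin(k\theta)/(k\theta)\le\sin\theta/\theta<\sqrt{\sin\theta/\theta}$. Hence the norm is exactly $\sqrt{\sin^2\theta+h(N-1)^2}$.

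Now write $g(x)=1/x-\cot x$, which is positive and increasing with $g(\pi/2)=2/\pi$. Then $h(N-1)=\theta\,g(\pi/M)+O(\theta^2)$. Since the imaginary deviation is orthogonal to the real offset, the norm is $\theta\sqrt{1+g(\pi/M)^2}+O(\theta^2)$, i.e.\ $\theta\,(1+O(M^{-2}))$ up to $O(\theta^2)$. For large $n$ this is about $1.19\,\theta$ at $m=1$ and $1.04\,\theta$ at $m=2$, which quantifies the ``$\approx$''. It also shows the bound fails outright at $m=0$, where $h(N-1)\approx 1$.
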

We must also account for rotation synthesis error arising from the quantum Fourier transform similar to proposition \ref{approxQFT}; we state here a numerical calculation from Appendix \ref{QFT_appendix}:
\begin{lemma}
\label{approxQFTBLOCK}
If $\widetilde{QFT}$ approximates the $n$-qubit quantum Fourier transform with rotation synthesis error $\delta$ on each rotation gate, and $C$ is a unit-norm linear circulant matrix, then $\lVert QFT(C)QFT^{\dag} -\widetilde{QFT}(C)\widetilde{QFT}^{\dag}\rVert\approx \frac{2}{3}n\delta$.
\end{lemma}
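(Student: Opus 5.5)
We argue at the level of operator norms, using the same first-order error model that underlies Proposition \ref{approxQFT}. Write $\widetilde{QFT}=QFT+E$, where $E=\sum_j E_j$ collects the first-order perturbations coming from the individual controlled-rotation gates. The QFT on $n$ qubits contains about $n(n-1)/2$ controlled rotations, but as in Appendix \ref{QFT_appendix} the small-angle rotations are truncated, so only $O(n)$ of them are synthesized. Each synthesized gate is replaced by an approximation at distance $\delta$, and we treat its error as a small perturbation $\Delta_j$ with $\lVert\Delta_j\rVert\le\delta$, conjugated by the exact gates before and after it in the circuit. The first step is the telescoping expansion
\begin{equation}
\widetilde{QFT}\,C\,\widetilde{QFT}^\dag-QFT\,C\,QFT^\dag = E\,C\,QFT^\dag+QFT\,C\,E^\dag+O(\delta^2).
\end{equation}
The triangle inequality then gives the crude bound $2\lVert C\rVert\,\lVert E\rVert\le 2\sum_j\lVert\Delta_j\rVert$.

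A crude bound of the form $2\times(\#\text{rotations})\,\delta$ would overshoot the claimed constant, so the second step is to use structure. Since $QFT\,C\,QFT^\dag=D$ is diagonal (by the convolution lemma), each first-order term can be written as $QFT\,C\,QFT^\dag$ conjugated by a perturbed gate. Concretely, if $\widetilde{QFT}=QFT\cdot\prod_j(I+QFT_j^{-1}\Delta_j\ldots)$, we obtain commutator-type terms $[K_j,D]$, where $K_j$ is the skew-Hermitian generator of the $j$-th rotation error expressed in the Fourier basis. A rotation error is essentially a phase error $e^{i\eta Z}$ on a single qubit, controlled by another, with $|\eta|\approx\delta$. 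It is therefore diagonal in the computational basis at its position in the circuit. This means $[K_j,D]$ only picks up the differences of the cotangent eigenvalues $D_{kk}$ across the pairs of indices that $K_j$ couples.

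The third step is to estimate $\lVert[K_j,D]\rVert$ using the explicit spectrum from Proposition \ref{cotagentapprox}, normalized to $\lVert C\rVert=1$. A rotation at a qubit of significance $2^{-\ell}$ contributes a commutator norm that decays geometrically in $\ell$. Summing over qubits and over the retained rotations, the sum over each qubit's contributions approaches a constant, giving a total linear in $n$. The coefficient should come out as $\tfrac{2}{3}$, in the same way that the state version gives $\tfrac{n}{2}-\tfrac43$. Because this constant depends on how the synthesis errors are oriented, the paper describes the result as a numerical calculation. Our plan is therefore to derive the linear-in-$n$ scaling analytically and then fix the constant $\tfrac23$ by simulating random $\delta$-perturbations of each rotation for moderate $n$ and fitting the slope.

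The main obstacle is this third step. Rotation errors do not act independently: their phases can add coherently or incoherently, and the commutator with an unbounded-looking cotangent spectrum (whose large entries sit near $k=0$ and $k=N$) must be controlled in operator norm rather than on a single vector. We would first try bounding each commutator by the maximal eigenvalue jump between coupled indices, and fall back on the numerics for the exact constant.
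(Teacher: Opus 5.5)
The paper gives no analytic argument here. The lemma is reported as a fit to numerical simulations of the QFT of Appendix~\ref{QFT_appendix} with each rotation synthesized to error $\delta$. Your fallback (simulate random $\delta$-perturbations and fit the slope) is therefore the paper's entire proof. The analytic derivation you put in front of it would not go through.

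\textbf{The premise is wrong.} The QFT of Figure~\ref{qftFIG} keeps all $n(n-1)/2$ controlled rotations; no truncation error appears anywhere in the paper's error budget. Only the controlled-$S$ and controlled-$T$ gates are exact, so $(n-2)(n-3)/2$ rotations carry synthesis error. The T-depth is linear only because rotations sharing a control run in parallel on separate Kim--Choi ancillas. In that gadget the synthesized gate also acts on the ancilla when the control is off, and its error is not purely a phase.

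\textbf{The contributions do not decay geometrically.} That an error is diagonal at its position in the circuit does not help, because $D$ is diagonal only in the output basis. An angle error $\eta$ on the rotation pairing input bit $a$ with output bit $b$ ($a+b\le n-4$) multiplies the kernel $\omega^{xy}$ by $e^{i\eta x_a y_b}$. To first order it therefore contributes $i\eta\,Y_b[\hat X_a,D]$, where:
\begin{itemize}
\item $Y_b$ is the projector onto $y_b=1$;
\item $\hat X_a=QFT\,X_a\,QFT^\dagger$ is the circulant with symbol $x\mapsto x_a$, coupling $y$ to $y+m2^{n-1-a}$ ($m$ odd) with total squared weight $1/4$.
\end{itemize}
In the column of the extreme eigenvalue $k=N-1$, these entries meet eigenvalue gaps of size about $1$, and $y_b=1$ on all the rows involved. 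Hence every such term has norm $\Theta(\eta)$, whatever $a$ and $b$ are.

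\textbf{Consequence.} No argument that uses only $\lVert\Delta_j\rVert\le\delta$ together with the spectrum of $D$ can give a bound linear in $n$:
\begin{itemize}
\item If all angle errors are equal, the $k=N-1$ column of the first-order error has squared norm of order $\sum_{a=0}^{n-4}(n-3-a)^2\eta^2\sim n^3\eta^2$. Coherently aligned errors thus make the conjugation error grow like $n^{3/2}\delta$.
\item With independent random signs, the same column has squared norm about $\tfrac14\sum_a(n-3-a)\delta^2\approx n^2\delta^2/8$, i.e.\ $\Theta(n\delta)$.
\end{itemize}
So the $\tfrac23 n\delta$ law describes independent, randomly oriented synthesis errors adding in quadrature over $\Theta(n^2)$ gates. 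It is not a per-qubit geometric series. An analytic justification would need an explicit stochastic error model and a variance computation of this kind. Without that, the statement rests, as in the paper, on the simulation.
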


With these results, we are able to compile a total block-encoding circuit for $\text{diag}\left( |h\rangle\right)$, as seen in Figure \ref{circulantCircuit}. The T-depth is given by summing the T-depth of two quantum Fourier transforms each with gate error $\delta _0$ and the T-depth of block-encoding $C$ with gate error $\delta _1$. We bound the total error of the block-encoding $\epsilon$ via triangle inequality by adding the cotangent approximation from proposition \ref{cotagentapprox}, the QFT approximation error from lemma \ref{approxQFTBLOCK}, and the circulant matrix error from proposition \ref{Cblock}. Since both T-depth and total error $\epsilon$ depend on $\delta _0$, $\delta _1$, and the number of persistent ancilla qubits $m$, for fixed $n$ and $\epsilon$ we minimize T-depth with respect to these parameters. Based on numerical data we estimate that $\delta _0=2\delta _1$ is approximately optimal. The subnormalization $\alpha\approx 0.1061$ on the diagonal harmonic block-encoding carries over from the block-encoding of the linear circulant matrix. See Appendix \ref{circulant} for more details. Again, approximately $70\%$-$80\%$ of the T-depth in Figure \ref{circulantCircuit} comes from the two QFT implementations (e.g. for a 20-qubit diagonal harmonic sequence block-encoding $82\%$ of the T-depth comes from QFT).

\input{Figures/circulantCircuit}

\input{Figures/diagonalCounts}

%% file: Figures/circulantCircuit.tex
\begin{figure}
\centering
\resizebox{0.5\columnwidth}{!}{
\begin{quantikz}
\lstick[5]{$|0\rangle^{\otimes5}$} & \gate[10]{QFT^\dag} & \gate[10]{C} & \gate[10]{QFT} & \meter{} \\
& & & & \meter{} \\
& & & & \meter{} \\
& & & & \meter{} \\
& & & & \meter{} \\
\lstick[5]{$|\psi\rangle$} & & & & & \rstick[5]{$\mbox{diag}(|h\rangle)|\psi\rangle$} \\
& & & & & \\
& & & & & \\
& & & & & \\
& & & & &
\end{quantikz}}
\caption{Block-encoding circuit for $\text{diag}(|h\rangle )$ with $n=m=5$.}
\label{circulantCircuit}
\end{figure}
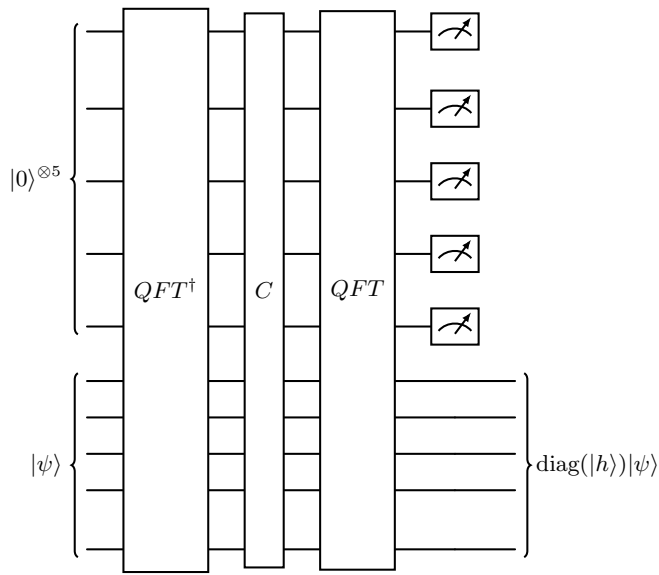

%% file: Figures/diagonalCounts.tex
\begin{figure}
    \centering
    \includegraphics[width=0.45\linewidth]{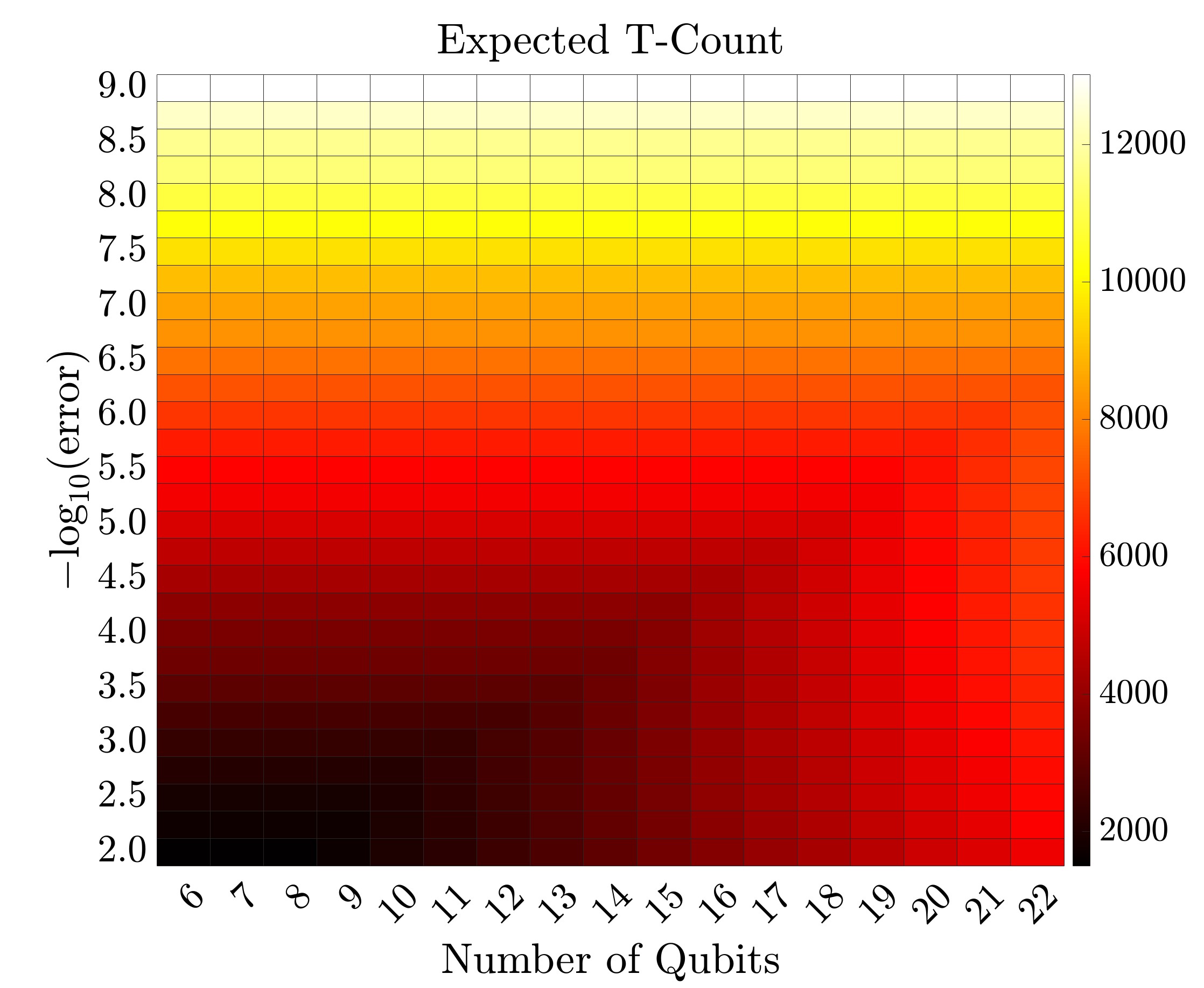}
    \includegraphics[width=0.45\linewidth]{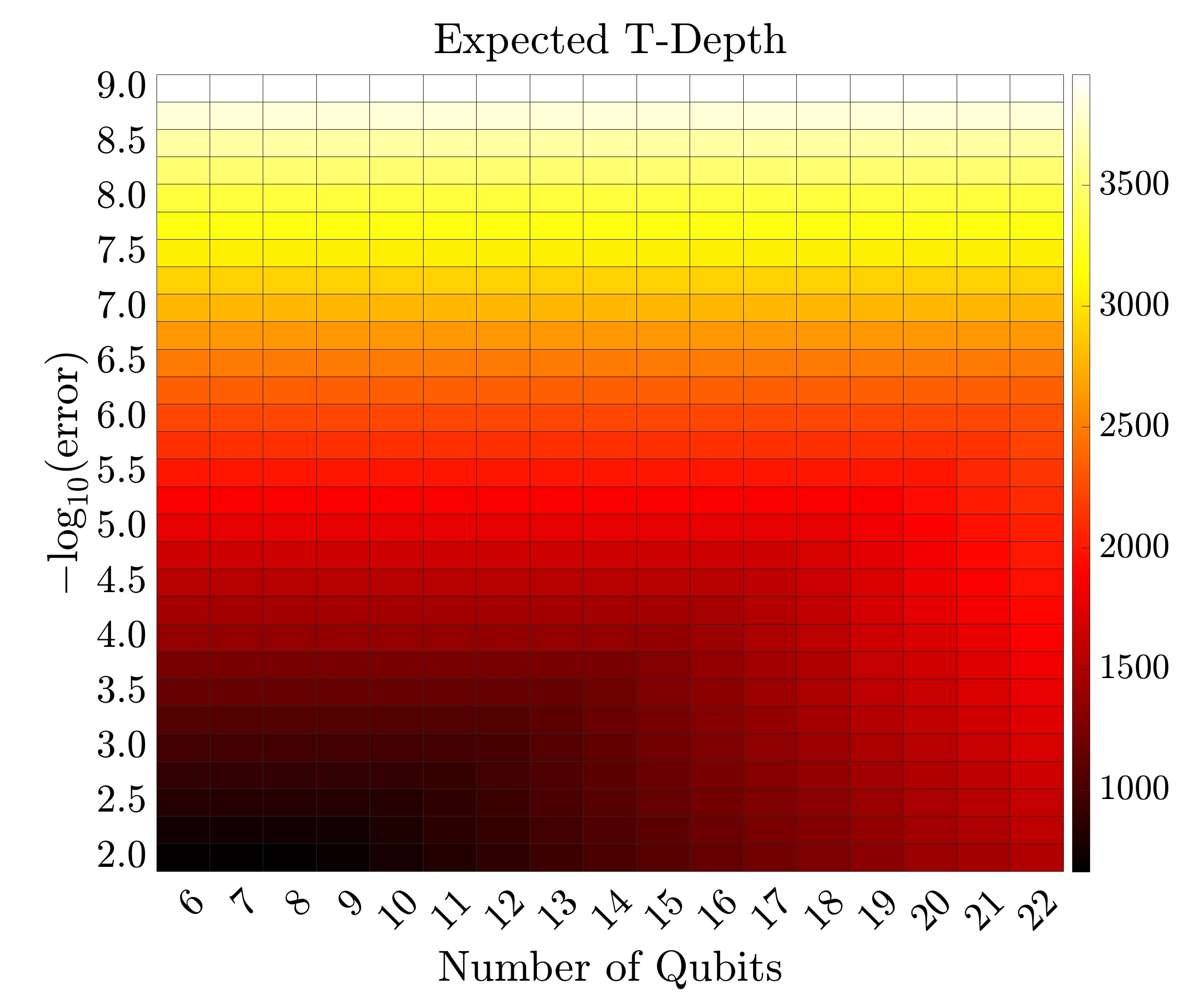}
    \caption{(\textit{Left}) Expected T-depth of the diagonal harmonic sequence block-encoding as a function of number of data qubits $n$ and precision $\epsilon$. (\textit{Right}) Expected T-count with the same parameters. As described in Appendix \ref{QFT_appendix} the QFTs used for T-depth and T-count are different. Phase kickback is used for counts while not for the depth.}
    \label{diagonalCounts}
\end{figure}

%% file: Sections/ode.tex
\section{Application to differential equations}
\label{ode_section}

In this section, we apply the diagonal harmonic sequence block-encoding to a typical nonlinear quantum differential equation solver as proposed by Liu et. al. \cite{liu21}; we demonstrate that the inclusion of resource estimates to the harmonic sequence component do not materially impact the overall algorithm.

The algorithm as outlined by Krovi \cite{krovi23} effectively transforms a nonlinear differential equation into a large linear system, the solution of which contains information about the differential equation. First, we begin with a nonlinear differential equation (with quadratic nonlinearity)
\begin{equation}
\label{nonlinear}
    \frac{d\vec{u}}{dt}=F_0+F_1\vec{u}+F_2\vec{u}^{\otimes 2}.
\end{equation}

Then we can use a procedure called \textit{Carleman linearization} \cite{carleman1932application} to convert equation (\ref{nonlinear}) into the linear differential equation
\begin{equation}
\label{linearode}
    \frac{d\vec{x}}{dt}=A\vec{x}+\vec{b}.
\end{equation}
The closed form solution to equation (\ref{linearode}) is simply $\vec{x}(t)=e^{At}\vec{x}(0)+\left( e^{At}-I\right) A^{-1}\vec{b}$. To access this on a quantum computer, Berry et. al. \cite{berry17} and Krovi \cite{krovi23} both construct a matrix system $L\vec{y}=\vec{x}$ where $L$ contains blocks of $A$. We refer to this as the discretization step. To obtain the solution state $\vec{y}$, first use QSVT to invert a block-encoding of $L$, then apply this inverted $L$ to the initial state $\vec{x}$. The inversion step requires several iterations of the block-encoding of $L$, which dominates resource estimates.

Mirroring the bootstrapping approach to building $L$, we can construct a block-encoding of $L$ by attaching circuitry to block-encodings of $F_0$, $F_1$, and $F_2$ as in Figure \ref{Lblock}. From \cite{kuklinskiODE}, the Carleman linearized matrix $A$ can be written as a linear combination of tensor products, each involving one of either $F_0$, $F_1$, or $F_2$. Thus, a block-encoding for $A$ requires instantiating the $F$-matrices on separate basis states of an ancilla register, surrounding these with circuitry which both permutes the blocks and moves them through tensor levels as well as PREP oracles to conduct an LCU. Similarly equation (8) from Berry et. al. \cite{berry17} shows that $L$ can be expressed as a sum of the identity matrix, a permutation of the diagonal harmonic sequence matrix tensored with $A$, and a ``top-row" matrix tensored with a permutation matrix. Once again, we find that $L$ can be block-encoded by surrounding the previously constructed block-encoding of $A$ with relevant circuitry, importantly including the diagonal harmonic sequence block-encoding.

\input{Figures/ode}

We now provide a loose resource count for the block-encoding of $L$ in the context of a computational fluid dynamics problem and demonstrate that the primary bottleneck is the resources involved in block-encoding the $F$-matrices. From Penuel et. al. \cite{penuel24}, the Carleman linearization for fluid dynamics only requires $K=3$ steps, so the contribution of this portion is minimal. Resources from the discretization step are dominated by the diagonal harmonic sequence block-encoding; we assume a degree $d=128$ polynomial approximation or a seven qubit diagonal harmonic sequence block-encoding. Approximating to $\epsilon =10^{-9}$ error, this block-encoding uses $5\ 300$ T-gates. Meanwhile, from \cite{penuel24} for an $2^nQ\approx 6\times 10^{17}$ element problem each of the $F$-matrices from table 13 cost $3.3\times 10^4$, $7.9\times 10^5$, and $2.2\times 10^7$ T-gates respectively. As such, we expect the total cost of a typical differential equations application to be dominated by repetitions of the $F$-matrix block-encodings, and not meaningfully impacted by contributions from the Carleman linearization and discretization steps.

%% file: Figures/ode.tex
\begin{figure}
\centering
\begin{quantikz}
& \qwbundle{g(d=128)} &&&& \gate[5]{Discretization_1}\gategroup[5, steps=7,style={dashed,rounded corners,inner sep=10pt}]{Block-encoding of $L$} & & & & & & \gate[5]{Discretization_2} && \\
& \qwbundle{2} &&&&& \gate[4]{Carleman_1}\gategroup[4,steps=5,style={dashed,rounded corners}]{Block-encoding of $A$} & \ctrl{3} \wire[][1]["\hspace{0.3cm}0"{above,pos=0.25}]{a} & \ctrl{3} \wire[][1]["\hspace{0.3cm}1"{above,pos=0.25}]{a} & \ctrl{2} \wire[][1]["\hspace{0.3cm}2"{above,pos=0.25}]{a} & \gate[4]{Carleman_2} & && \\
& \qwbundle{f(K=3)} &&&&& & & & & & && \\
& \qwbundle{n} &&&&& & & & \gate[2]{F_2} & & && \\
& \qwbundle{n} &&&&& & \gate{F_0} & \gate{F_1} & & & &&
\end{quantikz}
\caption{Schematic depiction of the block encoding of $L$. The number of ancilla qubits needed for $A$ and $L$ are labeled as $f(K=3)$ and $g(d=128)$ as they will be functions of $K$ and $d$ respectively.}
\label{Lblock}
\end{figure}
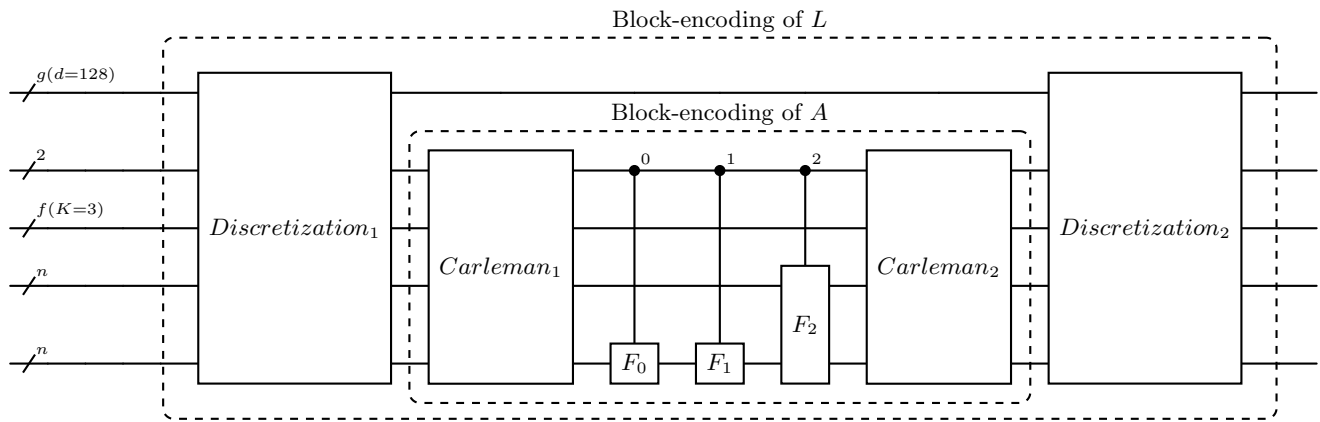

%% file: Sections/conclusion.tex
\section{Conclusion}
\label{Conclusion}

We have developed a QFT-based method for preparing a quantum state with amplitudes proportional to a harmonic sequence. We also modified this technique to produce a block-encoding of a diagonal matrix with entries proportional to the harmonic sequence. Due to the minimal nature of these circuits, resource estimates are dominated by the quantum Fourier transform. Despite the fact that the harmonic sequence state in some sense approaches a delta potential as the input space grows larger, our analysis demonstrates the need for more resources as the accuracy requirement grows.

The authors hope that this work, along with \cite{kuklinski25}, encourages researchers in the field to explore alternative techniques for state-preparation algorithms, especially those that exploit mathematical properties of a target function. For example, this QFT-based approach can be extended to many other situations where the target state may be difficult to produce otherwise but the initial state is relatively easy to prepare; to this end we mention inverse monomials with integer powers $x^{-n}$ (applying QFT to a modification of the polynomial state $x^n$) and sinc states (applying a QFT to a pulse state prepared only using Hadamards). It also may be possible to prepare the harmonic sequence state without a quantum Fourier transform exploiting other properties of the function (e.g. if $f(x)=1/x$ then $f(2x)=f(x)/2$).

%% file: Appendices/rotations.tex
\section{Errorless rotation states}
\label{rotations}

A critical component of the linear state preparation circuit is a widget that prepares a power of $2$ exponential state $|e_{1/2}\rangle\propto\sum _{x=0}^{N-1}2^{-x}|x\rangle$. This can be accomplished by considering the tensor product
\begin{equation}
\label{expprep}
    |e_{1/2}\rangle =\bigotimes _{j=1}^n\left( \frac{|0\rangle +2^{-2^j}|1\rangle}{\sqrt{1+2^{-2^{j+1}}}}\right)
\end{equation}
where new terms multiply on the left so the index $k$ counts qubits from the bottom. Each of these component states could be approximated by a synthesized rotation gate, however these are typically expensive and necessarily incur errors. Instead, we note that these particular rotations have amplitude ratios in the ring $\mathbb{Q}[i,\sqrt{2}]$, which is exactly the space covered by Clifford+T. Therefore, we aim to construct low-depth circuits to produce the component states in equation (\ref{expprep}) \emph{exactly}. The following lemma presents a technique to produce the component states of $|e_{1/2}\rangle$ with constant T-depth (using the fact that controlled Hadamard gates sharing the same control qubit can be executed in parallel, see Figure \ref{cHfig}):
\begin{lemma}
\label{rotationlemma}
The $k+1$-qubit circuit in Figure \ref{rotationfig} generates the state $|\phi\rangle_k=\sqrt{\frac{2^k}{2^k+1}}\left(|0\rangle +\frac{1}{\sqrt{2}^k}|1\rangle\right)$ with probability $\frac{1}{2}+\frac{1}{2^{k+1}}$ in T-depth 2.
\end{lemma}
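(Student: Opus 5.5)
The plan is a direct state-vector computation, followed by a separate count of the T-depth. I have not reproduced Figure~\ref{rotationfig} here, so I assume it has the structure the statement suggests: one output qubit $c$ receives a Hadamard, and then acts as the common control of $k$ controlled-Hadamard gates, one on each of $k$ ancilla qubits initialised to $|0\rangle$. The ancillas are then measured, and success means all of them read $|0\rangle$. If the figure differs, the argument should adapt with the same bookkeeping.

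\emph{Step 1 (the state).} After $H$ on $c$ the register is $\frac{1}{\sqrt2}(|0\rangle+|1\rangle)|0\rangle^{\otimes k}$. The controlled-Hadamards act only on the $|1\rangle$ branch, which gives
\[
\tfrac{1}{\sqrt2}\bigl(|0\rangle|0\rangle^{\otimes k}+|1\rangle|+\rangle^{\otimes k}\bigr).
\]
Since $\langle 0|^{\otimes k}|+\rangle^{\otimes k}=2^{-k/2}$, projecting the ancillas onto $|0\rangle^{\otimes k}$ leaves the unnormalised state $\frac{1}{\sqrt2}\bigl(|0\rangle+\sqrt2^{-k}|1\rangle\bigr)$ on $c$.

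\emph{Step 2 (probability and normalisation).} The squared norm of this unnormalised state is $\frac12(1+2^{-k})=\frac12+\frac1{2^{k+1}}$, which is the claimed success probability. Renormalising multiplies by $\sqrt{2^k/(2^k+1)}$ after absorbing the $1/\sqrt2$, which yields exactly $|\phi\rangle_k$. This state is exact, with no synthesis error, because the circuit is pure Clifford+T.

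\emph{Step 3 (T-depth).} Each controlled-Hadamard is expanded as in Figure~\ref{cHfig}. In that expansion the single $T$ and the single $T^\dagger$ sit on the target qubit, and the only two-qubit gate is one CNOT from the control. The $k$ gates share the control $c$, so their CNOTs must run sequentially, but CNOTs are Clifford and cost no T-depth. The non-T single-qubit gates $S,H$ and $H,S^\dagger$ also act on distinct targets and can be applied in parallel.

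The gates can therefore be reordered into the following sequence:
\begin{itemize}
\item all pre-CNOT layers $S,H,T$ on the $k$ targets in parallel;
\item the $k$ CNOTs from $c$;
\item all post-CNOT layers $T^\dagger,H,S^\dagger$ on the $k$ targets in parallel.
\end{itemize}
This reordering is valid because gates on distinct targets commute with the CNOTs that do not touch those targets. The result has exactly two T-layers, so the T-depth is 2.

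The main obstacle is justifying this reordering carefully. One must check that the pre-CNOT layer on target $j$ commutes past the CNOTs acting on the other targets, which holds because they act on disjoint qubits. The rest is routine algebra.
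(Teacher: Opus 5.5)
Your proposal is correct and follows the paper's own argument. Figure~\ref{rotationfig} has exactly the structure you assumed: a Hadamard on the output qubit, then $k$ controlled-Hadamards sharing it as control, then postselection of all ancillas on $|0\rangle$. The paper justifies T-depth 2 by the same observation you make, that the $T$ and $T^\dagger$ gates from the decomposition in Figure~\ref{cHfig} can be executed in two parallel layers; your explicit state calculation and commutation argument fill in details the paper leaves implicit.
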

\input{Figures/rotationfig}
It should be noted that this lemma implies a width-depth trade off for producing these rotation states. On one extreme, several of these widgets may be executed in parallel to boost the probability of successfully producing a single rotation state, preserving T-depth at a spatial cost. On the other extreme, one could build an alternative widget as in Figure \ref{rotationfig_single} to create an errorless rotation state with a single ancilla but much more T-depth. Regardless of technique, these costs will be negligible compared to the overall algorithm. For our purposes, we will use exactly $k+1$ qubits to generate each rotation state and repeat until success.

Following this discussion we comment on the resources required to prepare $|e_{1/2}\rangle$. If $|e_{1/2}\rangle$ is an $n$-qubit state, we could execute $n$ circuits described in lemma \ref{rotationlemma} simultaneously (thus consolidating T-depth) and repeat each until success (Figure \ref{rotationfig}). Since the probability of each component of the $n$-qubit state succeeding after a single attempt is greater than 1/2, the expected number of yet-to-succeed components after $k$ attempts is bounded above by $n2^{-k}$. Because of this, we anticipate the expected T-depth for producing the total state to scale as $O(\log n)$. Indeed, based on numerical data we place our approximation of the expected T-depth at $2\log _2\left(5n/2+0.92\right)$.

While the expected T-depth for the above construction is very small, the number of ancilla qubits required to implement this is exponential, at $4N-4$. This would clearly be unacceptable in a normal setting, however the linear state-preparation circuit only calls for an exponential state with a \emph{logarithmic} number of data qubits, meaning our state only uses $4n-4$ ancilla (and has $O(\log\log n)$ T-depth). Since some low-depth quantum Fourier transform implementations require a large linear number of ancilla qubits \cite{cleve00, baumer25} and the preparation of $|h\rangle$ will likely only be a component of a larger circuit anyways, we may be afforded the space to produce such a low-depth exponential state. Alternatively, proceeding with minimal ancilla as in Figure \ref{rotationfig_single}, numerical data places this expected T-depth at very close to $2^{n+1}$ (which becomes linear in T-depth for the linear state-preparation). We proceed with this larger depth to preserve ancilla, however as discussed this approach admits a natural width-depth tradeoff and these parameters can be adjusted to conform to the context of the broader algorithm. 

%% file: Figures/rotationfig.tex
\begin{figure}
\[
\resizebox{0.3\columnwidth}{!}{
\begin{quantikz}
\lstick[4]{$|0\rangle^{\otimes4}$} & & \gate{H} & \meter{} \\
& & \gate{H} & \meter{} \\
& & \gate{H} & \meter{} \\
& & \gate{H} & \meter{} \\
\lstick[1]{$|0\rangle$} & \gate{H} & \ctrl{-4} & \rstick[1]{$|\phi\rangle_4$}
\end{quantikz}}\hspace{2cm}
\resizebox{0.4\columnwidth}{!}{
\begin{quantikz}
\lstick[14]{$|0\rangle^{\otimes14}$} & & & & \gate{H} & \meter{} \\
& & & & \gate{H} & \meter{} \\
& & & & \gate{H} & \meter{} \\
& & & & \gate{H} & \meter{} \\
& & & & \gate{H} & \meter{} \\
& & & & \gate{H} & \meter{} \\
& & & & \gate{H} & \meter{} \\
& & & & \gate{H} & \meter{} \\
& & & \gate{H} & & \meter{} \\
& & & \gate{H} & & \meter{} \\
& & & \gate{H} & & \meter{} \\
& & & \gate{H} & & \meter{} \\
& & \gate{H} & & & \meter{} \\
& & \gate{H} & & & \meter{} \\
\lstick[3]{$|0\rangle^{\otimes3}$} & \gate{H} & & & \ctrl{-14} & \rstick[3]{$|e_{1/2}\rangle$} \\
& \gate{H} & & \ctrl{-7} & & \\
& \gate{H} & \ctrl{-4} & & &
\end{quantikz}}
\]
\caption{(\emph{Left}) Errorless state-preparation circuit for the state in lemma \ref{rotationlemma} where $k=4$. Since the controlled Hadamard gates are controlled on the same qubit, their T-gates can be executed simultaneously in two layers (each controlled Hadamard has two T-gates). (\emph{Right}) Errorless state-preparation circuit for the exponential state in equation (\ref{expprep}) with $n=3$. A single iteration of this circuit has T-depth 2. If a measurement outcome fails, one only needs to repeat the failed widget.}
\label{rotationfig}
\end{figure}
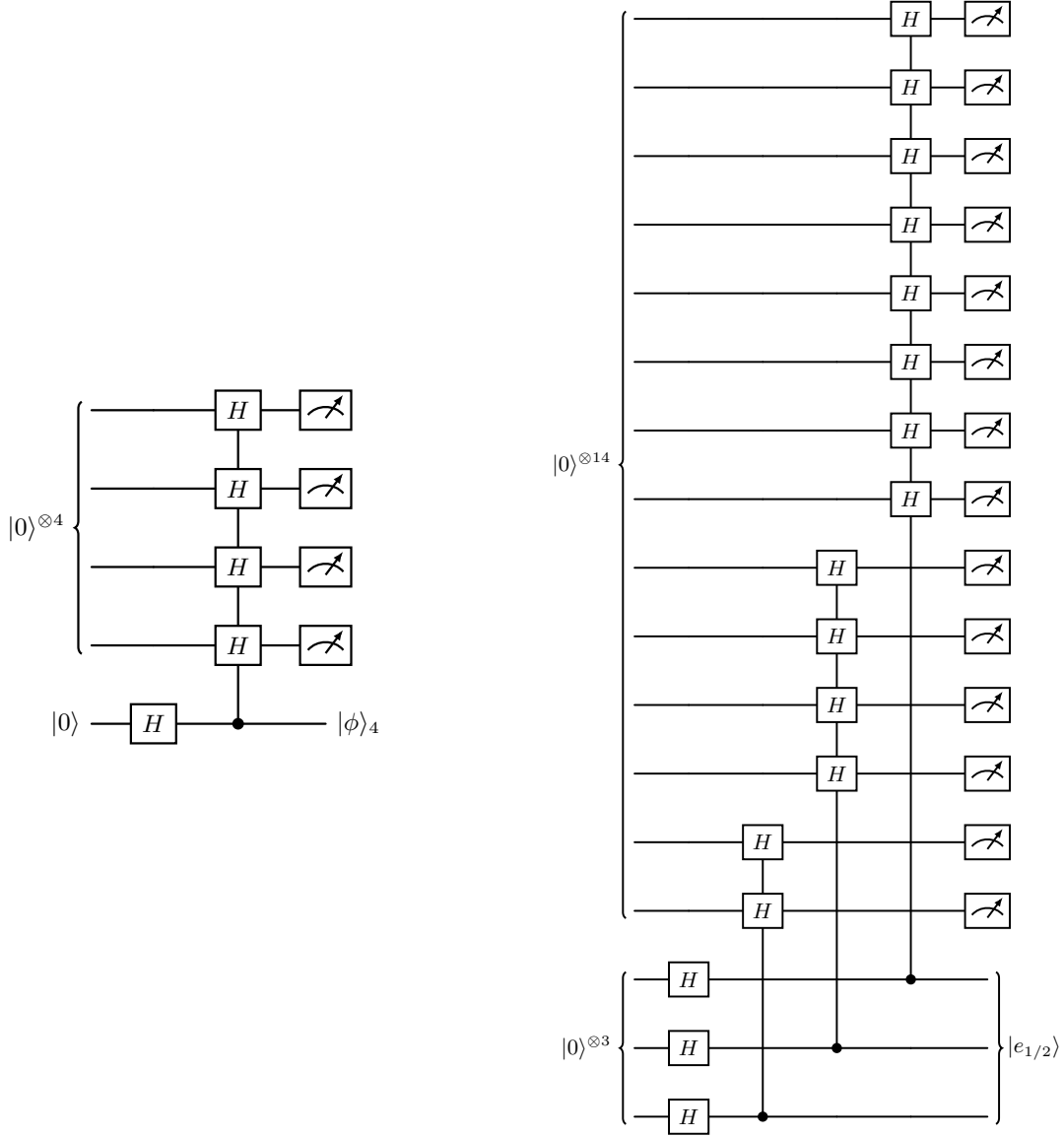

\begin{figure}
\[
\resizebox{0.45\columnwidth}{!}{
\begin{quantikz}
\lstick[1]{$|0\rangle$} & & \gate{H} & \meter{} & \gate{H} & \meter{} & \gate{H} & \meter{} & \gate{H} & \meter{} \\
\lstick[1]{$|0\rangle$} & \gate{H} & \ctrl{-1} & & \ctrl{-1} & & \ctrl{-1} & & \ctrl{-1} & \rstick[1]{$|\phi\rangle_4$}
\end{quantikz}}
\]

\[
\resizebox{\columnwidth}{!}{
\begin{quantikz}
\lstick[3]{$|0\rangle^{\otimes3}$} & & & & \gate{H} & \meter{} & & & \gate{H} & \meter{} & & \gate{H} & \meter{} & & \gate{H} & \meter{} & \gate{H} & \meter{} & \gate{H} & \meter{} & \gate{H} & \meter{} & \gate{H} & \meter{} \\
& & & \gate{H} & & \meter{} & & \gate{H} & & \meter{} & \gate{H} & & \meter{} & \gate{H} & & \meter{} \\
& & \gate{H} & & & \meter{} & \gate{H} & & & \meter{} \\
\lstick[3]{$|0\rangle^{\otimes3}$} & \gate{H} & & & \ctrl{-3} & & & & \ctrl{-3} & & & \ctrl{-3} & & & \ctrl{-3} & & \ctrl{-3} & & \ctrl{-3} & & \ctrl{-3} & & \ctrl{-3} & \rstick[3]{$|e_{1/2}\rangle$} \\
& \gate{H} & & \ctrl{-3} & & & & \ctrl{-3} & & & \ctrl{-3} & & & \ctrl{-3} & & & & & & & & & & \\
& \gate{H} & \ctrl{-3} & & & & \ctrl{-3} & & & & & & & & & & & & & & & & & 
\end{quantikz}}
\]
\caption{(\emph{Top}) Errorless state-preparation circuit for the state in lemma \ref{rotationlemma} where $k=4$. In this case, we use a single ancilla to to prepare the state, which is produced upon four consecutive successful measurements. (\emph{Bottom}) Errorless state-preparation circuit for the exponential state in equation (\ref{expprep}) with $n=3$. The number of controlled-Hadamard layers is equal to the number of controlled-Hadamard gates in the largest widget (i.e. $2^n$). In the event of a failed measurement, only the corresponding widget needs to be repeated.}
\label{rotationfig_single}
\end{figure}
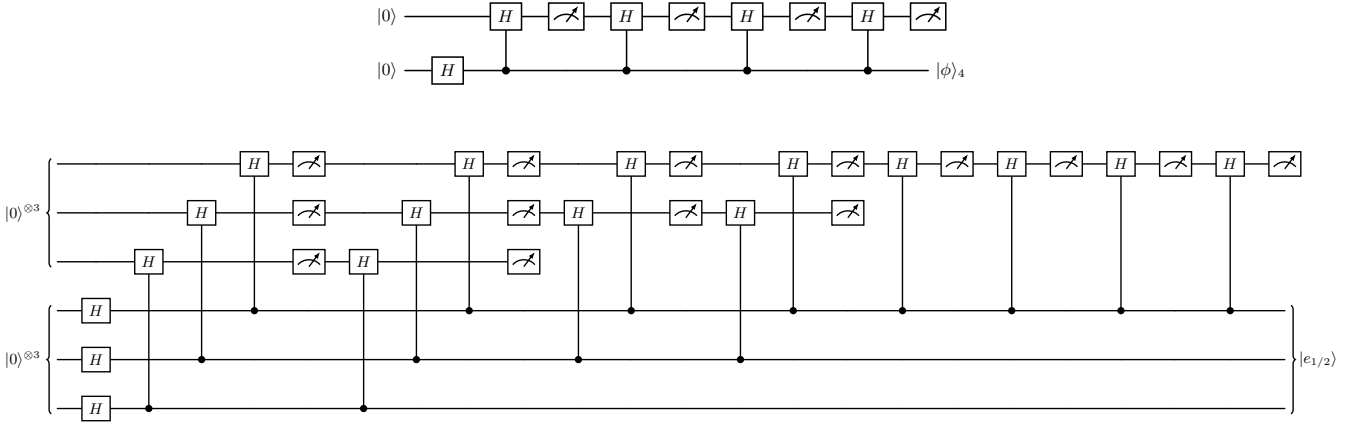

%% file: Appendices/qft.tex
\section{Quantum Fourier Transform}
\label{QFT_appendix}

In this section we conduct an analysis on a basic implementation of the quantum Fourier transform (QFT). While recent papers have explored log-depth constructions of the QFT \cite{baumer25, kahanamoku25}, we opt to present a simplified model for convenience of concrete resource estimation. Since the cost of linear state-preparation is so low, the total cost of the harmonic sequence state preparation will almost entirely depend on the efficiency of the QFT implementation.

We choose to implement a standard approximate quantum Fourier transform with $n-1$ layers of controlled rotations. From Kim and Choi \cite{kim18}, each controlled rotation can be implemented by a controlled SWAP pair and an uncontrolled rotation. Since these components can be executed in parallel for each rotation layer, the T-depth per layer is $1.15\log _2(1/\delta )+13.2$ for gates approximated to error $\delta$. However, the first two layers which contain only controlled-$T$ and controlled-$S$ can be implemented exactly with depth 5 and 2 respectively (per Figure \ref{qftFIG}). Thus, the total T-depth for the QFT is $(n-3)(1.15\log _2(1/\delta )+13.2)+7$.

\input{Figures/qftfig}

Let $\widetilde{QFT}$ be an approximate QFT where individual rotations are approximated to error $\delta$. From numerical simulation we have a fairly accurate approximation of error arising from the QFT when applied to the linear state as
\begin{equation}
    \lVert QFT|L\rangle -\widetilde{QFT}|L\rangle\rVert\approx \left(\frac{n}{2}-\frac{4}{3}\right)\delta.
\end{equation}
A similar numerical result demonstrates a predictable error scaling for conjugating the linear circulant matrix by a quantum Fourier transform, i.e. if $C$ is a unit-norm linear circulant matrix we have
\begin{equation}
    \lVert QFT(C)QFT^{\dag} -\widetilde{QFT}(C)\widetilde{QFT}^{\dag}\rVert\approx \frac{2}{3}n\delta.
\end{equation}

All analysis (including expected T-depth simulations and ancilla counts) are performed with the QFT described above. However, to allow for more favorable expected T-counts, the plots in Figure \ref{harmonicplots} and Figure \ref{diagonalCounts} make use of a phase kickback from a phase gradient state,
\begin{equation}
    \frac{1}{2^{n/2}}\sum_{x=0}^{N-1}e^{i\alpha x/N}|x\rangle\propto\left(\bigotimes_{k=1}^nRz\left(\frac{\alpha\pi}{2^{k-1}}\right)\right).
\end{equation}
Using the construction from \cite{gidney16, nam20}, we replace the layers of controlled rotations in Figure \ref{qftFIG} with controlled additions (T-count $8n+O(1)$ from \cite{gidneyADD}). For simplicity, we keep the error model unchanged.

%% file: Figures/qftfig.tex
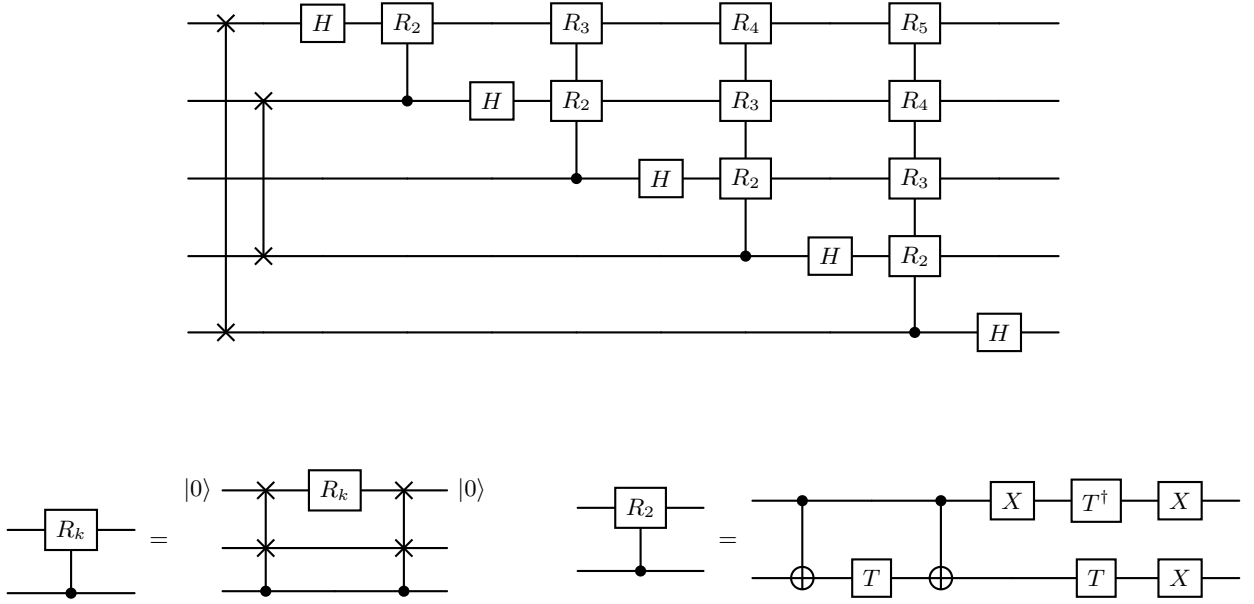
\begin{figure}
\[
\begin{quantikz}
& \swap{4} & & \gate{H} & \gate{R_2} & & \gate{R_3} & & \gate{R_4} & & \gate{R_5} & & \\
& & \swap{2} & & \ctrl{-1} & \gate{H} & \gate{R_2} & & \gate{R_3} & & \gate{R_4} & & \\
& & & & & & \ctrl{-2} & \gate{H} & \gate{R_2} & & \gate{R_3} & & \\
& & \targX{} & & & & & & \ctrl{-3} & \gate{H} & \gate{R_2} & & \\
& \targX{} & & & & & & & & & \ctrl{-4} & \gate{H} & 
\end{quantikz}
\]
\vspace{1cm}
\[
\begin{quantikz}[align equals at=1.15]
& \gate{R_k} & \\
& \ctrl{-1} & 
\end{quantikz}
=\begin{quantikz}[align equals at=1.85]
\lstick{$|0\rangle$} & \swap{1} & \gate{R_k} & \swap{1} & \rstick{$|0\rangle$} \\
& \targX{} & & \targX{} & \\
& \ctrl{-1} & & \ctrl{-1} &
\end{quantikz}
\hspace{1cm}
\begin{quantikz}[align equals at=1.5]
& \gate{R_2} & \\
& \ctrl{-1} & 
\end{quantikz}
=\begin{quantikz}[align equals at=1.5]
& \ctrl{1} & & \ctrl{1} & \gate{X} & \gate{T^\dag} & \gate{X} & \\
& \targ{} & \gate{T} & \targ{} & & \gate{T} & \gate{X} & 
\end{quantikz}
\]
\caption{(\emph{Top}) Circuit for a 5-qubit linear-depth quantum Fourier transform. Here, $R_k=|0\rangle\langle0| + e^{2\pi i /2^k}|1\rangle\langle1|$. (\emph{Bottom}) Gate decompositions used in analysis of the quantum Fourier transform. Since $R_3$ is the T-gate, we can use the decomposition on the left (from \cite{kim18}) to achive a T-depth of 5 (one T-gate plus four from the controlled SWAP pair). Since $R_2$ is the S-gate, we use a standard decomposition with T-depth 2 where equality holds up to phase.}
\label{qftFIG}
\end{figure}

%% file: Appendices/linear.tex
\section{Linear state-preparation}
\label{linearAPX}

The preparation of the linear state $|L\rangle$ follows naturally from the key observation that a number $x$ can be represented as an $n$-digit binary number $x=x_{n-1}\cdots x_1x_0$ such that
\begin{equation}
\label{base10}
    x=\sum_{k=0}^{n-1}2^kx_k.
\end{equation}
In other words, \emph{an exponentially large integer can be represented by a linear number of coefficients acting on binary inputs}. If we treat $x$ as a linear function over integers, we can proceed by analogy to a quantum algorithm for preparing $|L\rangle$. To simulate the binary input $x_k$, we initialize a Hadamard state and simply apply a $Z$ gate to the $k^\text{th}$ qubit from the bottom (amplitudes are in $\{ 1,-1\}$ rather than $\{ 0,1\}$). The linear state $|L\rangle$ can then be recovered by taking a geometric sum of these components.

While this circuit can be constructed with unary iteration \cite{babbush18} and a linear combination of unitaries (LCU) with rotation gates from Kim and Choi \cite{kim18}, we note two improvements which greatly reduce the T-depth of the linear state-preparation. First, consider the SELECT oracle, or the component $\mathcal{Z}$ which enacts $Z$ on qubit $k$ if the state of an ancilla register is in state $|k\rangle$. If we let $Z_k=(I^{\otimes (k-1)}\otimes Z\otimes I^{\otimes (n-k)})$ then we have
\begin{equation}
\label{bigZ}
    \mathcal{Z}=\sum _{k=0}^{n-1} |k\rangle\langle k|\otimes Z_k.
\end{equation}
One way to implement this is by unary iteration, which costs $\sim 4n$ T-gates to sequentially address each basis state of the ancilla register. Instead, we use a collection of controlled SWAP gates similar to those found in \cite{gidney19}. Since controlled SWAP gates sharing a control qubit and targeting separate qubits can be executed with constant T-depth, the total T-depth of $\mathcal{Z}$ is $4\lceil \log_2(n)\rceil$.

We next interrogate the PREP oracles which apply the geometric weights to the summation in the LCU; notice that we have
\begin{equation}
\label{lineareq}
    |L\rangle\propto\left(\sum _{k=0}^{n-1}2^{-k}Z_k\right)H^{\otimes n}|0\rangle^{\otimes n}.
\end{equation}
Typically these PREP oracles would prepare $|e_{1/\sqrt{2}}\rangle$ so that the squares of the amplitudes would induce the geometric sum in equation (\ref{lineareq}). This comes at the cost of amplitude being left in the ancilla register; the subnormalization of preparing $|L\rangle$ in this way is $\alpha =1/\sqrt{3}$. The circuit would need to be repeated three times on average to correctly measure the ancilla qubits as $|0\rangle$ and subsequently successfully prepare $|L\rangle$.

To circumvent this problem, rather than using PREP oracles to prepare $|e_{1/\sqrt{2}}\rangle$, we use the first to prepare $|e_{1/2}\rangle$ and replace the second with $H^{\otimes\lceil\log _2n\rceil}$. While the amplitudes of these components multiply to produce the same geometric sum in equation (\ref{lineareq}), without modification this would seem to exacerbate the subnormalization issue, i.e. the probability of correctly measuring all ancilla qubits in the $|0\rangle$ state is $1/n$. However, \emph{every possible measurement of the ancilla register produces a permutation of $|L\rangle$ in the data register}. To see this, consider a matrix expansion of an 8 qubit version of this circuit over the ancilla basis states:
\begin{equation}
\label{bigmatrix}
    (H^{\otimes 3}\otimes I^{\otimes 8})\mathcal{Z}(|e_{1/2}\rangle\otimes H^{\otimes 8}) |0\rangle\propto\begin{pmatrix}
I & \textcolor{white}{-}I & \textcolor{white}{-}I & \textcolor{white}{-}I & \textcolor{white}{-}I & \textcolor{white}{-}I & \textcolor{white}{-}I & \textcolor{white}{-}I \\
I & -I & \textcolor{white}{-}I & -I & \textcolor{white}{-}I & -I & \textcolor{white}{-}I & -I \\
I & \textcolor{white}{-}I & -I & -I & \textcolor{white}{-}I & \textcolor{white}{-}I & -I & -I \\
I & -I & -I & \textcolor{white}{-}I & \textcolor{white}{-}I & -I & -I & \textcolor{white}{-}I \\
I & \textcolor{white}{-}I & \textcolor{white}{-}I & \textcolor{white}{-}I & -I & -I & -I & -I \\
I & -I & \textcolor{white}{-}I & -I & -I & \textcolor{white}{-}I & -I & \textcolor{white}{-}I \\
I & \textcolor{white}{-}I & -I & -I & -I & -I & \textcolor{white}{-}I & \textcolor{white}{-}I \\
I & -I & -I & \textcolor{white}{-}I & -I & \textcolor{white}{-}I & \textcolor{white}{-}I & -I \\
\end{pmatrix}\begin{pmatrix}
Z_0H^{\otimes 8}|0\rangle \\
\frac{1}{2}Z_1H^{\otimes 8}|0\rangle \\
\frac{1}{4}Z_2H^{\otimes 8}|0\rangle \\
\frac{1}{8}Z_3H^{\otimes 8}|0\rangle \\
\frac{1}{16}Z_4H^{\otimes 8}|0\rangle \\
\frac{1}{32}Z_5H^{\otimes 8}|0\rangle \\
\frac{1}{64}Z_6H^{\otimes 8}|0\rangle \\
\frac{1}{128}Z_7H^{\otimes 8}|0\rangle \\
\end{pmatrix}.
\end{equation}
If the ancilla qubits are all measured in the $|0\rangle$ state, we arrive at the desired geometric sum of equation (\ref{lineareq}), otherwise half of the terms are hit with a multiplicative factor of $-1$. Fortunately we can coherently correct these `errors' at no T-cost. Since $XZH|0\rangle =-ZH|0\rangle$, we can write $X_k=(I^{\otimes (k-1)}\otimes X\otimes I^{\otimes (n-k)})$ such that $X_jZ_kH^{\otimes n}|0\rangle =(-1)^{\delta _{jk}}Z_kH^{\otimes n}|0\rangle$. In other words, applying $X_j$ will flip the sign of the $Z_j$ term and leave the others unaffected. Therefore, we could implement the circuit in equation (\ref{bigmatrix}) as is and simply apply $X$ gates on the affected qubits, e.x. if we measure the ancilla in state $|001\rangle$ we could apply $X$ gates on qubits $1, 3, 5,$ and $7$ to recover $|L\rangle$. Due to the structure of $H^{\otimes n}$, we could also apply CNOT gates to coherently correct all errors and effectively \emph{clean} the ancilla register through another application of Hadamard gates as in Figure \ref{linearfig} (at no additional T-cost).

\input{Figures/linearfig}

The expected T-depth of this circuit is given by the cost of the exponential state-preparation from Appendix \ref{rotations} plus the cost of the controlled SWAP gates, or $2n+4\lceil\log _2n\rceil$. The total number of ancilla is $n+2\lceil\log _2n\rceil -1$.

We also mention that these resources are indeed valid for $n$ which are not powers of 2. To create an $n-1$-qubit linear state from a larger $n$-qubit linear state, we can simply apply a Hadamard gate on the bottom qubit and measure. Due to the structure of the linear state, measuring the bottom qubit as $|0\rangle$ produces the exact desired state on the remaining $n-1$ qubits, which the probability of measuring $|1\rangle$ is negligible (specifically $3/N^2$).

%% file: Figures/linearfig.tex
\begin{figure}
\[
\resizebox{0.75\columnwidth}{!}{
\begin{quantikz}
\lstick[3]{$|e_{1/2}\rangle$} & & & & & & & & \ctrl{3}& & \ctrl{3} & & & & & & & \gate{H} & & & \ctrl{10} & \gate{H} & \rstick[3]{$|0\rangle^{\otimes3}$} \\
& & & & & & \ctrl{6} & \ctrl{2} & & & & \ctrl{2} & \ctrl{6} & & & & & \gate{H} & & \ctrl{9} & & \gate{H} & \\
& & \ctrl{7} & \ctrl{5} & \ctrl{3} & \ctrl{1} & & & & & & & & \ctrl{1} & \ctrl{3} & \ctrl{5} & \ctrl{7} & \gate{H} & \ctrl{8} & & & \gate{H} & \\
\lstick[8]{$|0\rangle^{\otimes8}$} & \gate{H} & & & & \swap{1} & & \swap{2} & \swap{4} & \gate{Z} & \swap{4} & \swap{2} & & \swap{1} & & & & & & & & & \rstick[8]{$|L\rangle$} \\
& \gate{H} & & & & \targX{} & & & & & & & & \targX{} & & & & & \targ{} & & & & \\
& \gate{H} & & & \swap{1} & & & \targX{} & & & & \targX{} & & & \swap{1} & & & & & \targ{} & & & \\
& \gate{H} & & & \swap{-1} & & & & & & & & & & \targX{} & & & & \targ{} & \targ{} & & & \\
& \gate{H} & & \swap{1} & & & \swap{2} & & \targX{} & & \targX{} & & \swap{2} & & & \swap{1} & & & & & \targ{} & & \\
& \gate{H} & & \targX{} & & & & & & & & & & & & \targX{} & & & \targ{} & & \targ{} & & \\
& \gate{H} & \swap{1} & & & & \targX{} & & & & & & \targX{} & & & & \swap{1} & & & \targ{} & \targ{} & & \\
& \gate{H} & \targX{} & & & & & & & & & & & & & & \targX{} & & \targ{} & \targ{} & \targ{} & & 
\end{quantikz}}
\]
\caption{State-preparation circuit for the 8-qubit linear state $|L\rangle$. The top three qubits are clean ancilla which begin and end in the $|0\rangle$ state. No measurement is required.}
\label{linearfig}
\end{figure}
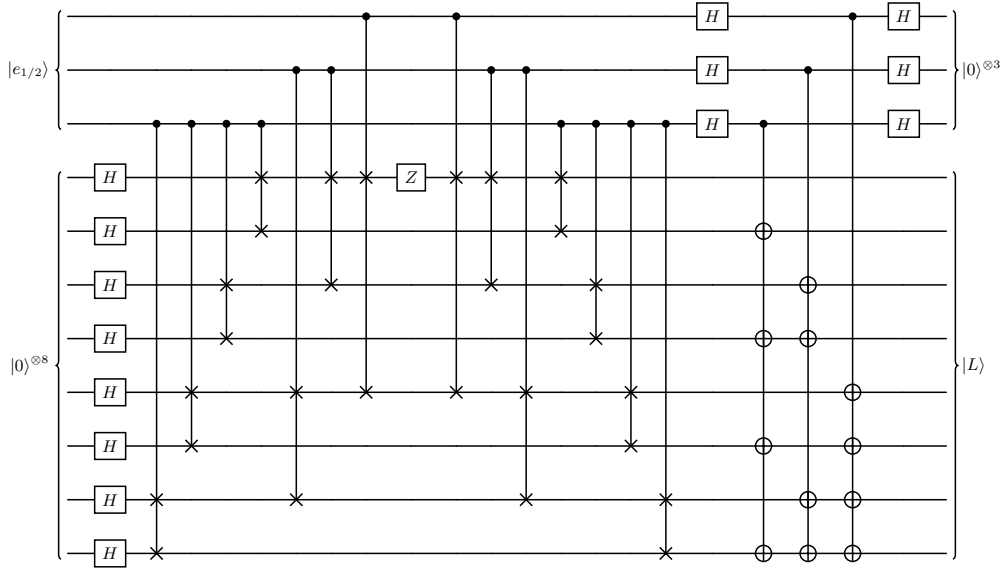

%% file: Appendices/circulant.tex
\section{Linear circulant matrix block-encoding}
\label{circulant}

We now turn our attention to block-encoding the \emph{linear circulant matrix}, which we write as 
\begin{equation}
    (C)_{ij}= \frac{N-1}{2}-\left( i+j\text{ (mod }N)\right)
\end{equation}
where indices satisfy $0\le i,j\le N-1$. We decompose $C$ into a sum of four different pieces which we will separately block-encode, then combine in an LCU. Let $\text{diag}\{|L\rangle\}$ be a linear diagonal matrix with $i^\text{th}$ entry $1-\frac{1-2i}{N}$, let ${\bf 1}$ be a matrix with all entries equal to 1, and let $D=[i+j<N-1]-\frac{N+1}{N-1}[i+j>N-1]$. Then we can write
\begin{equation}
\label{circulanteqn}
    C=\text{diag}\{|L\rangle\}\cdot{\bf 1}+{\bf 1}\cdot\text{diag}\{|L\rangle\}-D-\frac{N+1}{N-1}X^{\otimes n}
\end{equation}
and graphically depict this equation in Figure \ref{circulantfig}.

\input{Figures/circulantheat}

We now comment on block-encodings for each individual piece of equation (\ref{circulanteqn}). To encode $\text{diag}\{|L\rangle\}$ we simply conduct an LCU over the ancilla qubits of $\mathcal{Z}$ from the previous section. Instead of the asymmetric PREP oracles from the previous section, this time we use Lemma \ref{rotationlemma} to inject the errorless state $|e_{1/\sqrt{2}}\rangle$ for the first PREP oracle (using $\sim 2n$ clean ancilla) and a rotation synthesis layer to approximate the second adjoint PREP operator, which we write as $|e_{1/\sqrt{2}}\rangle ^\dagger$. The matrix ${\bf 1}$ can be block-encoded via the circuit in Figure \ref{ones_block} which acts as an LCU of a Grover matrix $G_n$ with the identity. Meanwhile, realizing the pattern from Figure \ref{geometric_sum_tri} we can write $D$ as
\begin{equation}
\label{Deqn}
    D\propto \sum _{k=0}^{n-1}2^{-k}\left( X^{\otimes k}\otimes R\otimes \left( G_{n-k}-I_{n-k}\right)\right)
\end{equation}
where $R=\text{diag}\left([1,-\frac{N+1}{N-1}]\right)$. A circuit for $D$ can be compiled via a geometric LCU as in Figure \ref{block_D}, where the SELECT oracle takes a distinct pyramid structure. This circuit is composed of two distinct pieces which we loosely term `accumulators' and `incrementers'. In particular, if we let 
\begin{align}
    \mathcal{H}&=\sum _{k=0}^{n-1}|k\rangle\langle k|\otimes I^{\otimes (k+1)}\otimes H^{\otimes (n-k-1)}, \\
    \mathcal{CZ}&=\sum _{k=0}^{n-1}|k\rangle\langle k|\otimes I^{\otimes (k+1)}\otimes CZ_{n-k-1}
\end{align}
where $CZ_k$ is a $k$-qubit controlled $Z$ gate, $\mathcal{R}=\sum _{k=0}^{n-1}|k\rangle\langle k|\otimes R_k$ where $R_k$ is a controlled rotation with control on the $k^\text{th}$ data qubit and target on an ancilla which implements a block-encoding of $R$, and $\mathcal{X}=\sum _{k=0}^{n-1}|k\rangle\langle k|\otimes X^{\otimes k}\otimes I^{\otimes (n-k)}$, then we can write an LCU block-encoding for $D$ with the SELECT oracle $\mathcal{R}\mathcal{H}(\mathcal{CZ})\mathcal{H}\mathcal{X}$. The circuit in Figure \ref{accum_fig} demonstrates that $\mathcal{H}$ can be implemented with $4\lceil\log_2 n\rceil -2$ T-depth, and a similar argument produces a T-depth of $4\lceil\log_2 n\rceil$ to implement $\mathcal{X}$. For $\mathcal{CZ}$ we use unary iteration expansions across the ancilla register and the `incrementer' piece separately (as in Figure \ref{incrementer_fig}) for a T-depth $8n-12$ construction.

\input{Figures/onesmatrix}

\input{Figures/dmatrix}

\input{Figures/accumulator}

\input{Figures/incrementor}

\input{Figures/dheat}

While one could na\"{i}vely produce block-encoding circuits for each component of equation (\ref{circulanteqn}) and combine in an LCU to encode $C$, we would ideally like to conserve resources and use certain pieces only once in the computation. Specifically, notice that both the log-depth SWAP structures in $\mathcal{Z}$ and $\mathcal{R}$ as well as the geometric sum are used in $\text{diag}\{|L\rangle\}$ and $D$; we should endeavor to use these resources only once. Indeed, the block-encoding circuit for $C$ in Figure \ref{firstcirculant} demonstrates exactly this.

\input{Figures/firstcirculant}

To calculate the subnormalization present in the circuit in Figure \ref{firstcirculant}, we must assess the subnormalization of each component in equation (\ref{circulanteqn}) to ensure the top-most PREP oracles in Figure \ref{firstcirculant} indeed weight the LCU correctly; we do this by tracking the maximum element of each component. Indeed, the maximum element of the block-encoding of $\text{diag}\{|L\rangle\}\cdot{\bf 1}$ and its transpose is $1/N$, the block-encoding of $D$ has a maximum element exponentially approaching $1/N$ (also $\mathcal{R}$ exponentially approaches $\mathcal{Z}$), but the maximum element of $X^{\otimes n}$ is 1 (exponentially larger than the rest). The rotation-based PREP oracle in Figure \ref{prep_oracle} prepares a state $|\psi\rangle\propto |00\rangle +|01\rangle +a|10\rangle +b|11\rangle$ which becomes exponentially close to $\frac{1}{\sqrt{3}}\left( |00\rangle +|01\rangle +|10\rangle\right)$ for increasing $n$, effectively suppressing the large contribution from $X^{\otimes n}$. As such, the block-encoding of $C$ in Figure \ref{firstcirculant} has a maximum element approximately $1/3N$; the subnormalization with respect to the $L^2$ norm of $C$ is thus $\alpha\approx 0.1061$.

\input{Figures/prep}

Approximation error in Figure \ref{firstcirculant} arises from the rotation gates in all PREP oracles as well as the rotations in $\mathcal{R}$. If all controlled and uncontrolled rotation gates have error $\delta$, from numerical simulation, we loosely approximate the total error on the circulant matrix block-encoding as $\epsilon\approx(\frac{1}{5}n+4)\delta$.

It remains to estimate T-depth of this circuit. We now list all contributing factors.
\begin{enumerate}
    \item The PREP oracle layers which each contain controlled rotations in total contribute $4.6\log _2(1/\delta )+41.4$ T-depth.
    \item The main PREP register cycles through two-qubit control sequences; this can be implemented with a single Toffoli pair and Clifford gates to switch the controlled basis state.
    \item Each $n$-qubit Grover matrix to encode $\text{diag}\{|L\rangle\}\cdot{\bf 1}$ and its transpose is controlled on two qubits (once on the PREP register and once on the uniform LCU register to encode ${\bf 1}$); each has a T-depth of $4\lceil\log_2(n+1)\rceil +8$.
    \item To apply the controlled SWAP gates to all but the $|11\rangle$ basis state of the PREP register, we use doubly controlled SWAP pairs cost $4\lceil\log _2n\rceil +4$ T-depth.
    \item The doubly-controlled rotation $R$ (one control on the data qubits and \emph{one} control in the PREP register) to implement $\mathcal{R}$ has T-depth $2.3\log _2(1/\delta )+20.7$ (by using a Toffoli pair and a single-qubit controlled rotation to implement the doubly-controlled rotation, we can execute the 4 T-gates from the Toffoli compute with 4 T-gates from a constituent uncontrolled rotation on the target \emph{simultaneously}).
    \item Each doubly-controlled Hadamard accumulator costs $4\lceil\log _2 n\rceil +6$ T-depth.
    \item The controlled $X$-gate `accumulator' (only controlled on the PREP register) has T-depth $4\lceil\log _2 n\rceil$.
    \item the doubly controlled `incrementer' has T-depth $8n-4$.
\end{enumerate} 
To accomodate for the myriad of controls on the three `accumulators' and the `incrementer' a single Toffoli pair produces a single control qubit to represent the $|10\rangle$ state of the PREP register and the $|1\rangle$ state of the LCU register. Adding these together gives an approximate T-depth count of

\begin{equation}
    6.9\log _2(1/\delta )+8n+24\log _2n+94.1.
\end{equation}

%% file: Figures/circulantheat.tex
\begin{figure}
    \centering
    \input{Figures/matrix_heat}
    \caption{Heatmap depiction of equation (\ref{circulanteqn}).}
    \label{circulantfig}
\end{figure}
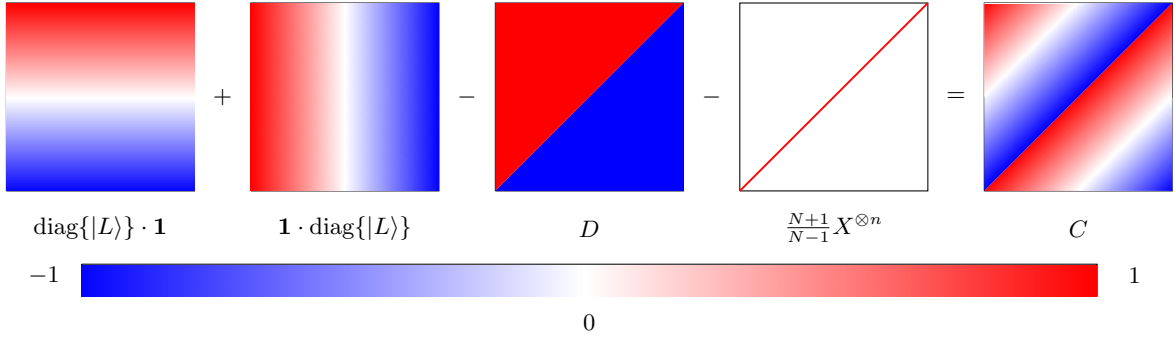

%% file: Figures/matrix_heat.tex
\begin{tikzpicture}[every node/.style={draw}]
    
    \node[draw=none] (p1) at (2.875,1.25) {$+$};
    \node[draw=none] (m2) at (6.125,1.25) {$-$};
    \node[draw=none] (m3) at (9.375,1.25) {$-$};
    \node[draw=none] (eq) at (12.625,1.25) {$=$};
    
    \path[shape=coordinate]
    (0,0) coordinate(a1) (2.5,0) coordinate(a2)
    (2.5,2.5) coordinate(a3) (0,2.5) coordinate(a4) (0,1.25) coordinate(a5);
    \filldraw[fill=white] (a1) -- (a2) -- (a3) -- (a4) -- (a1);
    \begin{scope}[shift={(a1)}]
    \begin{axis}[
        hide axis,
        colormap={blueWhiteRed}{
            color=(red)
            color=(white)
            color=(blue)
        },
        point meta min=-100, point meta max=100,
        width=4.082cm, height=4.082cm,
        view={0}{90}
    ]
        \addplot3[
            surf,
            shader=interp,
            samples=50,
            domain=-1:1
        ] {-100*y};
    \end{axis}
    \end{scope}
    \node[draw=none] (A) at (1.25, -0.5) {\small$\mbox{diag}\{|L\rangle\}\cdot\textbf{1}$};

    \path[shape=coordinate]
    (3.25,0) coordinate(b1) (5.75,0) coordinate(b2) (5.75,2.5) coordinate(b3) (3.25,2.5) coordinate(b4) (4.5,0) coordinate(b5);
    \filldraw[fill=white] (b1) -- (b2) -- (b3) -- (b4) -- (b1);
    \begin{scope}[shift={(b1)}]
    \begin{axis}[
        hide axis,
        colormap={blueWhiteRed}{
            color=(red)
            color=(white)
            color=(blue)
        },
        point meta min=-100, point meta max=100,
        width=4.082cm, height=4.082cm,
        view={0}{90}
    ]
        \addplot3[
            surf,
            shader=interp,
            samples=50,
            domain=-1:1
        ] {100*x};
    \end{axis}
    \end{scope}
    \node[draw=none] (B) at (4.5,-0.5) {\small$\textbf{1}\cdot\mbox{diag}\{|L\rangle\}$};

    \path[shape=coordinate]
    (6.5,0) coordinate(c1) (9,0) coordinate(c2) (9,2.5) coordinate(c3) (6.5,2.5) coordinate(c4);
    \filldraw[fill=white] (c1) -- (c2) -- (c3) -- (c4) -- (c1);
    \filldraw[fill=red, draw=none] (c1) -- (c4) -- (c3) -- (c1);
    \filldraw[fill=blue, draw=none] (c1) -- (c2) -- (c3) -- (c1);
    \node[draw=none] (C) at (7.75,-0.5) {\small$D$};

    \path[shape=coordinate]
    (9.75,0) coordinate(d1) (12.25,0) coordinate(d2) (12.25,2.5) coordinate(d3) (9.75,2.5) coordinate(d4);
    \filldraw[fill=white] (d1) -- (d2) -- (d3) -- (d4) -- (d1);
    \draw [line width=0.25mm, color=red] (d1) -- (d3);
    \node[draw=none] (D) at (11,-0.5) {\small$\frac{N+1}{N-1}X^{\otimes n}$};

    \path[shape=coordinate]
    (13,0) coordinate(e1) (15.5,0) coordinate(e2) (15.5,2.5) coordinate(e3) (13,2.5) coordinate(e4) (13,1.25) coordinate(e5) (14.25,2.5) coordinate(e6) (14.25,1.25) coordinate(e7);
    \filldraw[fill=white] (e1) -- (e2) -- (e3) -- (e4) -- (e1);
    \node[draw=none] (E) at (14.25,-0.5) {\small$C$};

    \begin{scope}[rotate=90,shift={(0,-15.5)}]
    \clip (2.5,2.5) -- (0,2.5) -- (2.5,0) -- (2.5,2.5);
    \begin{axis}[
        hide axis,
        colormap={blueWhiteRed}{
            color=(blue)
            color=(blue)
            color=(blue)
            color=(white)
            color=(white)
        },
        point meta min=-100, point meta max=100,
        width=4.082cm, height=4.082cm,
        view={0}{90}
    ]
        \addplot3[
            surf,
            shader=interp,
            samples=50,
            domain=-1:1
        ] {100*(x+y)/2};
    \end{axis}
    \end{scope}

    \begin{scope}[rotate=90,shift={(1.25cm,-14.25cm)}]
    \clip (1.25,1.25) -- (0,1.25) -- (1.25,0) -- (1.25,1.25);
    \begin{axis}[
        hide axis,
        colormap={blueWhiteRed}{
            color=(white)
            color=(white)
            color=(red)
        },
        point meta min=0, point meta max=100,
        width=2.83cm, height=2.83cm,
        view={0}{90}
    ]
        \addplot3[
            surf,
            shader=interp,
            samples=50,
            domain=0:1
        ] {100*(x+y)/2};
    \end{axis}
    \end{scope}

    \begin{scope}[rotate=90,shift={(0,-15.5)}]
    \clip (0,0) -- (0,2.5) -- (2.5,0) -- (0,0);
    \begin{axis}[
        hide axis,
        colormap={blueWhiteRed}{
            color=(white)
            color=(white)
            color=(red)
            color=(red)
            color=(red)
        },
        point meta min=-100, point meta max=100,
        width=4.082cm, height=4.082cm,
        view={0}{90}
    ]
        \addplot3[
            surf,
            shader=interp,
            samples=50,
            domain=-1:1
        ] {100*(x+y)/2};
    \end{axis}
    \end{scope}
    
    \begin{scope}[rotate=90,shift={(0,-15.5cm)}]
    \clip (0,0) -- (1.25,0) -- (0,1.25) -- (0,0);
    \begin{axis}[
        hide axis,
        colormap={blueWhiteRed}{
            color=(blue)
            color=(white)
            color=(white)
        },
        point meta min=0, point meta max=100,
        width=2.83cm, height=2.83cm,
        view={0}{90}
    ]
        \addplot3[
            surf,
            shader=interp,
            samples=50,
            domain=0:1
        ] {100*(x+y)/2};
    \end{axis}
    \end{scope}
    
    \path[coordinate]
    (1,-0.975) coordinate(h1) (1,-1.4) coordinate(h2) (14.5,-0.975) coordinate(h3) (14.5,-1.4) coordinate(h4) (7.75,-0.975) coordinate(h5) (7.75,-1.4) coordinate(h6);
    \filldraw[fill=white] (h1) -- (h2) -- (h4) -- (h3) -- (h1);
    \begin{scope}[shift={(1,-1.4)}]
    \begin{axis}[
        hide axis,
        colormap={blueWhiteRed}{
            color=(red)
            color=(white)
            color=(blue)
        },
        point meta min=-100, point meta max=100,
        width=15.08cm, height=2cm,
        view={0}{90}
    ]
        \addplot3[
            surf,
            shader=interp,
            samples=50,
            domain=-1:1
        ] {-100*x};
    \end{axis}
    \end{scope}
    \node[draw=none] (F) at (0.5,-1.125) {\small$-1$};
    \node[draw=none] (G) at (15,-1.125) {\small$1$};
    \node[draw=none] (H) at (7.75,-1.75) {\small$0$};
    
\end{tikzpicture}

%% file: Figures/onesmatrix.tex
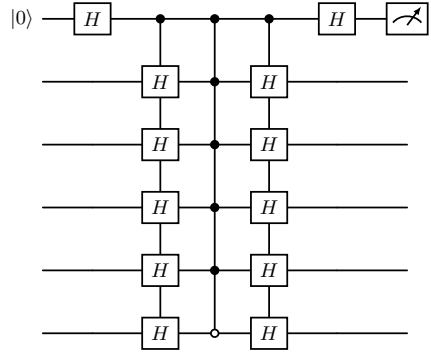
\begin{figure}
\[
\resizebox{0.325\columnwidth}{!}{
\begin{quantikz}
\lstick{$|0\rangle$} & \gate{H} & \ctrl{5} & \ctrl{1} & \ctrl{5} & \gate{H} & \meter{} \\
& & \gate{H} & \ctrl{1} & \gate{H} & & \\
& & \gate{H} & \ctrl{1} & \gate{H} & & \\
& & \gate{H} & \ctrl{1} & \gate{H} & & \\
& & \gate{H} & \ctrl{1} & \gate{H} & & \\
& & \gate{H} & \ctrl[open]{0} & \gate{H} & &
\end{quantikz}}
\]
\caption{Block-encoding of ${\bf 1}_5$. T-depth for this circuit is $4\lceil\log _2(n+1)\rceil +4$ (coming from two controlled-Hadamard layers and the multi-controlled $Z$ gate).}
\label{ones_block}
\end{figure}

%% file: Figures/dmatrix.tex
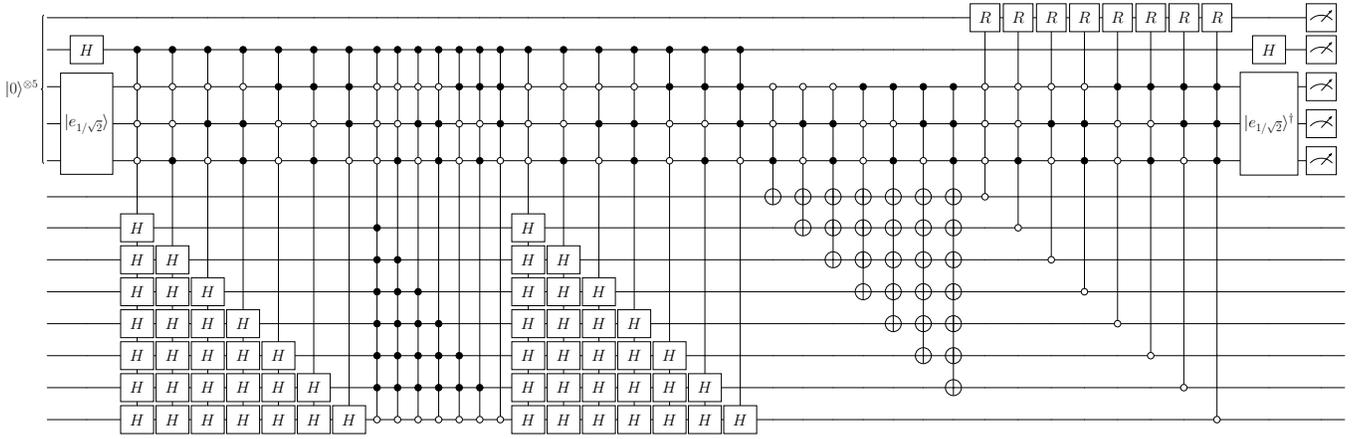
\begin{figure}
\[
\resizebox{\columnwidth}{!}{
\begin{quantikz}[font=\LARGE]
\lstick[5]{$|0\rangle^{\otimes5}$} & & & & & & & & & & & & & & & & & & & & & & & & & & & & & & \gate[style={scale=1.5}]{R} & \gate[style={scale=1.5}]{R} & \gate[style={scale=1.5}]{R} & \gate[style={scale=1.5}]{R} & \gate[style={scale=1.5}]{R} & \gate[style={scale=1.5}]{R} & \gate[style={scale=1.5}]{R} & \gate[style={scale=1.5}]{R} & & \meter[style={scale=1.5}]{} \\
& \gate[style={scale=1.5}]{H} & \ctrl[style={scale=2}]{1} & \ctrl[style={scale=2}]{1} & \ctrl[style={scale=2}]{1} & \ctrl[style={scale=2}]{1} & \ctrl[style={scale=2}]{1} & \ctrl[style={scale=2}]{1} & \ctrl[style={scale=2}]{1} & \ctrl[style={scale=2}]{1} & \ctrl[style={scale=2}]{1} & \ctrl[style={scale=2}]{1} & \ctrl[style={scale=2}]{1} & \ctrl[style={scale=2}]{1} & \ctrl[style={scale=2}]{1} & \ctrl[style={scale=2}]{1} & \ctrl[style={scale=2}]{1} & \ctrl[style={scale=2}]{1} & \ctrl[style={scale=2}]{1} & \ctrl[style={scale=2}]{1} & \ctrl[style={scale=2}]{1} & \ctrl[style={scale=2}]{1} & \ctrl[style={scale=2}]{1} & & & & & & & & & & & & & & & & \gate[style={scale=1.5}]{H} & \meter[style={scale=1.5}]{} \\
& \gate[3]{|e_{1/\sqrt{2}}\rangle} & \ctrl[open, style={scale=2}]{1} & \ctrl[open, style={scale=2}]{1} & \ctrl[open, style={scale=2}]{1} & \ctrl[open, style={scale=2}]{1} & \ctrl[style={scale=2}]{1} & \ctrl[style={scale=2}]{1} & \ctrl[style={scale=2}]{1} & \ctrl[open, style={scale=2}]{1} & \ctrl[open, style={scale=2}]{1} & \ctrl[open, style={scale=2}]{1} & \ctrl[open, style={scale=2}]{1} & \ctrl[style={scale=2}]{1} & \ctrl[style={scale=2}]{1} & \ctrl[style={scale=2}]{1} & \ctrl[open, style={scale=2}]{1} & \ctrl[open, style={scale=2}]{1} & \ctrl[open, style={scale=2}]{1} & \ctrl[open, style={scale=2}]{1} & \ctrl[style={scale=2}]{1} & \ctrl[style={scale=2}]{1} & \ctrl[style={scale=2}]{1} & \ctrl[open, style={scale=2}]{1} & \ctrl[open, style={scale=2}]{1} & \ctrl[open, style={scale=2}]{1} & \ctrl[style={scale=2}]{1} & \ctrl[style={scale=2}]{1} & \ctrl[style={scale=2}]{1} & \ctrl[style={scale=2}]{1} & \ctrl[open, style={scale=2}]{-2} & \ctrl[open, style={scale=2}]{-2} & \ctrl[open, style={scale=2}]{-2} & \ctrl[open, style={scale=2}]{-2} & \ctrl[style={scale=2}]{-2} & \ctrl[style={scale=2}]{-2} & \ctrl[style={scale=2}]{-2} & \ctrl[style={scale=2}]{-2} & \gate[3]{|e_{1/\sqrt{2}}\rangle^\dag} & \meter[style={scale=1.5}]{} \\
& & \ctrl[open, style={scale=2}]{1} & \ctrl[open, style={scale=2}]{1} & \ctrl[style={scale=2}]{1} & \ctrl[style={scale=2}]{1} & \ctrl[open, style={scale=2}]{1} & \ctrl[open, style={scale=2}]{1} & \ctrl[style={scale=2}]{1} & \ctrl[open, style={scale=2}]{1} & \ctrl[open, style={scale=2}]{1} & \ctrl[style={scale=2}]{1} & \ctrl[style={scale=2}]{1} & \ctrl[open, style={scale=2}]{1} & \ctrl[open, style={scale=2}]{1} & \ctrl[style={scale=2}]{1} & \ctrl[open, style={scale=2}]{1} & \ctrl[open, style={scale=2}]{1} & \ctrl[style={scale=2}]{1} & \ctrl[style={scale=2}]{1} & \ctrl[open, style={scale=2}]{1} & \ctrl[open, style={scale=2}]{1} & \ctrl[style={scale=2}]{1} & \ctrl[open, style={scale=2}]{1} & \ctrl[style={scale=2}]{1} & \ctrl[style={scale=2}]{1} & \ctrl[open, style={scale=2}]{1} & \ctrl[open, style={scale=2}]{1} & \ctrl[style={scale=2}]{1} & \ctrl[style={scale=2}]{1} & \ctrl[open, style={scale=2}]{-1} & \ctrl[open, style={scale=2}]{-1} & \ctrl[style={scale=2}]{-1} & \ctrl[style={scale=2}]{-1} & \ctrl[open, style={scale=2}]{-1} & \ctrl[open, style={scale=2}]{-1} & \ctrl[style={scale=2}]{-1} & \ctrl[style={scale=2}]{-1} & & \meter[style={scale=1.5}]{} \\
& & \ctrl[open, style={scale=2}]{8} & \ctrl[style={scale=2}]{8} & \ctrl[open, style={scale=2}]{8} & \ctrl[style={scale=2}]{8} & \ctrl[open, style={scale=2}]{8} & \ctrl[style={scale=2}]{8} & \ctrl[open, style={scale=2}]{8} & \ctrl[open, style={scale=2}]{8} & \ctrl[style={scale=2}]{8} & \ctrl[open, style={scale=2}]{8} & \ctrl[style={scale=2}]{8} & \ctrl[open, style={scale=2}]{8} & \ctrl[style={scale=2}]{8} & \ctrl[open, style={scale=2}]{8} & \ctrl[open, style={scale=2}]{8} & \ctrl[style={scale=2}]{8} & \ctrl[open, style={scale=2}]{8} & \ctrl[style={scale=2}]{8} & \ctrl[open, style={scale=2}]{8} & \ctrl[style={scale=2}]{8} & \ctrl[open, style={scale=2}]{8} & \ctrl[style={scale=2}]{1} & \ctrl[open, style={scale=2}]{2} & \ctrl[style={scale=2}]{3} & \ctrl[open, style={scale=2}]{4} & \ctrl[style={scale=2}]{5} & \ctrl[open, style={scale=2}]{6} & \ctrl[style={scale=2}]{7} & \ctrl[open, style={scale=2}]{-1} & \ctrl[style={scale=2}]{-1} & \ctrl[open, style={scale=2}]{-1} & \ctrl[style={scale=2}]{-1} & \ctrl[open, style={scale=2}]{-1} & \ctrl[style={scale=2}]{-1} & \ctrl[open, style={scale=2}]{-1} & \ctrl[style={scale=2}]{-1} & & \meter[style={scale=1.5}]{} \\
& & & & & & & & & & & & & & & & & & & & & & & \targ[style={scale=2}]{} & \targ[style={scale=2}]{} & \targ[style={scale=2}]{} & \targ[style={scale=2}]{} & \targ[style={scale=2}]{} & \targ[style={scale=2}]{} & \targ[style={scale=2}]{} & \ctrl[open, style={scale=2}]{-1} & & & & & & & & & & \\
& & \gate[style={scale=1.5}]{H} & & & & & & & \ctrl[style={scale=2}]{0} & & & & & & & \gate[style={scale=1.5}]{H} & & & & & & & & \targ[style={scale=2}]{} & \targ[style={scale=2}]{} & \targ[style={scale=2}]{} & \targ[style={scale=2}]{} & \targ[style={scale=2}]{} & \targ[style={scale=2}]{} & & \ctrl[open, style={scale=2}]{-2} & & & & & & & & & \\
& & \gate[style={scale=1.5}]{H} & \gate[style={scale=1.5}]{H} & & & & & & \ctrl[style={scale=2}]{0} & \ctrl[style={scale=2}]{0} & & & & & & \gate[style={scale=1.5}]{H} & \gate[style={scale=1.5}]{H} & & & & & & & & \targ[style={scale=2}]{} & \targ[style={scale=2}]{} & \targ[style={scale=2}]{} & \targ[style={scale=2}]{} & \targ[style={scale=2}]{} & & & \ctrl[open, style={scale=2}]{-3} & & & & & & & & \\
& & \gate[style={scale=1.5}]{H} & \gate[style={scale=1.5}]{H} & \gate[style={scale=1.5}]{H} & & & & & \ctrl[style={scale=2}]{0} & \ctrl[style={scale=2}]{0} & \ctrl[style={scale=2}]{0} & & & & & \gate[style={scale=1.5}]{H} & \gate[style={scale=1.5}]{H} & \gate[style={scale=1.5}]{H} & & & & & & & & \targ[style={scale=2}]{} & \targ[style={scale=2}]{} & \targ[style={scale=2}]{} & \targ[style={scale=2}]{} & & & & \ctrl[open, style={scale=2}]{-4} & & & & & & & \\
& & \gate[style={scale=1.5}]{H} & \gate[style={scale=1.5}]{H} & \gate[style={scale=1.5}]{H} & \gate[style={scale=1.5}]{H} & & & & \ctrl[style={scale=2}]{0} & \ctrl[style={scale=2}]{0} & \ctrl[style={scale=2}]{0} & \ctrl[style={scale=2}]{0} & & & & \gate[style={scale=1.5}]{H} & \gate[style={scale=1.5}]{H} & \gate[style={scale=1.5}]{H} & \gate[style={scale=1.5}]{H} & & & & & & & & \targ[style={scale=2}]{} & \targ[style={scale=2}]{} & \targ[style={scale=2}]{} & & & & & \ctrl[open, style={scale=2}]{-5} & & & & & & \\
& & \gate[style={scale=1.5}]{H} & \gate[style={scale=1.5}]{H} & \gate[style={scale=1.5}]{H} & \gate[style={scale=1.5}]{H} & \gate[style={scale=1.5}]{H} & & & \ctrl[style={scale=2}]{0} & \ctrl[style={scale=2}]{0} & \ctrl[style={scale=2}]{0} & \ctrl[style={scale=2}]{0} & \ctrl[style={scale=2}]{0} & & & \gate[style={scale=1.5}]{H} & \gate[style={scale=1.5}]{H} & \gate[style={scale=1.5}]{H} & \gate[style={scale=1.5}]{H} & \gate[style={scale=1.5}]{H} & & & & & & & & \targ[style={scale=2}]{} & \targ[style={scale=2}]{} & & & & & & \ctrl[open, style={scale=2}]{-6} & & & & & \\
& & \gate[style={scale=1.5}]{H} & \gate[style={scale=1.5}]{H} & \gate[style={scale=1.5}]{H} & \gate[style={scale=1.5}]{H} & \gate[style={scale=1.5}]{H} & \gate[style={scale=1.5}]{H} & & \ctrl[style={scale=2}]{0} & \ctrl[style={scale=2}]{0} & \ctrl[style={scale=2}]{0} & \ctrl[style={scale=2}]{0} & \ctrl[style={scale=2}]{0} & \ctrl[style={scale=2}]{0} & & \gate[style={scale=1.5}]{H} & \gate[style={scale=1.5}]{H} & \gate[style={scale=1.5}]{H} & \gate[style={scale=1.5}]{H} & \gate[style={scale=1.5}]{H} & \gate[style={scale=1.5}]{H} & & & & & & && \targ[style={scale=2}]{} & & & & & & & \ctrl[open, style={scale=2}]{-7} & & & & \\
& & \gate[style={scale=1.5}]{H} & \gate[style={scale=1.5}]{H} & \gate[style={scale=1.5}]{H} & \gate[style={scale=1.5}]{H} & \gate[style={scale=1.5}]{H} & \gate[style={scale=1.5}]{H} & \gate[style={scale=1.5}]{H} & \ctrl[open, style={scale=2}]{0} & \ctrl[open, style={scale=2}]{0} & \ctrl[open, style={scale=2}]{0} & \ctrl[open, style={scale=2}]{0} & \ctrl[open, style={scale=2}]{0} & \ctrl[open, style={scale=2}]{0} & \ctrl[open, style={scale=2}]{0} & \gate[style={scale=1.5}]{H} & \gate[style={scale=1.5}]{H} & \gate[style={scale=1.5}]{H} & \gate[style={scale=1.5}]{H} & \gate[style={scale=1.5}]{H} & \gate[style={scale=1.5}]{H} & \gate[style={scale=1.5}]{H} & & & & & & & & & & & & & & & \ctrl[open, style={scale=2}]{-8} & & &
\end{quantikz}}
\]
\caption{Block-encoding circuit for $D$.}
\label{block_D}
\end{figure}

%% file: Figures/accumulator.tex
\begin{figure}
\[
\resizebox{0.35\columnwidth}{!}{
\begin{quantikz}[align equals at=6]
& \ctrl[open]{1} & \ctrl[open]{1} & \ctrl[open]{1} & \ctrl{1} & \ctrl{1} & \ctrl{1} & \ctrl{1} & \\
& \ctrl[open]{1} & \ctrl{1} & \ctrl{1} & \ctrl[open]{1} & \ctrl[open]{1} & \ctrl{1} & \ctrl{1} & \\
& \ctrl{7} & \ctrl[open]{7} & \ctrl{7} & \ctrl[open]{7} & \ctrl{7} & \ctrl[open]{7} & \ctrl{7} & \\
& & & & & & & \gate{U} & \\
& & & & & & \gate{U} & \gate{U} & \\
& & & & & \gate{U} & \gate{U} & \gate{U} & \\
& & & & \gate{U} & \gate{U} & \gate{U} & \gate{U} & \\
& & & \gate{U} & \gate{U} & \gate{U} & \gate{U} & \gate{U} & \\
& & \gate{U} & \gate{U} & \gate{U} & \gate{U} & \gate{U} & \gate{U} & \\
& \gate{U} & \gate{U} & \gate{U} & \gate{U} & \gate{U} & \gate{U} & \gate{U} & 
\end{quantikz}}
\hspace{0.4cm}
=
\hspace{0.4cm}
\resizebox{0.455\columnwidth}{!}{
\begin{quantikz}[align equals at=6]
& & & \ctrl[open]{3} & \ctrl[open]{4} & \ctrl[open]{5} & & & \ctrl{9} & \ctrl[open]{5} & \ctrl[open]{4} & \ctrl[open]{3} & & & \\
& \ctrl[open]{2} & \ctrl[open]{6} & & & & & \ctrl{4} & & & & & \ctrl[open]{6} & \ctrl[open]{2} & \\
& & & & & & \ctrl{1} & & & & & & & & \\
& \swap{2} & & \swap{4} & & & \gate{U} & & & & & \swap{4} & & \swap{2} & \\
& & & & \swap{4} & & & \gate{U} & & & \swap{4} & & & & \\
& \targX{} & & & & \swap{4} & & \gate{U} & & \swap{4} & & & & \targX{} & \\
& & & & & & & & \gate{U} & & & & & & \\
& & \swap{2} & \targX{} & & & & & \gate{U} & & & \targX{} & \swap{2} & & \\
& & & & \targX{} & & & & \gate{U} & & \targX{} & & & & \\
& & \targX{} & & & \targX{} & & & \gate{U} & \targX{} & & & \targX{} & &
\end{quantikz}}
\]
\caption{Accumulator circuit with T-depth $4\lceil\log _2(n+1)\rceil -4$ coming from the controlled-SWAP pairs. This can be modified to implement $\mathcal{X}$ and $\mathcal{H}$ with 2 extra T-depth coming from the controlled-Hadamard gates.}
\label{accum_fig}
\end{figure}
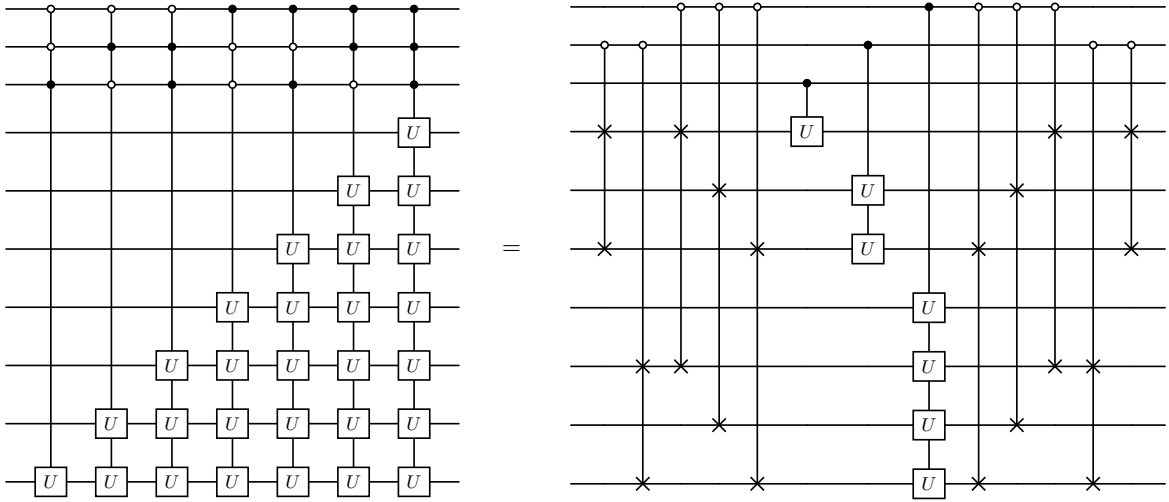

%% file: Figures/incrementor.tex
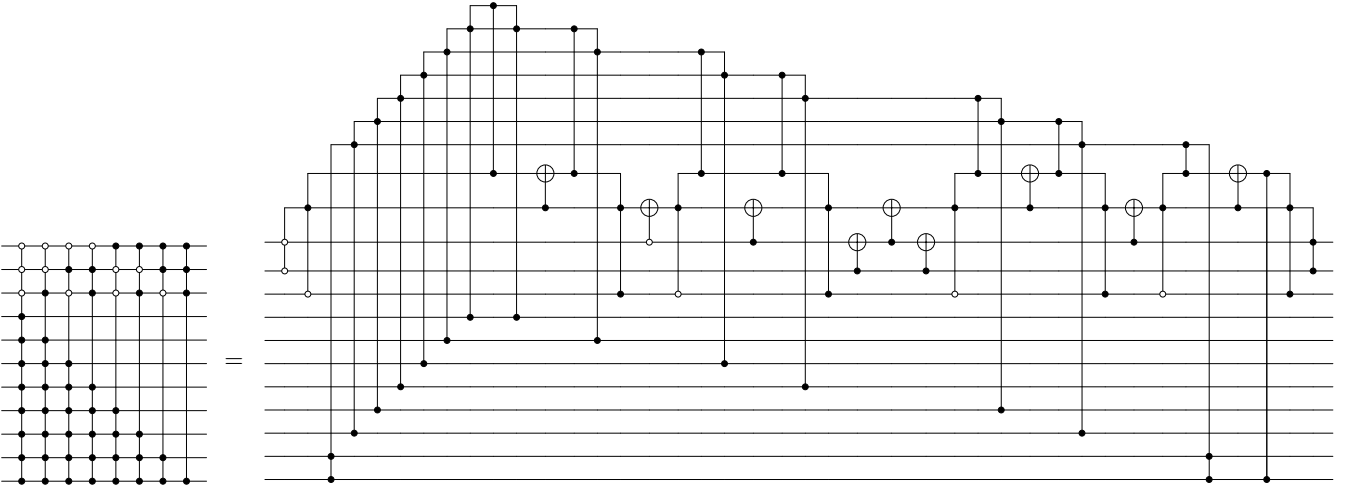
\begin{figure}
\[
\resizebox{0.16\columnwidth}{!}{
\begin{quantikz}[align equals at=6]
& \ctrl[open, style={scale=1.5}]{1} & \ctrl[open, style={scale=1.5}]{1} & \ctrl[open, style={scale=1.5}]{1} & \ctrl[open, style={scale=1.5}]{1} & \ctrl[style={scale=1.5}]{1} & \ctrl[style={scale=1.5}]{1} & \ctrl[style={scale=1.5}]{1} & \ctrl[style={scale=1.5}]{1} & \\
& \ctrl[open, style={scale=1.5}]{1} & \ctrl[open, style={scale=1.5}]{1} & \ctrl[style={scale=1.5}]{1} & \ctrl[style={scale=1.5}]{1} & \ctrl[open, style={scale=1.5}]{1} & \ctrl[open, style={scale=1.5}]{1} & \ctrl[style={scale=1.5}]{1} & \ctrl[style={scale=1.5}]{1} & \\
& \ctrl[open, style={scale=1.5}]{1} & \ctrl[style={scale=1.5}]{2} & \ctrl[open, style={scale=1.5}]{3} & \ctrl[style={scale=1.5}]{4} & \ctrl[open, style={scale=1.5}]{5} & \ctrl[style={scale=1.5}]{6} & \ctrl[open, style={scale=1.5}]{7} & \ctrl[style={scale=1.5}]{8} & \\
& \ctrl[style={scale=1.5}]{1} & & & & & & & & \\
& \ctrl[style={scale=1.5}]{1} & \ctrl[style={scale=1.5}]{1} & & & & & & & \\
& \ctrl[style={scale=1.5}]{1} & \ctrl[style={scale=1.5}]{1} & \ctrl[style={scale=1.5}]{1} & & & & & & \\
& \ctrl[style={scale=1.5}]{1} & \ctrl[style={scale=1.5}]{1} & \ctrl[style={scale=1.5}]{1} & \ctrl[style={scale=1.5}]{1} & & & & & \\
& \ctrl[style={scale=1.5}]{1} & \ctrl[style={scale=1.5}]{1} & \ctrl[style={scale=1.5}]{1} & \ctrl[style={scale=1.5}]{1} & \ctrl[style={scale=1.5}]{1} & & & & \\
& \ctrl[style={scale=1.5}]{1} & \ctrl[style={scale=1.5}]{1} & \ctrl[style={scale=1.5}]{1} & \ctrl[style={scale=1.5}]{1} & \ctrl[style={scale=1.5}]{1} & \ctrl[style={scale=1.5}]{1} & & & \\
& \ctrl[style={scale=1.5}]{1} & \ctrl[style={scale=1.5}]{1} & \ctrl[style={scale=1.5}]{1} & \ctrl[style={scale=1.5}]{1} & \ctrl[style={scale=1.5}]{1} & \ctrl[style={scale=1.5}]{1} & \ctrl[style={scale=1.5}]{1} & & \\
& \ctrl[style={scale=1.5}]{0} & \ctrl[style={scale=1.5}]{0} & \ctrl[style={scale=1.5}]{0} & \ctrl[style={scale=1.5}]{0} & \ctrl[style={scale=1.5}]{0} & \ctrl[style={scale=1.5}]{0} & \ctrl[style={scale=1.5}]{0} & \ctrl[style={scale=1.5}]{0} & 
\end{quantikz}}
\hspace{0.1cm}
=
\hspace{0.1cm}
\resizebox{0.8\columnwidth}{!}{
\begin{quantikz}[align equals at=15]
& \wireoverride{n} & \wireoverride{n} & \wireoverride{n} & \wireoverride{n} & \wireoverride{n} & \wireoverride{n} & \wireoverride{n} & \wireoverride{n} & \wireoverride{n} & \ctrl[style={scale=1.5}]{7} & \\
& \wireoverride{n} & \wireoverride{n} & \wireoverride{n} & \wireoverride{n} & \wireoverride{n} & \wireoverride{n} & \wireoverride{n} & \wireoverride{n} & \ctrl[style={scale=1.5}]{-1} & & \ctrl[style={scale=1.5}]{-1} & & \ctrl[style={scale=1.5}]{6} & \\
& \wireoverride{n} & \wireoverride{n} & \wireoverride{n} & \wireoverride{n} & \wireoverride{n} & \wireoverride{n} & \wireoverride{n} & \ctrl[style={scale=1.5}]{-1} & & & & & & \ctrl[style={scale=1.5}]{-1} & & & & \ctrl[style={scale=1.5}]{5} & \\
& \wireoverride{n} & \wireoverride{n} & \wireoverride{n} & \wireoverride{n} & \wireoverride{n} & \wireoverride{n} & \ctrl[style={scale=1.5}]{-1} & & & & & & & & & & & & \ctrl[style={scale=1.5}]{-1} & & \ctrl[style={scale=1.5}]{4} & \\
& \wireoverride{n} & \wireoverride{n} & \wireoverride{n} & \wireoverride{n} & \wireoverride{n} & \ctrl[style={scale=1.5}]{-1} & & & & & & & & & & & & & & & & \ctrl[style={scale=1.5}]{-1} & & & & & & \ctrl[style={scale=1.5}]{3} & \\
& \wireoverride{n} & \wireoverride{n} & \wireoverride{n} & \wireoverride{n} & \ctrl[style={scale=1.5}]{-1} & & & & & & & & & & & & & & & & & & & & & & & & \ctrl[style={scale=1.5}]{-1} & & \ctrl[style={scale=1.5}]{2} & \\
& \wireoverride{n} & \wireoverride{n} & \wireoverride{n} & \ctrl[style={scale=1.5}]{-1} & & & & & & & & & & & & & & & & & & & & & & & & & & & & \ctrl[style={scale=1.5}]{-1} & & & & \ctrl[style={scale=1.5}]{1} & \\
& \wireoverride{n} & \wireoverride{n} & & & & & & & & \ctrl[style={scale=1.5}]{0} & & \targ[style={scale=1.75}]{} & \ctrl[style={scale=1.5}]{0} & & & \wireoverride{n} & \wireoverride{n} & \ctrl[style={scale=1.5}]{0} & & & \ctrl[style={scale=1.5}]{0} & & & \wireoverride{n} & \wireoverride{n} & \wireoverride{n} & \wireoverride{n} & \ctrl[style={scale=1.5}]{0} & & \targ[style={scale=1.75}]{} & \ctrl[style={scale=1.5}]{0} & & & \wireoverride{n} & \wireoverride{n} & \ctrl[style={scale=1.5}]{0} & & \targ[style={scale=1.75}]{} & \ctrl[style={scale=1.5}]{12} & \\
& \wireoverride{n} & \ctrl[style={scale=1.5}]{-1} & & & & & & & & & & \ctrl[style={scale=1.5}]{-1} & & & \ctrl[style={scale=1.5}]{-1} & \targ[style={scale=1.75}]{} & \ctrl[style={scale=1.5}]{-1} & & & \targ[style={scale=1.75}]{} & & & \ctrl[style={scale=1.5}]{-1} & & \targ[style={scale=1.75}]{} & & \ctrl[style={scale=1.5}]{-1} & & & \ctrl[style={scale=1.5}]{-1} & & & \ctrl[style={scale=1.5}]{-1} & \targ[style={scale=1.75}]{} & \ctrl[style={scale=1.5}]{-1} & & & \ctrl[style={scale=1.5}]{-1} & & \ctrl[style={scale=1.5}]{-1} & \\
& \ctrl[open, style={scale=1.5}]{-1} & & & & & & & & & & & & & & & \ctrl[open, style={scale=1.5}]{-1} & & & & \ctrl[style={scale=1.5}]{-1} & & & & \targ[style={scale=1.75}]{} & \ctrl[style={scale=1.5}]{-1} & \targ[style={scale=1.75}]{} & & & & & & & & \ctrl[style={scale=1.5}]{-1} & & & & & & & \ctrl[style={scale=1.5}]{-1} & \\
& \ctrl[open, style={scale=1.5}]{-1} & & & & & & & & & & & & & & & & & & & & & & & \ctrl[style={scale=1.5}]{-1} & & \ctrl[style={scale=1.5}]{-1} & & & & & & & & & & & & & & & \ctrl[style={scale=1.5}]{-1} & \\
& & \ctrl[open, style={scale=1.5}]{-3} & & & & & & & & & & & & & \ctrl[style={scale=1.5}]{-3} & & \ctrl[open, style={scale=1.5}]{-3} & & & & & & \ctrl[style={scale=1.5}]{-3} & & & & \ctrl[open, style={scale=1.5}]{-3} & & & & & & \ctrl[style={scale=1.5}]{-3} & & \ctrl[open, style={scale=1.5}]{-3} & & & & & \ctrl[style={scale=1.5}]{-3} & & \\
& & & & & & & & & \ctrl[style={scale=1.5}]{-11} & & \ctrl[style={scale=1.5}]{-11} & & & & & & & & & & & & & & & & & & & & & & & & & & & & & & & \\
& & & & & & & & \ctrl[style={scale=1.5}]{-11} & & & & & & \ctrl[style={scale=1.5}]{-11} & & & & & & & & & & & & & & & & & & & & & & & & & & & & \\
& & & & & & & \ctrl[style={scale=1.5}]{-11} & & & & & & & & & & & & \ctrl[style={scale=1.5}]{-11} & & & & & & & & & & & & & & & & & & & & & & & \\
& & & & & & \ctrl[style={scale=1.5}]{-11} & & & & & & & & & & & & & & & & \ctrl[style={scale=1.5}]{-11} & & & & & & & & & & & & & & & & & & & & \\
& & & & & \ctrl[style={scale=1.5}]{-11} & & & & & & & & & & & & & & & & & & & & & & & & \ctrl[style={scale=1.5}]{-11} & & & & & & & & & & & & & \\
& & & & \ctrl[style={scale=1.5}]{-11} & & & & & & & & & & & & & & & & & & & & & & & & & & & & \ctrl[style={scale=1.5}]{-11} & & & & & & & & & & \\
& & & \ctrl[style={scale=1.5}]{-12} & & & & & & & & & & & & & & & & & & & & & & & & & & & & & & & & & & \ctrl[style={scale=1.5}]{-12} & & & & & \\
& & & \ctrl[style={scale=1.5}]{-1} & & & & & & & & & & & & & & & & & & & & & & & & & & & & & & & & & & \ctrl[style={scale=1.5}]{-1} & & \ctrl[style={scale=1.5}]{-11} & & &
\end{quantikz}}
\]
\caption{Incremeter $\mathcal{CZ}$. The first two pairs of Toffoli compute gates can be executed in parallel, leading to a total T-depth of $8n-12$.}
\label{incrementer_fig}
\end{figure}

%% file: Figures/dheat.tex
\begin{figure}
    \centering
    \input{Figures/D_decomposition}
    \caption{Decomposition of the matrix $D$. Since ${\bf 1}_k=2^{1-k}(G_k+I_k)$, the quantum circuit which encodes $D$ must use a \emph{geometric} sum of $G_k+I_k$ terms to induce a uniform sum of ${\bf 1}_k$ terms (i.e. the largest $G_k+I_k$ blocks receive the largest geometric weights).}
    \label{geometric_sum_tri}
\end{figure}
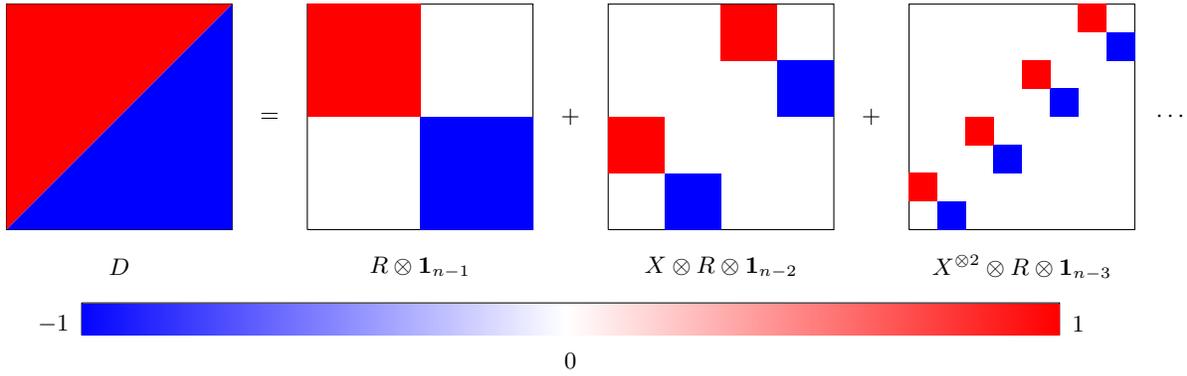

%% file: Figures/D_decomposition.tex
\begin{tikzpicture}[every node/.style={draw}]
    
    \path[shape=coordinate]
    (0,0) coordinate(a1) (3,0) coordinate(a2)
    (3,3) coordinate(a3) (0,3) coordinate(a4);
    \filldraw[fill=white] (a1) -- (a2) -- (a3) -- (a4) -- (a1);
    \filldraw[fill=red, draw=none] (a1) -- (a3) -- (a4) -- (a1);
    \filldraw[fill=blue, draw=none] (a1) -- (a2) -- (a3) -- (a1);
    
    \node[draw=none] (A) at (1.5, -0.5) {\small$D$};
    \node[draw=none] (B) at (3.5, 1.5) {$=$};
    \node[draw=none] (C) at (5.5, -0.5) {\small$R\otimes\textbf{1}_{n-1}$};
    \node[draw=none] (D) at (7.5, 1.5) {$+$};
    \node[draw=none] (E) at (9.5, -0.5) {\small$X\otimes R\otimes\textbf{1}_{n-2}$};
    \node[draw=none] (F) at (11.5, 1.5) {$+$};
    \node[draw=none] (G) at (13.5, -0.5) {\small$X^{\otimes2}\otimes R\otimes\textbf{1}_{n-3}$};
    \node[draw=none] (H) at (15.5, 1.5) {$\cdots$};
    \node[draw=none] (I) at (0.625,-1.25) {\small$-1$};
    \node[draw=none] (J) at (14.25,-1.25) {\small$1$};
    \node[draw=none] (K) at (7.5,-1.75) {\small$0$};

    \path[shape=coordinate]
    (4,0) coordinate(b1) (7,0) coordinate(b2) (7,3) coordinate(b3) (4,3) coordinate(b4);
    \filldraw[fill=white] (b1) -- (b2) -- (b3) -- (b4) -- (b1);
    \path[shape=coordinate]
    (5.5,0) coordinate(bb1) (4,1.5) coordinate(bb2) (7,1.5) coordinate(bb3) (5.5,3) coordinate(bb4) (5.5,1.5) coordinate(bb5);
    \filldraw[fill=red, draw=none] (b4) -- (bb4) -- (bb5) -- (bb2) -- (b4);
    \filldraw[fill=blue, draw=none] (b2) -- (bb1) -- (bb5) -- (bb3) -- (b2);

    \path[coordinate]
    (8,0) coordinate(c1) (11,0) coordinate(c2) (11,3) coordinate(c3) (8,3) coordinate(c4);
    \filldraw[fill=white] (c1) -- (c2) -- (c3) -- (c4) -- (c1);
    \path[shape=coordinate]
    (8.75,0) coordinate(cc1) (9.5,0) coordinate(cc2) (10.25,0) coordinate(cc3) (8.75,0.75) coordinate(cc4) (9.5,0.75) coordinate(cc5) (10.25,0.75) coordinate(cc6) (8,0.75) coordinate(cc7) (8,1.5) coordinate(cc8) (8.75,1.5) coordinate(cc9);
    \path[shape=coordinate]
    (11,1.5) coordinate(ccc1) (11,2.25) coordinate(ccc2) (10.25,2.25) coordinate(ccc3) (10.25,1.5) coordinate(ccc4) (10.25,3) coordinate(ccc5) (9.5,3) coordinate(ccc6) (9.5,2.25) coordinate(ccc7);
    \filldraw[fill=blue, draw=none] (cc1) -- (cc2) -- (cc5) -- (cc4) -- (cc1);
    \filldraw[fill=red, draw=none] (cc4) -- (cc7) -- (cc8) -- (cc9) -- (cc4);
    \filldraw[fill=blue, draw=none] (ccc1) -- (ccc2) -- (ccc3) -- (ccc4) -- (ccc1);
    \filldraw[fill=red, draw=none] (ccc3) -- (ccc5) -- (ccc6) -- (ccc7) -- (ccc3);

    \path[coordinate]
    (12,0) coordinate(d1) (15,0) coordinate(d2) (15,3) coordinate(d3) (12,3) coordinate(d4);
    \filldraw[fill=white] (d1) -- (d2) -- (d3) -- (d4) -- (d1);   
    \path[coordinate]
    (12,0.375) coordinate(dd1) (12,0.75) coordinate(dd2) (12.375,0.75) coordinate(dd3) (12.375,0.375) coordinate (dd4) (12.375,0) coordinate(dd5) (12.75,0) coordinate(dd6) (12.75, 0.375) coordinate(dd7);
    \filldraw[fill=red, draw=none] (dd1) -- (dd2) -- (dd3) -- (dd4) -- (dd1);
    \filldraw[fill=blue, draw=none] (dd4) -- (dd5) -- (dd6) -- (dd7) -- (dd4);
    \path[coordinate]
    (13.125,0.75) coordinate(e1) (13.125,1.125) coordinate(e4) (13.5,1.125) coordinate(e3) (13.5,0.75) coordinate(e2) (13.125,1.5) coordinate(e5) (12.75,1.5) coordinate(e6) (12.75,1.125) coordinate(e7);
    \filldraw[fill=blue, draw=none] (e1) -- (e2) -- (e3) -- (e4) -- (e1);
    \filldraw[fill=red, draw=none] (e4) -- (e5) -- (e6) -- (e7) -- (e4);
    \path[coordinate]
    (13.875,1.5) coordinate(f1) (14.25,1.5) coordinate(f2) (14.25,1.875) coordinate(f3) (13.875,1.875) coordinate(f4) (13.875,2.25) coordinate(f5) (13.5,2.25) coordinate(f6) (13.5,1.875) coordinate(f7);
    \filldraw[fill=blue, draw=none] (f1) -- (f2) -- (f3) -- (f4) -- (f1);
    \filldraw[fill=red, draw=none] (f4) -- (f5) -- (f6) -- (f7) -- (f4);
    \path[coordinate]
    (15,2.625) coordinate(g1) (15,2.25) coordinate(g2) (14.625,2.25) coordinate(g3) (14.625,2.625) coordinate(g4) (14.625,3) coordinate(g5) (14.25,3) coordinate(g6) (14.25,2.625) coordinate(g7);
    \filldraw[fill=blue, draw=none] (g1) -- (g2) -- (g3) -- (g4) -- (g1);
    \filldraw[fill=red, draw=none] (g4) -- (g5) -- (g6) -- (g7) -- (g4);

    \path[coordinate]
    (1,-0.975) coordinate(h1) (1,-1.4) coordinate(h2) (14,-0.975) coordinate(h3) (14,-1.4) coordinate(h4) (7.5,-0.975) coordinate(h5) (7.5,-1.4) coordinate(h6);
    \filldraw[fill=white] (h1) -- (h2) -- (h4) -- (h3) -- (h1);
    \begin{scope}[shift={(1,-1.4)}]
    \begin{axis}[
        hide axis,
        colormap={blueWhiteRed}{
            color=(red)
            color=(white)
            color=(blue)
        },
        point meta min=-100, point meta max=100,
        width=14.58cm, height=2cm,
        view={0}{90}
    ]
        \addplot3[
            surf,
            shader=interp,
            samples=50,
            domain=-1:1
        ] {-100*x};
    \end{axis}
    \end{scope}
\end{tikzpicture}

%% file: Figures/firstcirculant.tex
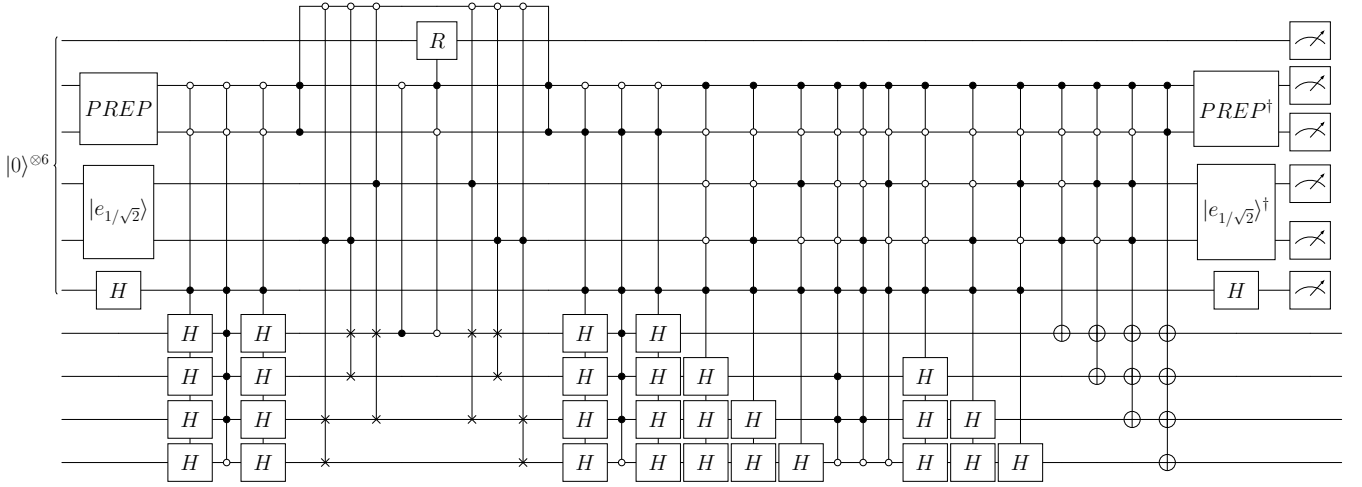
\begin{figure}
\[
\resizebox{\columnwidth}{!}{
\begin{quantikz}[font=\LARGE]
& \wireoverride{n} & \wireoverride{n} & \wireoverride{n} & \wireoverride{n} & \wireoverride{n} & \ctrl[open, style={scale=1.5}]{5} & \ctrl[open, style={scale=1.5}]{5} & \ctrl[open, style={scale=1.5}]{4} & & & \ctrl[open, style={scale=1.5}]{4} & \ctrl[open, style={scale=1.5}]{5} & \ctrl[open, style={scale=1.5}]{5} & \\
\lstick[6]{$|0\rangle^{\otimes6}$} & & & & & & & & & & \gate[style={scale=1.5}]{R} & & & & & & & & & & & & & & & & & & & & & & \meter[style={scale=1.5}]{} \\
& \gate[2]{PREP} & \ctrl[open, style={scale=1.5}]{1} & \ctrl[open, style={scale=1.5}]{1} & \ctrl[open, style={scale=1.5}]{1} & \ctrl[style={scale=1.5}]{-2} & & & & \ctrl[open, style={scale=1.5}]{5} & \ctrl[style={scale=1.5}]{-1} & & & & \ctrl[style={scale=1.5}]{-2} & \ctrl[open, style={scale=1.5}]{1} & \ctrl[open, style={scale=1.5}]{1} & \ctrl[open, style={scale=1.5}]{1} & \ctrl[style={scale=1.5}]{1} & \ctrl[style={scale=1.5}]{1} & \ctrl[style={scale=1.5}]{1} & \ctrl[style={scale=1.5}]{1} & \ctrl[style={scale=1.5}]{1} & \ctrl[style={scale=1.5}]{1} & \ctrl[style={scale=1.5}]{1} & \ctrl[style={scale=1.5}]{1} & \ctrl[style={scale=1.5}]{1} & \ctrl[style={scale=1.5}]{1} & \ctrl[style={scale=1.5}]{1} & \ctrl[style={scale=1.5}]{1} & \ctrl[style={scale=1.5}]{1} & \gate[2]{PREP^\dag} & \meter[style={scale=1.5}]{} \\
& & \ctrl[open, style={scale=1.5}]{3} & \ctrl[open, style={scale=1.5}]{3} & \ctrl[open, style={scale=1.5}]{3} & \ctrl[style={scale=1.5}]{-1} & & & & & \ctrl[open, style={scale=1.5}]{-1} & & & & \ctrl[style={scale=1.5}]{-1} & \ctrl[style={scale=1.5}]{3} & \ctrl[style={scale=1.5}]{3} & \ctrl[style={scale=1.5}]{3} & \ctrl[open, style={scale=1.5}]{1}  & \ctrl[open, style={scale=1.5}]{1}  & \ctrl[open, style={scale=1.5}]{1}  & \ctrl[open, style={scale=1.5}]{1}  & \ctrl[open, style={scale=1.5}]{1}  & \ctrl[open, style={scale=1.5}]{1}  & \ctrl[open, style={scale=1.5}]{1}  & \ctrl[open, style={scale=1.5}]{1}  & \ctrl[open, style={scale=1.5}]{1}  & \ctrl[open, style={scale=1.5}]{1}  & \ctrl[open, style={scale=1.5}]{1}  & \ctrl[open, style={scale=1.5}]{1}  & \ctrl[style={scale=1.5}]{7} & & \meter[style={scale=1.5}]{} \\
& \gate[2]{|e_{1/\sqrt{2}}\rangle} & & & & & & & \ctrl[style={scale=1.5}]{3} & & & \ctrl[style={scale=1.5}]{3} & & & & & & & \ctrl[open, style={scale=1.5}]{1} & \ctrl[open, style={scale=1.5}]{1} & \ctrl[style={scale=1.5}]{1} & \ctrl[open, style={scale=1.5}]{1} & \ctrl[open, style={scale=1.5}]{1} & \ctrl[style={scale=1.5}]{1} & \ctrl[open, style={scale=1.5}]{1} & \ctrl[open, style={scale=1.5}]{1} & \ctrl[style={scale=1.5}]{1} & \ctrl[open, style={scale=1.5}]{1} & \ctrl[style={scale=1.5}]{1} & \ctrl[style={scale=1.5}]{1} & & \gate[2]{|e_{1/\sqrt{2}}\rangle^\dag} & \meter[style={scale=1.5}]{} \\
& & & & & & \ctrl[style={scale=1.5}]{4} & \ctrl[style={scale=1.5}]{2} & & & & & \ctrl[style={scale=1.5}]{2} & \ctrl[style={scale=1.5}]{4} & & & & & \ctrl[open, style={scale=1.5}]{1} & \ctrl[style={scale=1.5}]{1} & \ctrl[open, style={scale=1.5}]{1} & \ctrl[open, style={scale=1.5}]{1} & \ctrl[style={scale=1.5}]{1} & \ctrl[open, style={scale=1.5}]{1} & \ctrl[open, style={scale=1.5}]{1} & \ctrl[style={scale=1.5}]{1} & \ctrl[open, style={scale=1.5}]{1} & \ctrl[style={scale=1.5}]{2} & \ctrl[open, style={scale=1.5}]{3} & \ctrl[style={scale=1.5}]{4} & & & \meter[style={scale=1.5}]{} \\
& \gate[style={scale=1.5}]{H} & \ctrl[style={scale=1.5}]{4} & \ctrl[style={scale=1.5}]{1} & \ctrl[style={scale=1.5}]{4} & & & & & & & & & & & \ctrl[style={scale=1.5}]{4} & \ctrl[style={scale=1.5}]{1} & \ctrl[style={scale=1.5}]{4} & \ctrl[style={scale=1.5}]{4} & \ctrl[style={scale=1.5}]{4} & \ctrl[style={scale=1.5}]{4} & \ctrl[style={scale=1.5}]{2} & \ctrl[style={scale=1.5}]{3} & \ctrl[style={scale=1.5}]{4} & \ctrl[style={scale=1.5}]{4} & \ctrl[style={scale=1.5}]{4} & \ctrl[style={scale=1.5}]{4} & & & & & \gate[style={scale=1.5}]{H} & \meter[style={scale=1.5}]{} \\
& & \gate[style={scale=1.5}]{H} & \ctrl[style={scale=1.5}]{1} & \gate[style={scale=1.5}]{H} & & & \swap{1} & \swap{2} & \ctrl[style={scale=1.5}]{0} & \ctrl[open, style={scale=1.5}]{-4} & \swap{2} & \swap{1} & & & \gate[style={scale=1.5}]{H} & \ctrl[style={scale=1.5}]{1} & \gate[style={scale=1.5}]{H} & & & & & & & & & & \targ[style={scale=1.5}]{} & \targ[style={scale=1.5}]{} & \targ[style={scale=1.5}]{} & \targ[style={scale=1.5}]{} & & & \\
& & \gate[style={scale=1.5}]{H} & \ctrl[style={scale=1.5}]{1} & \gate[style={scale=1.5}]{H} & & & \targX{} & & & & & \targX{} & & & \gate[style={scale=1.5}]{H} & \ctrl[style={scale=1.5}]{1} & \gate[style={scale=1.5}]{H} & \gate[style={scale=1.5}]{H} & & & \ctrl[style={scale=1.5}]{1} & & & \gate[style={scale=1.5}]{H} & & & & \targ[style={scale=1.5}]{} & \targ[style={scale=1.5}]{} & \targ[style={scale=1.5}]{} & & & \\
& & \gate[style={scale=1.5}]{H} & \ctrl[style={scale=1.5}]{1} & \gate[style={scale=1.5}]{H} & & \swap{1} & & \targX{} & & & \targX{} & & \swap{1} & & \gate[style={scale=1.5}]{H} & \ctrl[style={scale=1.5}]{1} & \gate[style={scale=1.5}]{H} & \gate[style={scale=1.5}]{H} & \gate[style={scale=1.5}]{H} & & \ctrl[style={scale=1.5}]{1} & \ctrl[style={scale=1.5}]{1} & & \gate[style={scale=1.5}]{H} & \gate[style={scale=1.5}]{H} & & & & \targ[style={scale=1.5}]{} & \targ[style={scale=1.5}]{} & & & \\
& & \gate[style={scale=1.5}]{H} & \ctrl[open, style={scale=1.5}]{0} & \gate[style={scale=1.5}]{H} & & \targX{} & & & & & & & \targX{} & & \gate[style={scale=1.5}]{H} & \ctrl[open, style={scale=1.5}]{0} & \gate[style={scale=1.5}]{H} & \gate[style={scale=1.5}]{H} & \gate[style={scale=1.5}]{H} & \gate[style={scale=1.5}]{H} & \ctrl[open, style={scale=1.5}]{0} & \ctrl[open, style={scale=1.5}]{0} & \ctrl[open, style={scale=1.5}]{0} & \gate[style={scale=1.5}]{H} & \gate[style={scale=1.5}]{H} & \gate[style={scale=1.5}]{H} & & & & \targ[style={scale=1.5}]{} & & &
\end{quantikz}}
\]
\caption{Block-encoding circuit for the four-qubit circulant matrix $C$. From the bottom up, we have four qubits in the data register, one ancilla which acts as the LCU ancilla for all ${\bf 1}_k$ encodings, a two qubit geometric LCU register, a two qubit $PREP$ register to add all of the terms in equation (\ref{circulanteqn}) (the basis states of this register correspond to these terms in order), a single ancilla to induce the block-encoding of $R$, followed by clean ancilla. Next, we comment on the SELECT operator. First, notice we have ${\bf 1}_4$ controlled on both $|00\rangle$ and $|01\rangle$ of the $PREP$ register. Between these is a $CZ$ gate (open control on the top qubit of the $PREP$ register to operate a $Z$ gate on the data register when $PREP$ is in either $|00\rangle$ and $|01\rangle$) surrounded by controlled-SWAP circuitry; controls on the geometric LCU register; this is to encode $\text{diag}(|L\rangle)$ (compare to Figure \ref{linearfig}). These controlled-SWAP gates also have an additional open control on a clean ancilla targeted by a Toffoli pair controlled on $|11\rangle$ of the PREP register; this means the controlled-SWAP gates are implemented when the PREP register is \emph{not} in basis state $|11\rangle$. When the PREP register is in $|10\rangle$, the controlled-SWAP gates manipulate the control on the block-encoding of $R$, following the formula for $D$ in equation (\ref{Deqn}). The remainder of the SELECT oracle includes the ${\bf 1}_k$ pyramid controlled on $|10\rangle$ in the $PREP$ register as well as an accumulator of $X$ gates (completing the contributions to $D$ from equation (\ref{Deqn}) and a stack of $X$ gates controlled on $|11\rangle$ of the $PREP$ register.}
\label{firstcirculant}
\end{figure}

%% file: Figures/prep.tex
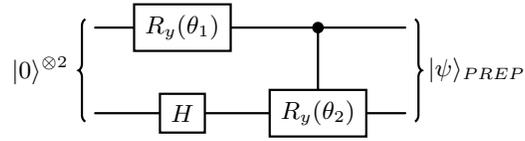
\begin{figure}
\[
\resizebox{0.4\columnwidth}{!}{
\begin{quantikz}
\lstick[2]{$|0\rangle^{\otimes2}$} & \gate{R_y(\theta_1)} & \ctrl{1} & \rstick[2]{$|\psi\rangle_{PREP}$} \\
& \gate{H} & \gate{R_y(\theta_2)} &
\end{quantikz}}
\]
\caption{Encoding for the PREP oracle state $|\psi\rangle_{PREP}\propto |00\rangle +|01\rangle +a|10\rangle +b|11\rangle$. We can compute these angles to satisfy $\tan\left(\frac{\pi}{4}+\theta _2\right) =\frac{b}{a}$ and $\tan\left(\theta _1\right)\cos\left(\frac{\pi}{4}+\theta _2\right) =a\sqrt{2}$.}
\label{prep_oracle}
\end{figure}

%% file: main.bib
@article{clader22,
   title={Quantum Resources Required to Block-Encode a Matrix of Classical Data},
   volume={3},
   ISSN={2689-1808},
   url={http://dx.doi.org/10.1109/TQE.2022.3231194},
   DOI={10.1109/tqe.2022.3231194},
   journal={IEEE Transactions on Quantum Engineering},
   publisher={Institute of Electrical and Electronics Engineers (IEEE)},
   author={Clader, B. David and Dalzell, Alexander M. and Stamatopoulos, Nikitas and Salton, Grant and Berta, Mario and Zeng, William J.},
   year={2022},
   pages={1–23} }

@article{regev25,
author = {Regev, Oded}, title = {An Efficient Quantum Factoring Algorithm}, year = {2025}, issue_date = {February 2025}, publisher = {Association for Computing Machinery}, address = {New York, NY, USA}, volume = {72}, number = {1}, issn = {0004-5411}, url = {https://doi.org/10.1145/3708471}, doi = {10.1145/3708471}, journal = {J. ACM}, month = jan, articleno = {10}, numpages = {13} }

@article{arrazola19,
   title={Machine learning method for state preparation and gate synthesis on photonic quantum computers},
   volume={4},
   ISSN={2058-9565},
   url={http://dx.doi.org/10.1088/2058-9565/aaf59e},
   DOI={10.1088/2058-9565/aaf59e},
   number={2},
   journal={Quantum Science and Technology},
   publisher={IOP Publishing},
   author={Arrazola, Juan Miguel and Bromley, Thomas R and Izaac, Josh and Myers, Casey R and Brádler, Kamil and Killoran, Nathan},
   year={2019},
   month=jan, pages={024004} }

@article{harrow09,
   title={Quantum Algorithm for Linear Systems of Equations},
   volume={103},
   ISSN={1079-7114},
   url={http://dx.doi.org/10.1103/PhysRevLett.103.150502},
   DOI={10.1103/physrevlett.103.150502},
   number={15},
   journal={Physical Review Letters},
   publisher={American Physical Society (APS)},
   author={Harrow, Aram W. and Hassidim, Avinatan and Lloyd, Seth},
   year={2009},
   month={oct} }

@article{nam20,
   title={Approximate quantum Fourier transform with $\uppercase{O}(n \mbox{log}(n)) \uppercase{T}$ gates},
   volume={6},
   ISSN={2056-6387},
   url={http://dx.doi.org/10.1038/s41534-020-0257-5},
   DOI={10.1038/s41534-020-0257-5},
   number={1},
   journal={npj Quantum Information},
   publisher={Springer Science and Business Media LLC},
   author={Nam, Yunseong and Su, Yuan and Maslov, Dmitri},
   year={2020},
   month=mar }

@article{clader13,
   title={Preconditioned Quantum Linear System Algorithm},
   volume={110},
   ISSN={1079-7114},
   url={http://dx.doi.org/10.1103/PhysRevLett.110.250504},
   DOI={10.1103/physrevlett.110.250504},
   number={25},
   journal={Physical Review Letters},
   publisher={American Physical Society (APS)},
   author={Clader, B. D. and Jacobs, B. C. and Sprouse, C. R.},
   year={2013},
   month=jun }

@article{krovi23,
   title={Improved quantum algorithms for linear and nonlinear differential equations},
   volume={7},
   ISSN={2521-327X},
   url={http://dx.doi.org/10.22331/q-2023-02-02-913},
   DOI={10.22331/q-2023-02-02-913},
   journal={Quantum},
   publisher={Verein zur Forderung des Open Access Publizierens in den Quantenwissenschaften},
   author={Krovi, Hari},
   year={2023},
   month=feb, pages={913} }

@article{
liu21,
author = {Jin-Peng Liu  and Herman Øie Kolden  and Hari K. Krovi  and Nuno F. Loureiro  and Konstantina Trivisa  and Andrew M. Childs },
title = {Efficient quantum algorithm for dissipative nonlinear differential equations},
journal = {Proceedings of the National Academy of Sciences},
volume = {118},
number = {35},
pages = {e2026805118},
year = {2021},
doi = {10.1073/pnas.2026805118}}

@article{carleman1932application,
  title={Application de la th{\'e}orie des {\'e}quations int{\'e}grales lin{\'e}aires aux syst{\`e}mes d'{\'e}quations diff{\'e}rentielles non lin{\'e}aires},
  author={Torsten Carleman},
  journal={Acta Mathematica},
  year={1932},
  volume={59},
  pages={63-87},
  url={https://api.semanticscholar.org/CorpusID:120263424}
}

@article{chakrabarti21,
   title={A Threshold for Quantum Advantage in Derivative Pricing},
   volume={5},
   ISSN={2521-327X},
   url={http://dx.doi.org/10.22331/q-2021-06-01-463},
   DOI={10.22331/q-2021-06-01-463},
   journal={Quantum},
   publisher={Verein zur Forderung des Open Access Publizierens in den Quantenwissenschaften},
   author={Chakrabarti, Shouvanik and Krishnakumar, Rajiv and Mazzola, Guglielmo and Stamatopoulos, Nikitas and Woerner, Stefan and Zeng, William J.},
   year={2021},
   month=jun, pages={463} }

@unpublished{lemieux24,
      title={Quantum sampling algorithms for quantum state preparation and matrix block-encoding}, 
      author={Jessica Lemieux and Matteo Lostaglio and Sam Pallister and William Pol and Karthik Seetharam and Sukin Sim and Burak Şahinoğlu},
      year={2024},
      eprint={2405.11436},
      archivePrefix={arXiv},
      primaryClass={quant-ph},
      url={https://arxiv.org/abs/2405.11436}, 
      note={}
}

@unpublished{kuklinskiODE,
    author={Kuklinski, Parker and Rempfer, Benjamin},
    title={Block-encoding circuits for differential equations},
    note={unpublished manuscript}}

@misc{chen24,
      author = {Yilei Chen},
      title = {Quantum Algorithms for Lattice Problems},
      howpublished = {Cryptology {ePrint} Archive, Paper 2024/555},
      year = {2024},
      url = {https://eprint.iacr.org/2024/555}
}

@article{low17,
   title={Optimal Hamiltonian Simulation by Quantum Signal Processing},
   volume={118},
   ISSN={1079-7114},
   url={http://dx.doi.org/10.1103/PhysRevLett.118.010501},
   DOI={10.1103/physrevlett.118.010501},
   number={1},
   journal={Physical Review Letters},
   publisher={American Physical Society (APS)},
   author={Low, Guang Hao and Chuang, Isaac L.},
   year={2017},
   month=jan }

@article{ward09,
   title={Preparation of many-body states for quantum simulation},
   volume={130},
   ISSN={1089-7690},
   url={http://dx.doi.org/10.1063/1.3115177},
   DOI={10.1063/1.3115177},
   number={19},
   journal={The Journal of Chemical Physics},
   publisher={AIP Publishing},
   author={Ward, Nicholas J. and Kassal, Ivan and Aspuru-Guzik, Alán},
   year={2009},
   month=may }

@article{sierra_sosa20,
  author={Sierra-Sosa, Daniel and Telahun, Michael and Elmaghraby, Adel},
  journal={IEEE Access}, 
  title={TensorFlow Quantum: Impacts of Quantum State Preparation on Quantum Machine Learning Performance}, 
  year={2020},
  volume={8},
  number={},
  pages={215246-215255},
  doi={10.1109/ACCESS.2020.3040798}}

@article{kim18,
  title={Efficient decomposition methods for controlled-$\uppercase{R}_n$ using a single ancillary qubit},
  author={Kim, Taewan and Choi, Byung-Soo},
  journal={Scientific reports},
  volume={8},
  number={1},
  year={2018},
  pages={45-54},
  month={April},
  day={3},
  doi={10.1038/s41598-018-23764-x}
}

@article{fomichev24,
   title={Initial State Preparation for Quantum Chemistry on Quantum Computers},
   volume={5},
   ISSN={2691-3399},
   url={http://dx.doi.org/10.1103/PRXQuantum.5.040339},
   DOI={10.1103/prxquantum.5.040339},
   number={4},
   journal={PRX Quantum},
   publisher={American Physical Society (APS)},
   author={Fomichev, Stepan and Hejazi, Kasra and Zini, Modjtaba Shokrian and Kiser, Matthew and Fraxanet, Joana and Casares, Pablo Antonio Moreno and Delgado, Alain and Huh, Joonsuk and Voigt, Arne-Christian and Mueller, Jonathan E. and Arrazola, Juan Miguel},
   year={2024},
   month=dec }

@unpublished{draper04,
      title={A logarithmic-depth quantum carry-lookahead adder}, 
      author={Thomas G. Draper and Samuel A. Kutin and Eric M. Rains and Krysta M. Svore},
      year={2004},
      eprint={quant-ph/0406142},
      archivePrefix={arXiv},
      primaryClass={quant-ph},
      url={https://arxiv.org/abs/quant-ph/0406142}, 
      note={}
}

@unpublished{sunderhauf23,
   title={Block-encoding structured matrices for data input in quantum computing},
   volume={8},
   ISSN={2521-327X},
   url={http://dx.doi.org/10.22331/q-2024-01-11-1226},
   DOI={10.22331/q-2024-01-11-1226},
   journal={Quantum},
   publisher={Verein zur Forderung des Open Access Publizierens in den Quantenwissenschaften},
   author={S\"underhauf, Christoph and Campbell, Earl and Camps, Joan},
   year={2024},
   month=jan, pages={1226}, note={} }

@unpublished{grover02,
      title={Creating superpositions that correspond to efficiently integrable probability distributions}, 
      author={Lov Grover and Terry Rudolph},
      year={2002},
      eprint={quant-ph/0208112},
      archivePrefix={arXiv},
      primaryClass={quant-ph},
      url={https://arxiv.org/abs/quant-ph/0208112}, 
      note={}
}

@unpublished{penuel24,
      title={Detailed assessment of calculating drag force with quantum computers: Explicit time-evolution precludes exponential advantage for nonlinear differential equations}, 
      author={John Penuel and Amara Katabarwa and Peter D. Johnson and Parker Kuklinski and Benjamin Rempfer and Collin Farquhar and Yudong Cao and Michael C. Garrett},
      year={2025},
      eprint={2406.06323},
      archivePrefix={arXiv},
      primaryClass={quant-ph},
      url={https://arxiv.org/abs/2406.06323}, 
      note={}
}

@unpublished{camps22,
      title={Explicit Quantum Circuits for Block Encodings of Certain Sparse Matrices}, 
      author={Daan Camps and Lin Lin and Roel Van Beeumen and Chao Yang},
      year={2023},
      eprint={2203.10236},
      archivePrefix={arXiv},
      primaryClass={quant-ph},
      url={https://arxiv.org/abs/2203.10236}, 
      note={}
}

@article{obrien25,
   title={Quantum state preparation via piecewise \uppercase{QSVT}},
   volume={9},
   ISSN={2521-327X},
   url={http://dx.doi.org/10.22331/q-2025-07-03-1786},
   DOI={10.22331/q-2025-07-03-1786},
   journal={Quantum},
   publisher={Verein zur Forderung des Open Access Publizierens in den Quantenwissenschaften},
   author={O'Brien, Oliver and S\"underhauf, Christoph},
   year={2025},
   month=jul, pages={1786} }

@inproceedings{cleve00,
      title={Fast parallel circuits for the quantum Fourier transform}, 
      author={Richard Cleve and John Watrous},
      year={2000},
      eprint={quant-ph/0006004},
      archivePrefix={arXiv},
      primaryClass={quant-ph},
      url={https://arxiv.org/abs/quant-ph/0006004}, 
      booktitle={}
}

@article{munoz18,
      title={T-count and Qubit Optimized Quantum Circuit Design of the Non-Restoring Square Root Algorithm}, 
      author={Edgard Muñoz-Coreas and Himanshu Thapliyal},
      year={2018},
      eprint={1712.08254},
      archivePrefix={arXiv},
      primaryClass={quant-ph},
      url={https://arxiv.org/abs/1712.08254}, 
      journal={}
}

@article{rubin24,
   title={Quantum computation of stopping power for inertial fusion target design},
   volume={121},
   ISSN={1091-6490},
   url={http://dx.doi.org/10.1073/pnas.2317772121},
   DOI={10.1073/pnas.2317772121},
   number={23},
   journal={Proceedings of the National Academy of Sciences},
   publisher={Proceedings of the National Academy of Sciences},
   author={Rubin, Nicholas C. and Berry, Dominic W. and Kononov, Alina and Malone, Fionn D. and Khattar, Tanuj and White, Alec and Lee, Joonho and Neven, Hartmut and Babbush, Ryan and Baczewski, Andrew D.},
   year={2024},
   month=may }

@article{jones13,
   title={Low-overhead constructions for the fault-tolerant Toffoli gate},
   volume={87},
   ISSN={1094-1622},
   url={http://dx.doi.org/10.1103/PhysRevA.87.022328},
   DOI={10.1103/physreva.87.022328},
   number={2},
   journal={Physical Review A},
   publisher={American Physical Society (APS)},
   author={Jones, Cody},
   year={2013},
   month=feb }

@article{bocharov15,
   title={Efficient Synthesis of Universal Repeat-Until-Success Quantum Circuits},
   volume={114},
   ISSN={1079-7114},
   url={http://dx.doi.org/10.1103/PhysRevLett.114.080502},
   DOI={10.1103/physrevlett.114.080502},
   number={8},
   journal={Physical Review Letters},
   publisher={American Physical Society (APS)},
   author={Bocharov, Alex and Roetteler, Martin and Svore, Krysta M.},
   year={2015},
   month=feb }

@inproceedings{niemann19,
  author={Niemann, Philipp and Gupta, Anshu and Drechsler, Rolf},
  booktitle={2019 IEEE 49th International Symposium on Multiple-Valued Logic (ISMVL)}, 
  title={T-depth Optimization for Fault-Tolerant Quantum Circuits}, 
  year={2019},
  volume={},
  number={},
  pages={108-113},
  doi={10.1109/ISMVL.2019.00027}}

@article{berry17,
   title={Quantum Algorithm for Linear Differential Equations with Exponentially Improved Dependence on Precision},
   volume={356},
   ISSN={1432-0916},
   url={http://dx.doi.org/10.1007/s00220-017-3002-y},
   DOI={10.1007/s00220-017-3002-y},
   number={3},
   journal={Communications in Mathematical Physics},
   publisher={Springer Science and Business Media LLC},
   author={Berry, Dominic W. and Childs, Andrew M. and Ostrander, Aaron and Wang, Guoming},
   year={2017},
   month=oct, pages={1057–1081} }

@unpublished{gidney19,
     title={Windowed quantum arithmetic}, 
      author={Craig Gidney},
      year={2019},
      eprint={1905.07682},
      archivePrefix={arXiv},
      primaryClass={quant-ph},
      url={https://arxiv.org/abs/1905.07682}, 
      note={}
}

@unpublished{kahanamoku25,
      title={A log-depth in-place quantum Fourier transform that rarely needs ancillas}, 
      author={Gregory D. Kahanamoku-Meyer and John Blue and Thiago Bergamaschi and Craig Gidney and Isaac L. Chuang},
      year={2025},
      eprint={2505.00701},
      archivePrefix={arXiv},
      primaryClass={quant-ph},
      url={https://arxiv.org/abs/2505.00701}, 
      note={}
}

@unpublished{baumer25,
      title={Approximate Quantum Fourier Transform in Logarithmic Depth on a Line}, 
      author={Elisa Bäumer and David Sutter and Stefan Woerner},
      year={2025},
      eprint={2504.20832},
      archivePrefix={arXiv},
      primaryClass={quant-ph},
      url={https://arxiv.org/abs/2504.20832}, 
      note={}
}

@unpublished{kuklinski25,
      title={A simpler Gaussian state-preparation}, 
      author={Parker Kuklinski and Benjamin Rempfer and Kevin Obenland and Justin Elenewski},
      year={2025},
      eprint={2508.03987},
      archivePrefix={arXiv},
      primaryClass={quant-ph},
      url={https://arxiv.org/abs/2508.03987}, 
      note={}
}

@unpublished{kuklinski25_2,
      title={Efficient block-encodings require structure}, 
      author={Parker Kuklinski and Benjamin Rempfer and Justin Elenewski and Kevin Obenland},
      year={2025},
      eprint={2509.19667},
      archivePrefix={arXiv},
      primaryClass={quant-ph},
      url={https://arxiv.org/abs/2509.19667}, 
      note={}
}

@misc{gidney16,
    author = "Gidney, Craig",
    title = "Turning Gradients into Additions into \uppercase{QFT}s",
    url = "https://algassert.com/post/1620",
    year = "2016"
}

@article{gidneyADD,
   title={Halving the cost of quantum addition},
   volume={2},
   ISSN={2521-327X},
   url={http://dx.doi.org/10.22331/q-2018-06-18-74},
   DOI={10.22331/q-2018-06-18-74},
   journal={Quantum},
   publisher={Verein zur Forderung des Open Access Publizierens in den Quantenwissenschaften},
   author={Gidney, Craig},
   year={2018},
   month=jun, pages={74} }

@unpublished{deliyannis21,
     title={Practical considerations for the preparation of multivariate Gaussian states on quantum computers}, 
      author={Christian W. Bauer and Plato Deliyannis and Marat Freytsis and Benjamin Nachman},
      year={2021},
      eprint={2109.10918},
      archivePrefix={arXiv},
      primaryClass={quant-ph},
      url={https://arxiv.org/abs/2109.10918}, 
      note={}
}

@book{katznelson76,
  title={An introduction to harmonic analysis},
  author={Katznelson, Yitzhak},
  year={1976},
  month={June},
  day={5},
  ISBN={9781139165372},
  publisher={Cambridge University Press},
  doi={https://doi.org/10.1017/CBO9781139165372}
}

@article{sanders20,
   title={Compilation of Fault-Tolerant Quantum Heuristics for Combinatorial Optimization},
   volume={1},
   ISSN={2691-3399},
   url={http://dx.doi.org/10.1103/PRXQuantum.1.020312},
   DOI={10.1103/prxquantum.1.020312},
   number={2},
   journal={PRX Quantum},
   publisher={American Physical Society (APS)},
   author={Sanders, Yuval R. and Berry, Dominic W. and Costa, Pedro C.S. and Tessler, Louis W. and Wiebe, Nathan and Gidney, Craig and Neven, Hartmut and Babbush, Ryan},
   year={2020},
   month=nov }

@article{jennings24,
   title={The cost of solving linear differential equations on a quantum computer: fast-forwarding to explicit resource counts},
   volume={8},
   ISSN={2521-327X},
   url={http://dx.doi.org/10.22331/q-2024-12-10-1553},
   DOI={10.22331/q-2024-12-10-1553},
   journal={Quantum},
   publisher={Verein zur Forderung des Open Access Publizierens in den Quantenwissenschaften},
   author={Jennings, David and Lostaglio, Matteo and Lowrie, Robert B. and Pallister, Sam and Sornborger, Andrew T.},
   year={2024},
   month=dec, pages={1553} }

@article{babbush18,
   title={Encoding Electronic Spectra in Quantum Circuits with Linear $\uppercase{T}$ Complexity},
   volume={8},
   ISSN={2160-3308},
   url={http://dx.doi.org/10.1103/PhysRevX.8.041015},
   DOI={10.1103/physrevx.8.041015},
   number={4},
   journal={Physical Review X},
   publisher={American Physical Society (APS)},
   author={Babbush, Ryan and Gidney, Craig and Berry, Dominic W. and Wiebe, Nathan and McClean, Jarrod and Paler, Alexandru and Fowler, Austin and Neven, Hartmut},
   year={2018},
   month=oct }
